\numberwithin{equation}{section}              
\newtheorem{theorem}{Theorem}[section]
\newtheorem{lemma}{Lemma}[section]
\newtheorem{proposition}{Proposition}[section]
\newtheorem*{proposition*}{Proposition}
\newtheorem{corollary}{Corollary}[section]
\newtheorem*{corollary*}{Corollary}
\newtheorem{definition}{Definition}[section]
\newtheorem*{definitions*}{Definitions}
\newtheorem*{conjecture*}{\bf Conjecture}
\newtheorem*{example*}{\bf Example}
\theoremstyle{remark}
\newtheorem{remark}{\bf Remark}[section]
\begin{document}
\date{}                                     
\title{Patched peakon weak solutions of the modified Camassa-Holm equation}

\author[1,2,3]{Yu Gao\thanks{yugao@hit.edu.cn}}
\author[2]{Lei Li\thanks{leili@math.duke.edu}}
\author[2,3]{Jian-Guo Liu\thanks{jliu@phy.duke.edu}}
\affil[1]{Department of Mathematics, Harbin Institute of Technology, Harbin, 150001, P.R. China.}
\affil[2]{Department of Mathematics, Duke University, Durham, NC 27708, USA.}
\affil[3]{ Department of Physics, Duke University, Durham, NC 27708, USA.}

\maketitle

\begin{abstract}
In this paper, we study traveling wave solutions and peakon weak solutions of the modified Camassa-Holm (mCH) equation with dispersive term $2ku_x$ for $k\in\mathbb{R}$.  We study traveling wave solutions through a Hamiltonian system obtained from the mCH equation by using a nonlinear transformation. The typical traveling wave solutions given by this Hamiltonian system are unbounded or multi-valued. We provide a method, called patching technic, to truncate these traveling wave solutions  and patch different segments to obtain patched bounded single-valued peakon weak solutions which satisfy jump conditions at peakons. Then, we study some special peakon weak solutions constructed by the fundamental solution of the Helmholtz operator $1-\partial_{xx}$, which can also be obtained by the patching technic.  At last, we study some length and total signed area preserving closed planar curve flows that can be described by the mCH equation when $k=1$, for which we give a Hamiltonian structure and use the patched  periodic peakon weak solutions to investigate loops with cusps.
\end{abstract}

\section{Introduction}

This paper is concerned with traveling wave solutions and peakon weak solutions to the following nonlinear partial differential equation:
\begin{align}\label{mCH}
m_t+2ku_x+[(u^2-u^2_x)m]_x=0,\quad m=u-u_{xx},\quad x\in\mathbb{R},~~t>0.
\end{align}
This equation is referred to as the modified Camassa-Holm (mCH) equation with cubic nonlinearity, which was introduced by several different authors \cite{Fokas,Fuchssteiner,Olever,Qiao2}. The parameter $k\in \mathbb{R}$ characterizes the magnitude of the linear dispersion. 
The mCH equation  is a water wave equation and a suitable approximation of the incompressible irrotational Euler system.
The functions $u$ and $m$ represent, respectively, the velocity of the fluid and its potential density \cite{Qiao2}.  The weak solutions can be non-zero or even unbounded as $|x|\to\infty$. If a weak solution is periodic with period $\ell$ where $0<\ell<+\infty$, then it can be regarded as a weak solution of the mCH equation on the torus $\mathbb{T}_\ell:=[-\ell/2,\ell/2)\cong\mathbb{R}/(\ell\mathbb{Z})$. We also denote $\mathbb{T}_\ell=\mathbb{R}$ when $\ell=+\infty$ for convenience.

When $\ell<+\infty$, the fundamental solution $G(x)$ for the Helmholtz operator $1-\partial_{xx}$  is given by
\begin{gather}\label{eq:fundamentalsolutionG1}
G(x)=\left\{
\begin{split}
&\frac{\cosh(x+\ell/2)}{2\sinh(\ell/2)},~\textrm{ for }~x\in [-\ell/2,0),\\
&\frac{\cosh(x-\ell/2)}{2\sinh(\ell/2)},~\textrm{ for }~x\in [0, \ell/2),
\end{split}
\right.
\end{gather}
 and when $\ell=+\infty$, we have
\begin{align}\label{eq:fundamental2}
G(x)=\frac{1}{2}e^{-|x|},~\textrm{ for }~x\in\mathbb{R}.
\end{align}
The velocity function $u$ can be written as a convolution of $m$ with  $G$
\begin{align*}
u(x,t)=G\ast m=\int_{\mathbb{T}_\ell}G(x-y)m(y,t)dy,~~x\in\mathbb{T}_\ell.
\end{align*}

For smooth solutions to the mCH equation \eqref{mCH}, there are two conserved quantities (called Hamiltonian functionals of the mCH equation) 
\begin{align}\label{Hamiltonians}
H_0=\int_{\mathbb{T}_\ell}mudx,\quad H_1=\frac{1}{4}\int_{\mathbb{T}_\ell}\Big(u^4+2u^2u_x^2-\frac{1}{3}u^4_x+4ku^2\Big)dx.
\end{align}
\eqref{mCH} can be written in the bi-Hamiltonian form \cite{Olever,GuiLiuOlverQu},
$$m_t=-((u^2-u_x^2)m)_x=J\frac{\delta H_0}{\delta m}=K\frac{\delta H_1}{\delta m},$$
where $$J=-\partial_xm\partial_x^{-1}m\partial_x-k\partial_x,\quad K=\partial_x^3-\partial_x$$
are compatible Hamiltonian operators. (If the  Hamiltonian operators $J$ and $K$ are compatible, then any constant coefficient linear combination $\alpha J +\beta K$ is also a Hamiltonian operator.) The Hamiltonian pair $J$,$K$ is non-degenerate in the sense that one of the associated Poisson structure is symplectic. According to the fundamental theorem of Magri \cite{Magri}, any bi-Hamiltonian system associated with a non-degenerate Hamiltonian pair induces a hierarchy of commuting Hamiltonians and flows, and, provided there are enough functionally independent Hamiltonians, is therefore completely integrable. The mCH equation \eqref{mCH} is a complete integrable system and it possesses a Lax pair and may be solved by means of the inverse scattering transform (IST) method  \cite{Qiao2}.

In this paper, one of our main purposes is to introduce a patching technic as described below to construct peakon weak solutions from traveling wave solutions to the mCH equation \eqref{mCH}. To this end, we first give some jump conditions (see Theorem \ref{thm:jumpcondition}) for  piecewise smooth weak solutions to \eqref{mCH}. Then, we study  traveling wave solutions in the form $u(x,t)=\phi(x-ct)=\phi(\xi)$ ($\xi:=x-ct$) with traveling speed $c\in\mathbb{R}$. Under the traveling wave assumption, Equation \eqref{mCH} can be reduced to a third order autonomous ODE (see \eqref{eq:3rdODE}). After a first integral, this third order ODE can be reduced to a second order ODE (see \eqref{eq:travellingwaveODE}). Then, after some transformation this second order ODE can be rewritten as a Hamiltonian system  with a Hamiltonian $H$ (see \eqref{eq:system2}). Then, we study the structure of level sets of $H$ in the phase plane (see Proposition \ref{pro:structureGam}). We show that traveling wave solutions to the mCH equation can be constructed from level sets of $H$ (see Proposition \ref{pro:leveltoSol}). Consider a level set $H=h$ for some constant $h$. There are two particlular cases for solutions corresponding to $H=h$: the case  $k\leq0$ and the case $k>0$. When $k\leq0$ and $h$ is big enough,  there is a trajectory in the phase plane given by $H=h$ extending to infinity (see Theorem \ref{thm:levelsets}) and this corresponds to a single-valued traveling wave solution to \eqref{mCH} that satisfies $\lim_{|\xi|\to +\infty}|\phi(\xi)|=+\infty$.
For the case $k>0$, the trajectories in the phase plane given by $H=h$ are some closed curves and when the traveling speed $c<0$ or $h$ big enougth for $c>0$, these closed curves yield multi-valued (see Definition \ref{def:multi}) periodic traveling wave solutions (see Theorem \ref{thm:levelsets}). 

When $k>0$, the patching technic is used to obtain a patched single-valued periodic traveling peakon weak solution from a multi-valued periodic traveling wave solution.  Briefly speaking, we truncate the level set $H=h$ by curves $\phi^2-\frac{1}{3}v^2=c$ in phase plane. The two intersection points on the same branch of $\phi^2-\frac{1}{3}v^2=c$ glue together and form a patched single-valued periodic traveling wave weak solution with peakon(s). This truncating method guarantees that the jump condition is satisfied at the peakons and hence a weak solution (see Picture \ref{fig:solutions}). In this paper, this method is called `the patching technic' which can also be used to obtain patched bounded traveling peakon weak solutions for the case $k\leq0$.  

By the patching technic, we can obtain some special peakon weak solutions of the form
\begin{align}\label{solitarywave}
u(x,t)=pG(x-ct),~~m(x,t)=p\delta(x-ct),~~x\in\mathbb{R},~t\geq0,
\end{align}
where $p\neq0$ is a constant describing the amplitude of a peakon and the traveling speed is given by $c=\frac{1}{6}p^2$ (see Remark \ref{remark:specialcase}). \eqref{solitarywave} is a peakon weak solution to the dispersionless mCH equation \eqref{mCH} (with $k=0$) in $\mathbb{R}$ and it was also obtained in \cite{GuiLiuOlverQu}. 
On the periodic domain $\mathbb{T}_\ell$ ($\ell<+\infty$),  the mCH equation also has peakon weak solutions with similar form as \eqref{solitarywave} but the traveling speed is given by (see Proposition \ref{pro:periodicpeakon})
\[
c=\frac{1}{4}p^2\Big[\coth^2\left(\frac{\ell}{2}\right)-\frac{1}{3}\Big].
\]
As $\ell\rightarrow\infty$, this speed is consistent with the speed $\frac{1}{6}p^2$ in $\mathbb{R}$.   Notice that a solution given by \eqref{solitarywave} satisfies $m(x,t)=0$ in the smooth region of $u$ and the superposition of such solutions gives a new solution in the smooth region. Hence, the superposition of $N$ such solutions with different traveling speeds can form an $N$-peakon weak solution if the jump conditions are satisfied along the trajectories of peakons (see Proposition \ref{pro:NpeakonmCH}). 
 
When $k<0$, there are peakon weak solutions similar to \eqref{solitarywave} of the form (see Proposition \ref{pro:peakonknegative})  
\begin{align}\label{eq:travelknegative}
u(x,t)=pG(x-ct)-\sqrt{-k}, ~m(x,t)=p\delta(x-ct)-\sqrt{-k}.
\end{align}
When $\ell<\infty$, the travelling speed is given by
\[
 c=\Big[\frac{p}{2}\coth\left(\frac{\ell}{2}\right)-\sqrt{-k}\Big]^2-\frac{p^2}{12},
\]
and when $\ell=+\infty$, it is given by
\[ c=\Big(\frac{p}{2}-\sqrt{-k}\Big)^2-\frac{p^2}{12}.
\]
This kind of peakon weak solutions can also be obtained by the patching technic (see Remark \ref{rmk:specialknegative}).  Notice that in the whole space $\mathbb{R}$, this kind of solutions will not vanish as $x\rightarrow \infty.$   In \cite{lwj04,Qian}, some similar peakon weak solutions are obtained for the Camassa Holm (CH) equations  with dispersive term $2ku_x$ (for any $k\in \mathbb{R}$):
\begin{align}\label{eq:generalCH}
u_t-u_{xxt}+3uu_x+2ku_x=2u_xu_{xx}+uu_{xxx},~~ x\in\mathbb{R},~t>0.
\end{align}

For the case  $k>0$, the mCH equation \eqref{mCH} does not have peakon weak solutions of the form similar as \eqref{solitarywave} or \eqref{eq:travelknegative} (see Proposition \ref{pro:nopeakonkpositive}). However, the mCH equation \eqref{mCH} with $k>0$ has peakon weak solutions in other forms, which can be obtained by the patching technic (see Figure \ref{fig:solutions}).

Specially, when $k=1$, the mCH equation \eqref{mCH} can be used to describe planar curve flows which preserve arc-length \cite{GuiLiuOlverQu}. In this paper, we show  that  in periodic case, the mCH equation ($k=1$) describes some dynamics of closed plane curves  whose length and total signed area (defined by \eqref{eq:totalsignedarea}) are both preserved. This property is also shared with the modified Korteweg-de Vries equaiton. Moreover, we give a Hamiltonian structure for the curve dynamics (see Proposition \ref{pro:variation1}) and use the patched peakon weak solutions to investigate loops with cusps (see Figure \ref{fig:curve}).

For more results about local well-posedness and blow-up behaviors of the strong solutions to  \eqref{mCH}, one can refer to  \cite{Chenrongming,FuGui,GuiLiuOlverQu,Himonas,LiuOlver}.
In \cite{Qingtianzhang}, Zhang used a method of dissipative approximation to prove the existence and uniqueness of global entropy weak solutions $u$ in $W^{2,1}(\mathbb{R})$ for the dispersionless mCH equation \eqref{mCH} ($k=0$). In \cite{GaoLiu}, a sticky particle model is provided and a solution to this model yields a sticky $N$-peakon weak solution  which is a superposition of $N$ peakons in the form \eqref{solitarywave}. By some space-time BV estimates, the mean field limit of this model gives a global weak solution $u$ to the dispersionless mCH equation in $\mathbb{R}$ and $u,~u_x$ are space-time BV functions. When $k>0$,  Matsuno \cite{Matsuno} studied multi-soliton solutions to \eqref{mCH} in $\mathbb{R}$ by using the Hirota bilinear transformation method \cite{Hirota}. Both smooth and singular multi-soliton solutions are studied. The underlying structure of the associated tau-functions constituting the $N$-soliton solution is the same as that of the $N$-soliton solution of a model equation for shallow-water waves introducted by Ablowitz et al \cite{Ablowitz} (see \cite[Section 2]{Matsuno}).

The rest of this article is organized as follows. In Section \ref{sec:Jumpcondition}, we give the definition of weak solutions  to the mCH equation \eqref{mCH} and obtain some jump conditions for piecewise smooth weak solutions.   In Section \ref{sec:travelwave},   we deduce an ODE system to study traveling wave solutions to \eqref{mCH}. Traveling wave solutions are given by the level set of the first integral of the ODE system.  Then, we construct peakon weak solutions from  traveling wave solutions  by the patching technic.  In Section \ref{sec:peakonsolutions}, we show a special class of peakon weak solutions to \eqref{mCH} when $k\leq0$. These peakon weak solutions can also be obtained by the patching technic given in Section \ref{sec:travelwave}.
In Section \ref{sec:curve},  we give some criteria for preserving length and total signed area for closed curve flows. Moreover, we show that some length and total signed area preserving curve flows in $\mathbb{R}^2$ can be described by the mCH equation with $k=1$. These curve flows have a Hamiltonian structure. Lastly, we use the patched traveling peakon solutions in Section \ref{sec:travelwave} to investigate loops with cusps.

\section{Jump conditions for piecewise smooth weak solutions}\label{sec:Jumpcondition}

The physical solutions of \eqref{mCH} are usually piecewise smooth and contain some non-smooth points. Due to the non-smooth points, these solutions must be understood in the weak sense.  In this section, we first introduce the definition of weak solutions and obtain jump conditions for piecewise smooth weak solutions to the mCH equation \eqref{mCH}. Our patching technic in Section \ref{sec:patchedweaktravel} is based on these jump conditions.

\begin{definition}\label{weaksolution}
For $\ell<+\infty$ and $u_0\in W^{1,\infty}(\mathbb{T}_\ell)$, a function
\begin{align*}
u\in L^\infty(0,T; W^{1,\infty}(\mathbb{T}_\ell))\cap  C([0,T);H^1(\mathbb{T}_\ell)) ,
\end{align*}
is  a weak solution of the mCH equation \eqref{mCH} subject to initial data $u(x,0)=u_0(x)$ if
\begin{gather}\label{eq:weaksolutiondef}
\langle u, \varphi\rangle:=\mathcal{L}(u,\varphi)+\int_{\mathbb{T}_\ell} u_0 \varphi(x, 0)dx-\int_{\mathbb{T}_\ell} u_0\varphi_{xx}(x,0)dx=0
\end{gather}
where
\begin{align*}
\mathcal{L}(u,\varphi):=&\int_0^T\int_{\mathbb{T}_\ell} u(\varphi_t-\varphi_{xxt}) dxdt
+\int_0^T\int_{\mathbb{T}_\ell} (2ku+u^3+uu_x^2)\varphi_xdxdt\\
&-\frac{1}{3}\int_0^T\int_{\mathbb{T}_\ell}u^3\varphi_{xxx}dxdt-\frac{1}{3}\int_0^T\int_{\mathbb{T}_\ell} u_x^3\varphi_{xx}dxdt.
\end{align*}
holds for all $\varphi\in C_c^{\infty}([0,T)\times\mathbb{T}_\ell)$ .  If $T=+\infty$, we call $u$ as a global weak solution of the mCH equation. 

When $\ell=+\infty$ ($\mathbb{T}_\ell=\mathbb{R}$) and $u_0\in W_{loc}^{1,\infty}(\mathbb{R})$, we say a function
\begin{align*}
u\in L^\infty(0,T; W_{loc}^{1,\infty}(\mathbb{R}))\cap  C([0,T);H_{loc}^1(\mathbb{R})),
\end{align*}
is a weak solution of the mCH equation \eqref{mCH} subject to initial data $u(x,0)=u_0(x)$ if \eqref{eq:weaksolutiondef} holds for all $\varphi\in C_c^{\infty}([0,T)\times\mathbb{R})$. 
\end{definition}

It is straightforward to check that the definition is consistent. In other words, a weak solution on $\mathbb{T}_{\ell}$ can be identified as a weak solution on $\mathbb{T}_{n\ell}$ for $n=1,2 \ldots$ or $\mathbb{R}$ if we extend the function periodically. Reversely, a periodic weak solution on $\mathbb{R}$ with period $\ell$, (or a periodic weak solution on $\mathbb{T}_{\ell_1}$ with smaller period $\ell<\ell_1$) can be regarded as a weak solution on $\mathbb{T}_{\ell}$. This consistency allows us to identify the periodic weak solutions to \eqref{mCH} with weak solutions on $\mathbb{T}_{\ell}$.

Now, we state and prove the following results about jump conditions in this section.
\begin{theorem}\label{thm:jumpcondition}
Assume $0<\ell<+\infty$.
Suppose the region $\mathbb{T}_\ell\times [0, T]$ is divided by $N$ (no intersection) curves $\{(x_i(t),t): 0\leq t\leq T\}_{i=1}^N$ into $N$ non-overlap open area $\{U_i\}_{i=1}^N$, where $N$ is a positive integer.   The boundary between $U_i$ and $U_{i+1}$ is $x_i(t)$ and $x_N(t)$ is the boundary between $U_N$ and $U_{N+1}=U_1$.  Assume that $u_i\in C^3({U}_i)\cap C^1(\bar{U}_i)$ solves the mCH equation with $u_i(x,0)=u_0(x)$ in $U_i$ and $m_i=u_i-\partial_{xx}u_i$ for $1\leq i\leq N$.  Denote
$$v_\ell^i(t):=\partial_xu_i(x_i(t), t),~~ v_r^i(t):=\partial_xu_{i+1}(x_i(t), t)$$
and assume the following two limits exist
\begin{gather*}
\left\{
\begin{split}
&\lim_{x\to x_i(t)^-} m_i\Big[u^2_i-(\partial_xu_i)^2-x'_i(t)\Big]=:A^i_{\ell}(t),\\
&\lim_{x\to x_i(t)^+} m_{i+1}\Big[u^2_{i+1}-(\partial_xu_{i+1})^2-x'_i(t)\Big]=:A_r^{i}(t).
\end{split}
\right.
\end{gather*}
Then, the function $u$ satisfying $u|_{U_i}=u_i$ is a weak solution to the mCH equation subject to $u(x,0)=u_0(x)$ if the following conditions are satisfied:
\begin{enumerate}
\item $u\in C(\mathbb{T}_\ell\times[0,T))$.
\item For each $i$, it holds that 
\begin{gather}\label{eq:jumpcd1}
\left\{
\begin{split}
&\frac{d}{dt}x_i(t)=u(x_i(t), t)^2-\frac{1}{3}[v^i_\ell(t)]^2-\frac{1}{3}v^i_\ell (t)v^i_r(t)-\frac{1}{3}[v^i_r(t)]^2,\\
& \frac{d}{dt}[v_\ell^i(t)-v_r^i(t)]=A_\ell^i(t)-A_r^i(t)
\end{split}
\right.
\end{gather}
or
\begin{gather}\label{eq:continu}
v^i_{\ell}=v^i_r,~ A_{\ell}^i=A_r^i.
\end{gather}
\end{enumerate}
When $\ell=+\infty$ ($\mathbb{T}_\ell=\mathbb{R}$), we have the same jump conditions.
\end{theorem}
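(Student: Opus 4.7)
The plan is to verify the weak identity $\langle u,\varphi\rangle = 0$ for every $\varphi\in C_c^{\infty}([0,T)\times\mathbb{T}_\ell)$ by splitting $\mathcal{L}(u,\varphi)$ into the smooth pieces $U_i$, integrating by parts inside each piece until every derivative has been moved off $\varphi$, and then regrouping what remains. I would use Green's theorem in the $(x,t)$-strip together with the Reynolds transport theorem to accommodate the moving boundaries $\{x=x_i(t)\}$; each integration by parts then produces an interior integral of the strong mCH residual against $\varphi$ plus line integrals along $\{t=0\}$, $\{t=T\}$, and the curves $\{(x_i(t),t)\}$.

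Inside each $U_i$, the interior contributions assemble into $-\int\!\!\int_{U_i}\bigl[m_{i,t}+2ku_{i,x}+((u_i^2-u_{i,x}^2)m_i)_x\bigr]\varphi\,dxdt$, which vanishes because $u_i$ solves \eqref{mCH} classically on $U_i$. The $\{t=T\}$ contribution vanishes by compact support of $\varphi$. At $\{t=0\}$, the pair $u\varphi_t - u\varphi_{xxt}$ produces, after one IBP in $t$ and two in $x$, a combined boundary integral $\int_{\mathbb{T}_\ell}[-u_0\varphi(x,0)+u_{0,xx}\varphi(x,0)]dx$, which cancels the two explicit initial-data integrals in \eqref{eq:weaksolutiondef} once an additional IBP in $x$ turns $-\int u_0\varphi_{xx}(x,0)dx$ into $-\int u_{0,xx}\varphi(x,0)dx$.

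The heart of the proof is the sum of line integrals along each curve $\{x=x_i(t)\}$, which appears once as the right boundary of $U_i$ and once as the left boundary of $U_{i+1}$; the two contributions combine into a true jump between $U_i$- and $U_{i+1}$-traces. After all IBPs, and after using $\frac{d}{dt}\varphi(x_i(t),t) = \varphi_t + x'_i(t)\varphi_x$ plus a final IBP in $t$ along the curve to eliminate the residual $\varphi_t$ contributions, each curve yields a line integral of the schematic form
\[
\int_0^T \bigl\{\alpha_i(t)\,\varphi(x_i(t),t) + \beta_i(t)\,\varphi_x(x_i(t),t)\bigr\}\,dt.
\]
Since $\varphi$ and $\varphi_x$ are independent data along the curve, $\alpha_i$ and $\beta_i$ must vanish separately. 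Using the assumed continuity of $u$ and the algebraic identity $a^3-b^3=(a-b)(a^2+ab+b^2)$ to repackage the cubic jumps produced by $-\tfrac{1}{3}u^3\varphi_{xxx}$ and $-\tfrac{1}{3}u_x^3\varphi_{xx}$, I expect $\beta_i=0$ to reduce exactly to the speed equation $x'_i(t) = u^2 - \tfrac{1}{3}[(v_\ell^i)^2+v_\ell^i v_r^i+(v_r^i)^2]$, and then $\alpha_i=0$ to reduce to $\frac{d}{dt}(v_\ell^i - v_r^i) = A_\ell^i - A_r^i$. The alternative branch \eqref{eq:continu} corresponds to the degenerate case $v_\ell^i = v_r^i$, in which $\beta_i$ vanishes automatically and only the flux-matching condition $A_\ell^i=A_r^i$ survives from $\alpha_i=0$.

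The main obstacle is purely the bookkeeping of boundary contributions from the mixed term $-u\varphi_{xxt}$, which requires one IBP in $t$ (producing via Reynolds a term coupling $x'_i(t)$ to one-sided values of $u$ and $u_x$ on each curve) and two IBPs in $x$, together with the contributions of the high-derivative pieces $-\tfrac{1}{3}u^3\varphi_{xxx}$ and $-\tfrac{1}{3}u_x^3\varphi_{xx}$. Combining these correctly so that the cubic identity cleanly produces the averaged coefficient $\tfrac{1}{3}[(v_\ell^i)^2+v_\ell^i v_r^i+(v_r^i)^2]$ is the main burden; the whole-line case $\ell=+\infty$ requires no separate treatment since $\varphi$ remains compactly supported.
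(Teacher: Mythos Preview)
Your approach is essentially the paper's: split $\mathcal{L}(u,\varphi)$ over the regions $U_i$, integrate by parts, use the classical mCH equation to kill the interior contributions, and show that the jump conditions annihilate the line integrals along $\{x=x_i(t)\}$; the paper differs only cosmetically in that it stops the integration by parts with $\varphi_t,\varphi_x$ still present in the interior, rewrites $I_1+I_2$ as a space-time divergence $(m_i\varphi)_t+((2ku+m_i(u^2-u_x^2))\varphi)_x$ and applies Green's theorem, and inserts the speed relation mid-computation rather than isolating your $\beta_i$ first. One bookkeeping point you glossed over: integrating $-\int_{\mathbb{T}_\ell} u_0\varphi_{xx}(x,0)\,dx$ by parts in $x$ is not free, since $u_{0,x}$ jumps at $x_i(0)$ and produces the extra terms $\sum_i\varphi(x_i(0),0)(v_\ell^i(0)-v_r^i(0))$, which the paper tracks explicitly and cancels against the $t=0$ endpoint of the final integration by parts in $t$ along the curves.
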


\begin{proof}
By the fact $u_i\in C^1(\bar{U}_i)$ and Condition 1, it is clear that
$$u\in  L^\infty(0,T; W^{1,\infty}(\mathbb{T}_\ell))\cap C([0,T);H_{loc}^1(\mathbb{T}_\ell)).$$

For the Condition 2, we first assume \eqref{eq:jumpcd1}. Take $\varphi(x,t)\in C_c^\infty([0,T)\times\mathbb{T}_\ell)$. For initial data,  we have
\begin{multline}\label{initial}
\int_{\mathbb{T}_\ell} u_0\varphi(x,0)dx-\int_{\mathbb{T}_\ell} u_0\varphi_{xx}(x,0)dx
 =\sum_{i=1}^N\int_{x_{i-1}(0)}^{x_i(0)} m_i(x,0)\varphi(x,0)dx\\
 +\sum_{i=1}^N\varphi(x_i(0),0)(v_\ell^i(0)-v_r^i(0)),
\end{multline}
where $x_0=x_N.$
From Definition \ref{weaksolution}, we have
\begin{multline}\label{eq:L0}
\mathcal{L}(u,\varphi)=\int_0^T\int_{\mathbb{T}_\ell} u(\varphi_t-\varphi_{xxt}) dxdt+\int_0^T\int_{\mathbb{T}_\ell} (2ku+u^3+uu_x^2)\varphi_xdxdt\\
-\frac{1}{3}\int_0^T\int_{\mathbb{T}_\ell}u^3\varphi_{xxx}dxdt-\frac{1}{3}\int_0^T\int_{\mathbb{T}_\ell} u_x^3\varphi_{xx}dxdt.
\end{multline}
With some calculations, we obtain
\begin{multline}\label{eq:L1}
\int_0^T\int_{\mathbb{T}_\ell} u(\varphi_t-\varphi_{xxt}) dxdt=\int_0^T\int_{\mathbb{T}_\ell} u\varphi_t dxdt+\sum_{i=1}^N\int_0^T\int_{x_{i-1}(t)}^{x_i(t)} u_x\varphi_{xt}dxdt \\
=\int_0^T\sum_{i=1}^N(v^i_\ell(t)-v_r^i(t))\varphi_t(x_i(t),t)dt+\sum_{i=1}^N\iint_{U_i}m_i(x,t)\varphi_t dxdt
\end{multline}
and
\begin{multline}\label{eq:L2}
-\frac{1}{3}\int_0^T\int_{\mathbb{T}_\ell}u^3\varphi_{xxx}dxdt-\frac{1}{3}\int_0^T\int_{\mathbb{T}_\ell} u_x^3\varphi_{xx}dxdt \\
=\sum_{i=1}^N\int_0^T (v_\ell^i-v^i_r)\varphi_x(x_i(t),t)\frac{d}{dt}x_i(t) dt-\sum_{i=1}^N\iint_{U_i}(2uu_x^2+u^2u_{xx}-u_x^2u_{xx})\varphi_xdxdt,
\end{multline}
where we used the jump condition \eqref{eq:jumpcd1} for the last step. On each $U_i$, $u_{xx}$ agrees with $(u_i)_{xx}$, which is a continuous function.
Combining \eqref{eq:L0}, \eqref{eq:L1} and \eqref{eq:L2} gives
\begin{align*}
\mathcal{L}(u,\varphi)=&\sum_{i=1}^N\iint_{U_i}m_i(x,t)\varphi_t dxdt+\sum_{i=1}^N\iint_{U_i}(2ku+m_i(u^2-u_x^2))\varphi_xdxdt\\
&+\sum_{i=1}^N\int_0^T \Big[(v^i_\ell(t)-v_r^i(t))\varphi_t(x_i(t),t)+(v_\ell^i(t)-v^i_r(t))\varphi_x(x_i(t),t)\frac{d}{dt}x_i(t)\Big] dt\\
=&:I_1+I_2+I_3.
\end{align*}

For $I_1+I_2$,  we have 
\begin{align*}
I_1+I_2&=\sum_{i=1}^N \iint_{U_i} \Big(m_i\varphi_t+(2k u+m_i(u^2-u_x^2))\varphi_x\Big) dxdt\\
&=\sum_{i=1}^N \iint_{U_i} \Big((m_i\varphi)_t+((2k u+m_i(u^2-u_x^2))\varphi)_x\Big) dxdt\\
&=\sum_{i=1}^N\oint_{\partial U_i} -m_i\varphi dx+(2ku_i+m_i(u_i^2-u_{ix}^2))\varphi dt.
\end{align*}
The boundary integral on $t=0$ together with the initial value terms yields 
\[
-\sum_{i=1}^N\int_{x_{i-1}(0)}^{x_i(0)} m_i(x,0)\varphi(x,0)dx.
\]
 The integral on $t=T$ is zero since $\varphi$ vanishes there. By the continuity of $u$ and the jump condition, we then find 
\begin{align*}
&I_1+I_2=-\sum_{i=1}^N\int_{x_{i-1}(0)}^{x_i(0)} m_i(x,0)\varphi(x,0)dx +\sum_{i=1}^N \int_0^T\Big[m_i x_{i-1}'(t)-m_i(u_i^2-u_{ix}^2)\Big]\varphi\Big|_{x=x_{i-1}} dt\\
&-\sum_{i=1}^N\int_0^T\Big[m_i x_{i}'(t)-m_i(u_i^2-u_{ix}^2)\Big]\varphi\Big|_{x=x_{i}(t)}  dt\\
=&-\sum_{i=1}^N\int_{x_{i-1}(0)}^{x_i(0)} m_i(x,0)\varphi(x,0)dx+\sum_{i=1}^N\int_0^T\varphi(x_i(t),t)(A_\ell^i-A_r^i)dt.
\end{align*}
By the fact
\[I_3=\sum_{i=1}^N\int_0^T (v^i_\ell(t)-v_r^i(t))\frac{d}{dt}\varphi(x_i(t),t)dt,\]
\eqref{initial} and Condition 2, we have:
\begin{align*}
\mathcal{L}(u,\phi)=&-\sum_{i=1}^N\int_{x_{i-1}(0)}^{x_i(0)} m_i(x,0)\varphi(x,0)dx+\sum_{i=1}^N\int_0^T\frac{d}{dt}\Big[\varphi(x_i(t),t)(v_\ell^i(t)-v_r^i(t))\Big]dt\\
=&-\sum_{i=1}^N\int_{x_{i-1}(0)}^{x_i(0)} m_i(x,0)\varphi(x,0)dx-\sum_{i=1}^N\varphi(x_i(0),0)(v_\ell^i(0)-v_r^i(0))\\
=&-\int_{\mathbb{T}_\ell} u_0\varphi(x,0)dx+\int_{\mathbb{T}_\ell} u_0\varphi_{xx}(x,0)dx.
\end{align*}
Therefore, by Definition \ref{weaksolution}, $u$ is a weak solution.

For the second condition, if we instead assume \eqref{eq:continu}, 
we find 
\begin{multline*}
\mathcal{L}(u,\varphi)=\sum_{i=1}^N\iint_{U_i}m_i(x,t)\varphi_t dxdt+\sum_{i=1}^N\iint_{U_i}(2ku+m_i(u^2-u_x^2))\varphi_xdxdt\\
=-\sum_{i=1}^N\int_{x_{i-1}(0)}^{x_i(0)} m_i(x,0)\varphi(x,0)dx
=-\int_{\mathbb{T}_\ell} u_0\varphi(x,0)dx+\int_{\mathbb{T}_\ell} u_0\varphi_{xx}(x,0)dx.
\end{multline*}
Hence it is also a weak solution.

For the case $\mathbb{T}_\ell=\mathbb{R}$ ($\ell=+\infty$), the proof is similar and we omit it.
\end{proof}

\section{Traveling wave solutions of the mCH equation}\label{sec:travelwave}
Consider traveling wave solutions of the form $u(x, t)=\phi(\xi)$, where $\xi:=x-ct$ and $c\in\mathbb{R}$ is a traveling speed. Hence, $m=u-u_{xx}=\phi-\phi''$. Then, the mCH equation is reduced to the following ODE:
 \begin{align}\label{eq:3rdODE}
c(\phi'''-\phi')+2k\phi'+\left((\phi-\phi'')(\phi^2-(\phi')^2)\right)'=0 .
\end{align}
Integrating once yields the following equation
\begin{gather}\label{eq:travellingwaveODE}
(c-\phi^2+(\phi')^2)\phi''-c\phi+2k\phi+\phi(\phi^2-(\phi')^2)=g,
\end{gather}
where $g$ is an integrating constant.

If $\phi$ is a solution for $g$, then $-\phi$ is a solution for $-g$. Therefore, the structure of solutions with positive $g$ will be the same as that of the solutions for negative $g$. Hence, we assume from here on that
\begin{gather}
g\ge 0.
\end{gather} 
Introducing $v=\phi'$, from \eqref{eq:travellingwaveODE} we can deduce the following first order system 
\begin{gather}\label{eq:system}
\left\{
\begin{split}
&\frac{d\phi}{d\xi}=v,\\
&\frac{dv}{d\xi}=\frac{g-2k\phi}{c-\phi^2+v^2}+\phi.
\end{split}
\right.
\end{gather}
Hence, we have
\begin{align*}
\frac{d\phi}{dv}=\frac{v(c-\phi^2+v^2)}{g-2k\phi+\phi(c-\phi^2+v^2)},
\end{align*}
which implies
\begin{align*}
(cv+v^3)dv-(g-2k\phi+c\phi-\phi^3)d\phi-\frac{1}{2}d(\phi^2v^2)=0.
\end{align*}
Hence, the first integral of System \eqref{eq:system} is given by
\begin{align}\label{eq:Hamiltonian}
H=\frac{(\phi^2-v^2)^2}{4}-\frac{1}{2}c(\phi^2-v^2)+k\phi^2-g\phi.
\end{align}
Changing of variable  $d\xi=(c-\phi^2+v^2)d\tau$, we have the following Hamiltonian system
\begin{gather}\label{eq:system2}
\left\{
\begin{split}
&\frac{d\phi}{d\tau}=\frac{\partial H}{\partial v}=v(c-\phi^2+v^2),\\
&\frac{dv}{d\tau}=-\frac{\partial H}{\partial \phi}=g-2k\phi-\phi(\phi^2-v^2-c).
\end{split}
\right.
\end{gather}

Because System \eqref{eq:system} and \eqref{eq:system2} have the same first integral $H$,  the two systems have the same topological phase portraits except the hyperbola $\phi^2-v^2-c=0$.  The level sets of $H$ are solution trajectories of System \eqref{eq:system2} and hence give traveling wave solutions of the mCH equation (except the hyperbola $\phi^2-v^2=c$).   In the following, we first discuss the critical points and level sets of the first integral $H$ . Then, we deduce traveling wave solutions from level sets of $H$. At last, we provide a patching technic to construct single-valued traveling peakon weak solutions from traveling wave solutions given by level sets of $H$.

\subsection{Critical points and level sets}

The critical points of System \eqref{eq:system2} in the phase plane are important for determining the structures of the level sets of $H$. The following observation enables us to classify the critical points of   System \eqref{eq:system2} and the proof is provided in Appendix \ref{app:critical}.
\begin{lemma}\label{lmm:critical}
Suppose the Hessian of $H$ is non-degenerate at a critical point $(\phi_*, v_*)$ of System \eqref{eq:system2}. Then, $(\phi_*, v_*)$ is a local extremum of $H$ if and only if it is a center points of System \eqref{eq:system2} in the phase plane while it is a saddle points of $H$ if and only if it is saddle point in the phase plane of System \eqref{eq:system2}.
\end{lemma}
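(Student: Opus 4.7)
The approach is to exploit the Hamiltonian structure of System \eqref{eq:system2} to tie the linearization at a critical point directly to the Hessian of $H$. Since the vector field has the form $(H_v,-H_\phi)^T$, its Jacobian $A$ at a critical point $(\phi_*,v_*)$ satisfies $A=J\,\mathrm{Hess}(H)$, where $J$ is the standard symplectic matrix with entries $J_{12}=1$, $J_{21}=-1$. The first step is to compute $\mathrm{tr}(A)=H_{v\phi}-H_{\phi v}=0$ and $\det(A)=\det(J)\det(\mathrm{Hess}(H))=\det(\mathrm{Hess}(H))$, so the eigenvalues of $A$ are $\pm\sqrt{-\det(\mathrm{Hess}(H))}$. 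This single identity drives the whole classification.

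For the forward implication, suppose $(\phi_*,v_*)$ is a local extremum of $H$. Non-degeneracy forces the Hessian to be definite, so $\det(\mathrm{Hess}(H))>0$ and the eigenvalues of $A$ are purely imaginary. To upgrade this linear center to a nonlinear one I would invoke the Hamiltonian structure: $H$ is conserved along orbits of \eqref{eq:system2} and has a strict local extremum at $(\phi_*,v_*)$, so by the Morse lemma the level sets $\{H=h\}$ for $h$ close to $H(\phi_*,v_*)$ are smooth closed curves encircling $(\phi_*,v_*)$, and orbits are forced to traverse each such curve periodically. Hence $(\phi_*,v_*)$ is a center in the phase plane.

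For the converse, suppose instead that the Hessian is indefinite, so that $(\phi_*,v_*)$ is a saddle of $H$ and $\det(\mathrm{Hess}(H))<0$. Then $A$ has two real eigenvalues of opposite sign, and by the Hartman--Grobman theorem the phase portrait near $(\phi_*,v_*)$ is topologically conjugate to a linear saddle; alternatively, one can exhibit the two separatrices as the branches of the level curve $\{H=H(\phi_*,v_*)\}$ through the point, cut out near $(\phi_*,v_*)$ by the Morse lemma applied to the indefinite case. Since the local extremum versus saddle dichotomy exhausts the non-degenerate critical points of $H$, and the center versus saddle dichotomy exhausts the two sign possibilities of $\det(\mathrm{Hess}(H))$, both equivalences follow simultaneously.

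The main obstacle I anticipate is the step from a linear center to a nonlinear center: in a generic planar system this implication is false (this is the classical Poincar\'e center--focus problem). The Hamiltonian structure sidesteps it cleanly, but the argument has to be stated carefully by combining conservation of $H$ with the Morse lemma, rather than by invoking Hartman--Grobman (which does not apply on the imaginary axis).
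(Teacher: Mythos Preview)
Your proposal is correct and follows essentially the same route as the paper: both linearize System \eqref{eq:system2} at $(\phi_*,v_*)$, observe that the characteristic polynomial reduces to $\lambda^2 = (H_{v\phi})^2 - H_{\phi\phi}H_{vv} = -\det(\mathrm{Hess}(H))$, and read off the center/saddle dichotomy from the sign of this quantity.

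The one genuine difference is that the paper's proof simply asserts the nonlinear conclusions directly from the eigenvalue signs, whereas you explicitly flag and resolve the linear-to-nonlinear gap in the center case by appealing to conservation of $H$ together with the Morse lemma. This is a real improvement in rigor: as you note, purely imaginary eigenvalues do not by themselves force a nonlinear center, and Hartman--Grobman is unavailable there. The paper's argument is tacitly relying on the same Hamiltonian mechanism you spell out, so your version is a more complete execution of the same idea rather than a different approach.
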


As mentioned in the above,  Systems \eqref{eq:system} and \eqref{eq:system2}  have the same topological phase portraits except the hyperbola curves $\phi^2-v^2=c$. The critical points on  curves $\phi^2-v^2=c$ for system \eqref{eq:system2} can be summarized as follows (we only consider $k\neq 0$ here and the $k=0$ case is simple) and the proof is again provided in Appendix \ref{app:critical}:
\begin{proposition}\label{pro:criticalonhyperbola}
When $g^2-4k^2c<0$, there is no critical point on $\phi^2-v^2=c$. 
When  $g^2-4k^2c> 0$, there are two critical points for System \eqref{eq:system2} on $\phi^2-v^2=c$. If $k>0$, they are center points. If $k<0$, they are saddle points.
\end{proposition}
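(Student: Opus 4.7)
My plan is to reduce the proposition to two direct computations: locating the critical points on the hyperbola, and computing the Hessian of $H$ there so Lemma~\ref{lmm:critical} can be applied.

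First I would observe that on the hyperbola $\phi^2-v^2=c$, the factor $c-\phi^2+v^2$ vanishes, so the first equation of \eqref{eq:system2} is automatically satisfied. Substituting $\phi^2-v^2-c=0$ into the second equation collapses it to $g-2k\phi=0$, so any critical point on the hyperbola must have $\phi_*=g/(2k)$ (using $k\ne 0$). Then $v_*^2=\phi_*^2-c=\dfrac{g^2-4k^2c}{4k^2}$. Hence when $g^2-4k^2c<0$ there are no real critical points, and when $g^2-4k^2c>0$ there are exactly two, namely $\bigl(g/(2k),\,\pm\sqrt{(g^2-4k^2c)/(4k^2)}\bigr)$.

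Next I would classify them via the Hessian of $H$. A direct differentiation of \eqref{eq:Hamiltonian} gives
\begin{align*}
H_{\phi\phi}=3\phi^2-v^2-c+2k,\qquad H_{vv}=-\phi^2+3v^2+c,\qquad H_{\phi v}=-2\phi v.
\end{align*}
Evaluating at a critical point on the hyperbola, where $v_*^2=\phi_*^2-c$, these simplify to
\begin{align*}
H_{\phi\phi}\big|_*=2(\phi_*^2+k),\qquad H_{vv}\big|_*=2(\phi_*^2-c)=2v_*^2,\qquad H_{\phi v}\big|_*=-2\phi_* v_*.
\end{align*}
A short calculation then yields $\det(\mathrm{Hess}\,H)\big|_*=4k v_*^2$. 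Since $g^2-4k^2c>0$ forces $v_*\neq 0$, the Hessian is non-degenerate, so Lemma~\ref{lmm:critical} applies.

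Finally, I would read off the type from these signs: if $k>0$, then $\det(\mathrm{Hess}\,H)\big|_*=4k v_*^2>0$ and $H_{\phi\phi}\big|_*=2(\phi_*^2+k)>0$, so the Hessian is positive definite, giving a local minimum of $H$ and hence a center by Lemma~\ref{lmm:critical}; if $k<0$, then $\det(\mathrm{Hess}\,H)\big|_*<0$, so the critical point is a saddle of $H$ and therefore a saddle of \eqref{eq:system2}. The only delicate point is checking that the time reparametrization $d\xi=(c-\phi^2+v^2)d\tau$ used to pass from \eqref{eq:system} to \eqref{eq:system2} does not change the local phase-portrait classification at these critical points, but this is exactly the content of Lemma~\ref{lmm:critical} once non-degeneracy of the Hessian is established, so the whole argument collapses to the two computations above.
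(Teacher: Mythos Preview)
Your proof is correct and follows essentially the same approach as the paper's: locate the critical points on the hyperbola via $g-2k\phi=0$, compute the Hessian of $H$ there, and invoke Lemma~\ref{lmm:critical}. Your version is in fact slightly cleaner, since you simplify using $v_*^2=\phi_*^2-c$ before substituting $\phi_*=g/(2k)$ and obtain the compact expression $\det(\mathrm{Hess}\,H)\big|_*=4k v_*^2$, whereas the paper substitutes $\phi_*=g/(2k)$ first and arrives at the equivalent $(H_{\phi v})^2-H_{\phi\phi}H_{vv}=-\frac{g^2-4k^2c}{k}$; your closing remark about the reparametrization is unnecessary, though, since the proposition is stated directly for System~\eqref{eq:system2} and Lemma~\ref{lmm:critical} already handles exactly that system.
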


The critical points on $\phi^2-v^2=c$ do not give stationary solutions to System \eqref{eq:system}, but they determine the local structures of the level sets near them.  Other critical points of \eqref{eq:system2} are also critical points of \eqref{eq:system}, and they correspond to stationary solutions of System \eqref{eq:system} or constant traveling wave solutions of the mCH equation.  The critical points of System \eqref{eq:system2}  that are not on $\phi^2-v^2=c$ are of the form $(\phi_*, 0)$ ($\phi_*^2\neq c$),  where
\begin{gather}
-\phi_*^3-(2k-c)\phi_*+g=0.
\end{gather}
At this time, we have
$$\frac{\partial^2 H}{\partial v\partial\phi}(\phi_*,0)=0,~~\frac{\partial^2 H}{\partial\phi^2}(\phi_*,0)=3\phi_*^2+2k-c,~~\frac{\partial^2 H}{\partial v^2}(\phi_*,0)=c-\phi_*^2.$$
Hence
$$\Big(\frac{\partial^2 H}{\partial v\partial\phi}\Big)^2-\frac{\partial^2 H}{\partial\phi^2}\frac{\partial^2 H}{\partial v^2}=(3\phi_*^2+2k-c)(\phi_*^2-c)=-(c+4k)\phi_*^2+3g\phi_*-c(2k-c).$$
By choosing different $c,k$ and $g$, $(\phi_*,0)$ can be either a center point or a saddle point. 

We do not plan to give a thorough discussion for these critical points. The discussion is straightforward.  For example, in the case $g=0$ and $c-2k>0$,  there are three critical points with $v=0$: $(0,0)$ and $(\pm\sqrt{c-2k},0)$.
 
For $(0,0)$, we have
$$\Big(\frac{\partial^2 H}{\partial v\partial\phi}\Big)^2-\frac{\partial^2 H}{\partial\phi^2}\frac{\partial^2 H}{\partial v^2}=c(c-2k).$$
It is a saddle point if $c>0$ while a center point if $c<0$. Similarly, for $(\pm\sqrt{c-2k},0)$, we have
$$\Big(\frac{\partial^2 H}{\partial v\partial\phi}\Big)^2-\frac{\partial^2 H}{\partial\phi^2}\frac{\partial^2 H}{\partial v^2}=-4k(c-2k).$$
They are center points if $k>0$ and  saddle points if  $k<0$.

The critical points on the one hand may give stationary solutions and on the other hand determine local structures of level sets. To obtain a sense of the structures of the level sets, we sketch some typical level sets of $H$ for $k=-1$, $k=0$ and $k=1$. The curve $\phi^2-\frac{v^2}{3}=c$ (solid line) and $\phi^2-v^2=c$ (dashed line) are also plotted for more information and these curves are important for the patching technic in the   Subsection \ref{sec:patchedweaktravel}.

\begin{figure}[H]
\begin{center}
\includegraphics[width=0.7\textwidth]{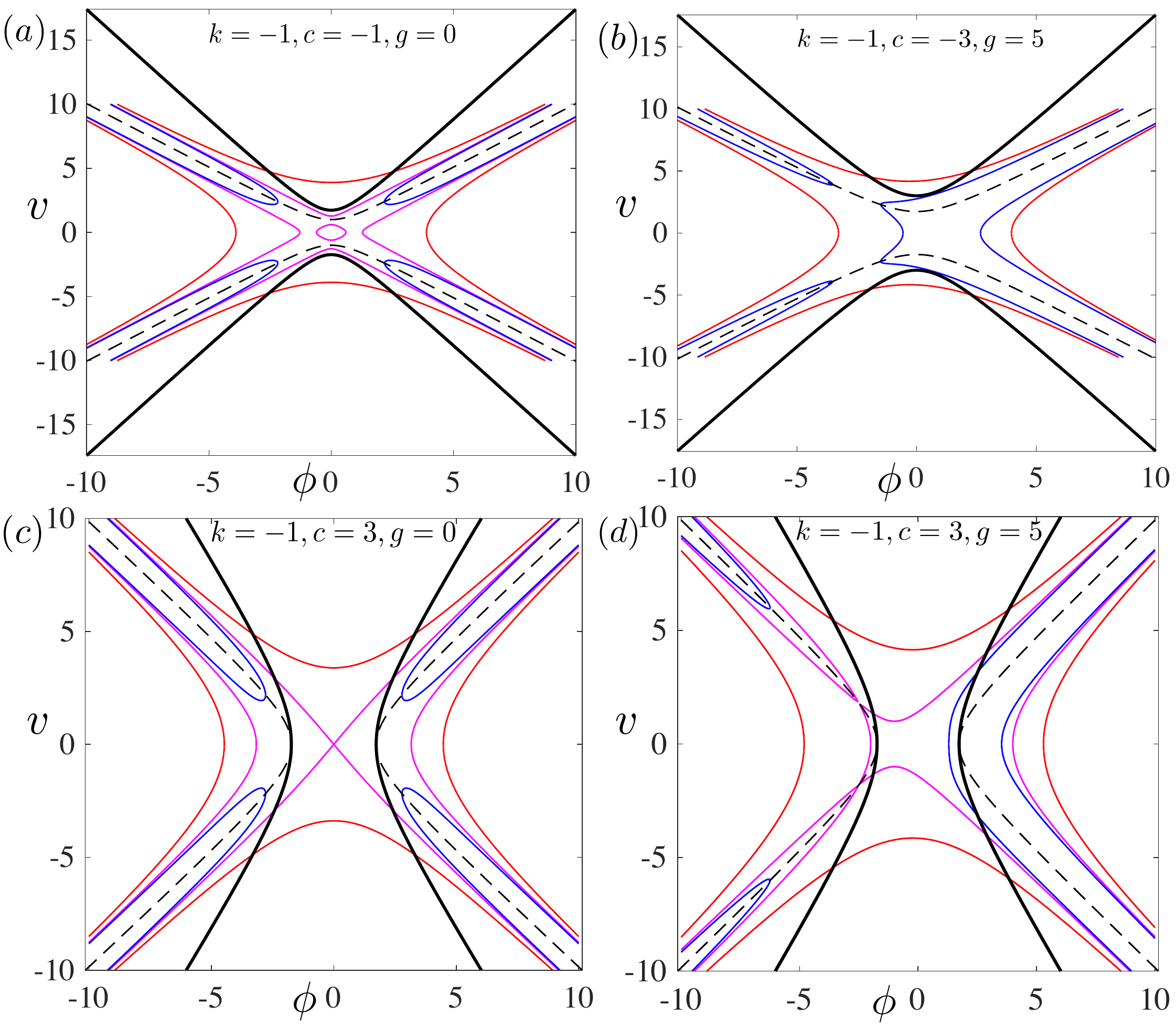}
\end{center}
\caption{Level sets of $H$ with different parameters  $c,~g$ and $h$ for $k=-1$. The black solid lines are $\phi^2-\frac{v^2}{3}=c$ and the dashed lines are $\phi^2-v^2=c$. (a) and (c) are for $g=0$ in which case the level sets are symmetric about both $\phi$ and $v$ axis. (b) and (d) are for $g>0$ where the level sets are skewed.}
\label{fig:knegative}
\end{figure}

\begin{figure}[H]
\begin{center}
\includegraphics[width=0.7\textwidth]{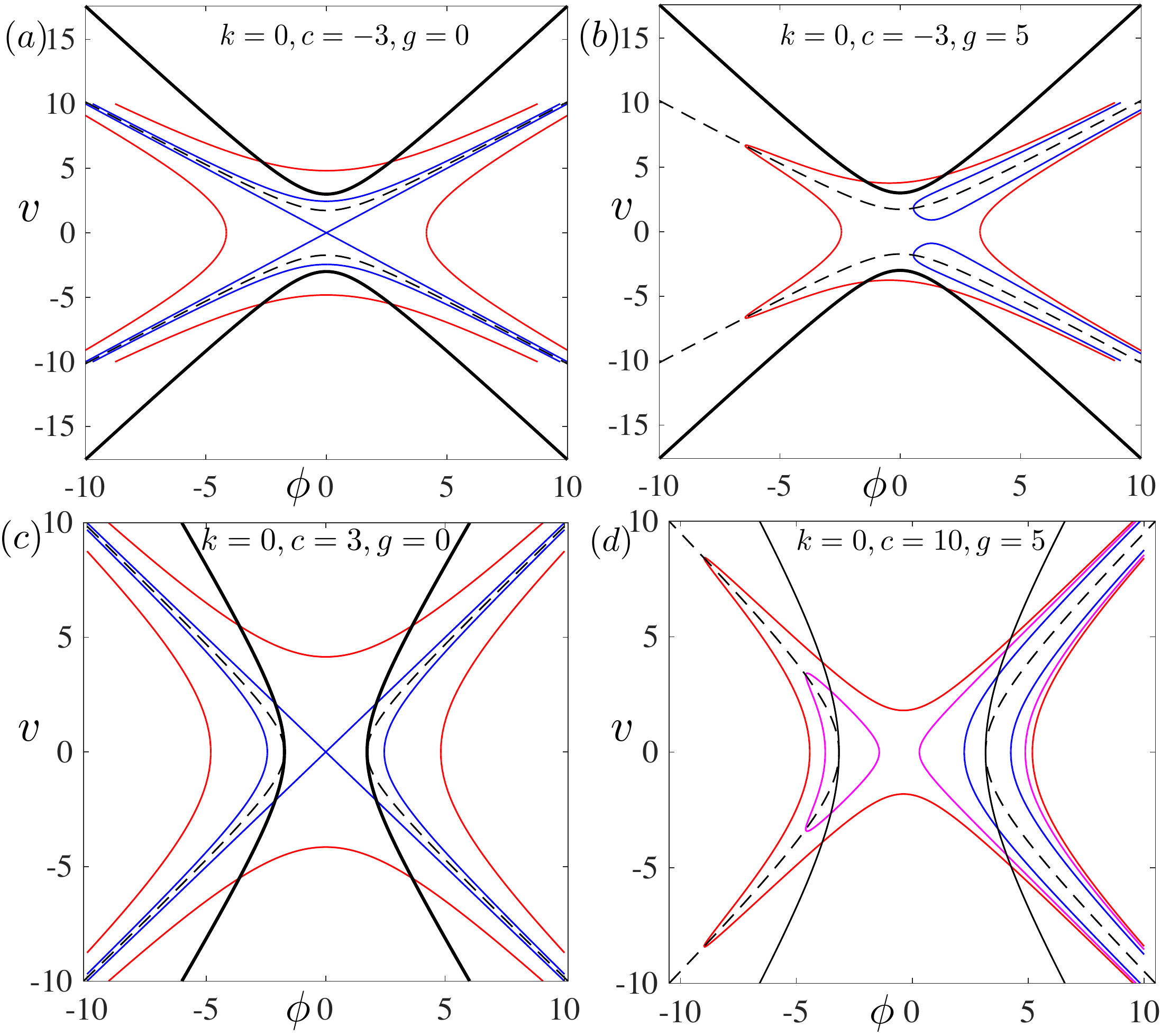}
\end{center}
\caption{Level sets of $H$ with different parameters  $c,~g$ and $h$ for $k=0$. The hyperbolas (solid lines and dashed lines) are the same as in Figure \ref{fig:knegative}.  (a) and (c) are for $g=0$, while (b) and (d) are for $g>0$. In Figure (b), there are components that are `U' shaped with both branches extending to infinity on one side of $v=0$. In Figure (d), there is a component that forms a loop. }
\label{fig:kzero}
\end{figure}
\begin{figure}[H]
\begin{center}
\includegraphics[width=0.7\textwidth]{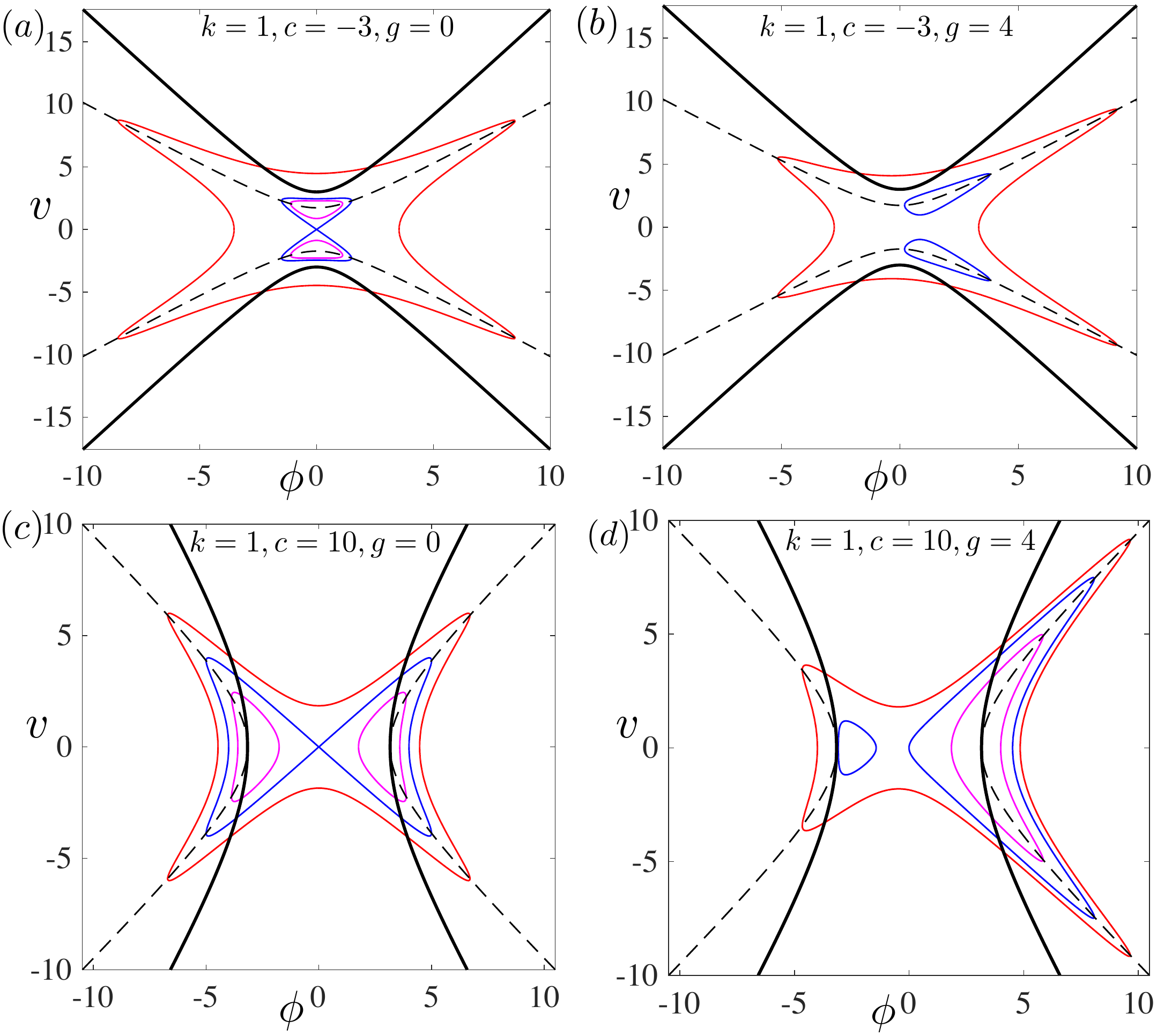}
\end{center}
\caption{Level sets of $H$ with different parameters  $c,~g$ and $h$ for $k=1$. The hyperbolas (solid lines and dashed lines) are the same as in Figure \ref{fig:knegative}.  (a) and (c) are for $g=0$, while (b) and (d) are for $g>0$. We can see that all the simple curve (a curve that does not cross itself) forms a closed loop (see Theorem \ref{thm:levelsets} for regurious proof).}
\label{fig:positive}
\end{figure}

Below, we fix parameters $(c, g, k)$ and consider those $h$ values such that the level set $H=h$ contains more than one point. 
 Denote
\begin{gather}
\Sigma\subset\{(\phi,v):H(\phi,v)=h\}
\end{gather} 
as a connected component that has more than one point. In this paper, without mentioning explicitly, the parameters $(c, g, k)$ are fixed but arbitrary. 
To relate the connected component $\Sigma$ to traveling wave solutions of the mCH equation, we first study the structure of $\Sigma$. 

Denote the upper half and lower half of the phase plane respectively by
\begin{gather}
P_+:=\{(\phi, v): v\ge 0\} ~\textrm{ and }~ P_-:=\{(\phi, v): v\le 0 \}.
\end{gather} 
We give some  lemmas that are useful to study the structure of $\Sigma$.
\begin{lemma}
Assume $\Sigma$ is a connected component of level set $H=h$ for some constant $h\in\mathbb{R}$. Then, $\Sigma$ intersects with $v=0$ for at most four points. If $\Sigma$ intersects with $v=0$, then $\Sigma$ is symmetric about $v=0$. 
\end{lemma}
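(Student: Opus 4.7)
The proof has two parts, both of which should follow quickly from elementary observations about the explicit form of $H$.

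For the bound on the number of intersections with $\{v=0\}$, I would simply set $v=0$ in the definition \eqref{eq:Hamiltonian} to obtain
\begin{equation*}
H(\phi,0) = \frac{\phi^4}{4} - \frac{c}{2}\phi^2 + k\phi^2 - g\phi,
\end{equation*}
so the equation $H(\phi,0)=h$ is a quartic polynomial equation in $\phi$ with leading coefficient $1/4 \neq 0$. Hence it has at most four real roots, and $\Sigma \cap \{v=0\}$ contains at most four points. Since this holds for the entire level set $\{H=h\}$, a fortiori it holds for any connected component $\Sigma$.

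For the symmetry claim, the plan is to exploit that $H$ is even in $v$. Since $H(\phi,v)$ depends on $v$ only through $v^2$, the reflection $\sigma:(\phi,v)\mapsto(\phi,-v)$ satisfies $H\circ\sigma=H$. Thus $\sigma$ maps the level set $\{H=h\}$ homeomorphically onto itself, and in particular permutes its connected components. Applied to our component $\Sigma$, the image $\sigma(\Sigma)$ is again a connected component of $\{H=h\}$.

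Now suppose $\Sigma$ contains a point $(\phi_0,0)$ with $v=0$. Then $\sigma(\phi_0,0)=(\phi_0,0)$ lies in both $\Sigma$ and $\sigma(\Sigma)$. Two connected components of the same set are either equal or disjoint, so $\sigma(\Sigma)=\Sigma$; that is, $\Sigma$ is symmetric about the $\phi$-axis. I do not anticipate any real obstacle: the only thing to be slightly careful about is to invoke the standard topological fact that a homeomorphism of a space preserving a subset permutes the connected components of that subset, which allows the one-point overlap to force equality.
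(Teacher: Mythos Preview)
Your proposal is correct and matches the paper's own reasoning: the paper states that ``the first claim follows from the fact that $H$ is a polynomial of degree four'' and then omits the proof as simple, so your quartic argument for the intersection count and your symmetry argument via the evenness of $H$ in $v$ are exactly what is intended.
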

The first claim follows from the fact that $H$ is a polynomial of degree four.  The proof of this lemma is simple and we omit it.

\begin{lemma}\label{lmm:arcs}
Denote $P_1:=\{(\phi, v): \phi^2-v^2-c\le 0\}$ and $P_2:=\{(\phi, v): \phi^2-v^2-c\ge 0\}$. 
Assume $(k, g)\neq (0, 0)$. Consider the two branches of $H=h$ given by
\begin{gather}\label{eq:Gammapm}
\begin{split}
&\Gamma_{+}:=\{(\phi,v):  v^2=\phi^2-c+\sqrt{c^2-4(k\phi^2-g\phi-h)}\},\\
&\Gamma_{-}:=\{(\phi,v):  v^2=\phi^2-c-\sqrt{c^2-4(k\phi^2-g\phi-h)}\}.
\end{split}
\end{gather}
Then, the following statements hold:

(i). $\Gamma_+\subset P_1$ and $\Gamma_-\subset P_2$.  

(ii). If  $\Gamma_+$ and $\Gamma_-$ are connected, then the joint is on the hyperbola $\phi^2-v^2=c$ and they  can be connected at most four points.   If  $\Gamma_{+}$   ($\Gamma_-$)  intersects $\phi^2-v^2=c$ at a point with $v\neq 0$, then it connects to $\Gamma_-$  ($\Gamma_+$).

(iii). Assume $\Gamma_{+}\cap P_{+}$ is not empty.  Then, this arc is the graph of a function $v=v(\phi)$. And  any endpoint of $\Gamma_{+}\cap P_{+}$ must be:  (a) on $v=0$,  (b) on $\phi^2-v^2=c$ with $v\neq 0$, or (c) at infinity. This statement is also true for $\Gamma_{+}\cap P_{-}$ and $\Gamma_{-}\cap P_{\pm}$.

\end{lemma}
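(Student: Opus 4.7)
My plan is to solve $H(\phi,v)=h$ for $v^{2}$ as a function of $\phi$ by treating \eqref{eq:Hamiltonian} as a quadratic in the auxiliary variable $w:=\phi^{2}-v^{2}$. From $\tfrac14 w^{2}-\tfrac12 cw+k\phi^{2}-g\phi-h=0$ the two roots are $w_{\pm}(\phi)=c\pm\sqrt{c^{2}-4(k\phi^{2}-g\phi-h)}$, and the definitions in \eqref{eq:Gammapm} identify $\Gamma_{+}$ with the branch $\phi^{2}-v^{2}=w_{-}\le c$ (since $v^{2}=\phi^{2}-w_{-}$ is the larger) and $\Gamma_{-}$ with the branch $\phi^{2}-v^{2}=w_{+}\ge c$. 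This reading makes part (i) immediate: every point of $\Gamma_{+}$ satisfies $\phi^{2}-v^{2}\le c$ and so belongs to $P_{1}$, and symmetrically $\Gamma_{-}\subset P_{2}$.

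For part (ii), I would locate the joints of the two branches algebraically. A common point of $\Gamma_{+}$ and $\Gamma_{-}$ forces $w_{+}=w_{-}$, i.e.\ $c^{2}-4(k\phi^{2}-g\phi-h)=0$, and then $w_{\pm}=c$, so the joint must lie on the hyperbola $\phi^{2}-v^{2}=c$. Because $(k,g)\ne(0,0)$, the inner discriminant is a non-trivial polynomial in $\phi$ of degree at most two, admitting at most two $\phi$-roots; for each, the equation $v^{2}=\phi^{2}-c$ supplies at most two real values of $v$, giving at most four joint points in all. Conversely, if $\Gamma_{+}$ meets the hyperbola at $(\phi_{0},v_{0})$ with $v_{0}\ne 0$, then $w_{-}(\phi_{0})=c$ forces $\sqrt{c^{2}-4(k\phi_{0}^{2}-g\phi_{0}-h)}=0$, so $w_{+}(\phi_{0})=c$ as well and $(\phi_{0},v_{0})\in\Gamma_{-}$; the statement with $\Gamma_{+}$ and $\Gamma_{-}$ swapped is identical.

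For part (iii), I would represent $\Gamma_{+}\cap P_{+}$ as the graph $\{(\phi,V(\phi)):\phi\in I\}$ of the single continuous function $V(\phi)=\sqrt{\phi^{2}-c+\sqrt{c^{2}-4(k\phi^{2}-g\phi-h)}}$ on its natural domain $D\subset\mathbb{R}$, namely those $\phi$ for which both radicands are non-negative. Any connected arc of $\Gamma_{+}\cap P_{+}$ is then a graph over a connected component $I\subset D$, and each finite endpoint of the arc is a finite boundary point $\phi_{0}$ of $I$. Such a boundary point arises because either (a) the outer radicand vanishes, giving $V(\phi_{0})=0$, i.e.\ the endpoint lies on $v=0$; or (b) the outer radicand is positive while the inner radicand vanishes, in which case $V(\phi_{0})^{2}=\phi_{0}^{2}-c$ with $V(\phi_{0})>0$, placing the endpoint on $\phi^{2}-v^{2}=c$ with $v\ne 0$. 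Unbounded components of $I$ produce endpoints at infinity, giving case (c). The arguments for $\Gamma_{+}\cap P_{-}$, $\Gamma_{-}\cap P_{+}$, and $\Gamma_{-}\cap P_{-}$ are identical up to the sign of $v$ and the interchange of $w_{-}$ with $w_{+}$.

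Rather than a genuine obstacle, the main care is bookkeeping: one must verify that no endpoint of an arc can occur strictly inside $D$, which follows from the continuity and explicit monotonicity properties of $V$, and one must treat the mildly degenerate case $k=0$, $g\ne 0$ (where the inner discriminant is only linear in $\phi$ and thus yields at most one joint-$\phi$) separately from the generic quadratic case. The hypothesis $(k,g)\ne(0,0)$ is exactly what rules out the trivial situation where $c^{2}-4(k\phi^{2}-g\phi-h)$ reduces to a non-zero constant, in which case $\Gamma_{+}$ and $\Gamma_{-}$ would never meet and the classification of endpoints would degenerate.
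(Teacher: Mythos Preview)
Your proposal is correct and follows essentially the same approach as the paper: both solve $H=h$ as a quadratic in $\phi^{2}-v^{2}$, read off (i) from the sign of the square root, count the joints in (ii) via the at-most-two roots of the inner discriminant (using $(k,g)\neq(0,0)$) times two $v$-values each, and handle (iii) by writing $\Gamma_{+}\cap P_{+}$ as the graph of $V(\phi)=\sqrt{\phi^{2}-c+\sqrt{c^{2}-4(k\phi^{2}-g\phi-h)}}$ and arguing that a finite boundary point of its domain must kill one of the two radicands. Your explicit introduction of the auxiliary variable $w=\phi^{2}-v^{2}$ and the remark on the $k=0$, $g\neq 0$ case are slightly more pedagogical than the paper's terse treatment, but the substance is the same.
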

\begin{proof}
{\bf Step 1.}
Statement (i) is obvious. 

{\bf Step 2.} For Statement (ii), assume $(\phi_0,v_0)\in \Gamma_+\cap\Gamma_-$ is a point where $\Gamma_+$ and $\Gamma_-$ are connected. By the definitions of $\Gamma_+$ and $\Gamma_-$, we have
$$c^2-4(k\phi_0^2-g\phi_0-h)=0 ~\textrm{ and }~\phi_0^2-v_0^2=c,$$
which means  the joint is on the hyperbola $\phi^2-v^2=c$.
Since $(k, g)\neq (0, 0)$, there are at most two $\phi_0$ such that $c^2-4(k\phi_0^2-g\phi_0-h)=0$. Each $\phi_0$ gives two values of $v_0$ (at most).  Hence, the two components can be connected at most four points.

Next, assume $\Gamma_{+}$  intersects $\phi^2-v^2=c$ at a point $(\phi_1,v_1)$ with $v_1\neq 0$. Hence,  $\phi_1^2-c=v_1^2>0$ and $\sqrt{c^2-4(k\phi_1^2-g\phi_1-h)}=0$, which implies $(\phi_1,v_1)\in\Gamma_-$. Similar arguments are performed if $\Gamma_-$ intersects with the hyperbola.

{\bf Step 3.}
For Statement (iii), consider the curve $\Gamma_+$ in $P_{+}$. It is the graph of function
$$v(\phi)=\sqrt{\phi^2-c+\sqrt{c^2-4(k\phi^2-g\phi-h)}}.$$
Assume  $D$ is the largest domain that $v(\phi)$ is defined on. If $\partial D$ contains some points $\tilde{\phi}$, then we have $v(\tilde{\phi})=0$ or $\sqrt{c^2-4(k\tilde{\phi}^2-g\tilde{\phi}-h)}=0$. Otherwise, there is a neighborhood such that we can extend the domain, contradicting with the fact that $D$ is the largest domain.
The same argument applies to $\Gamma_+\cap P_-$ and $\Gamma_-\cap  P_{\pm}$. 
\end{proof}

Now, we are able to conclude the structure of  the connected component $\Sigma$.
\begin{proposition}\label{pro:structureGam}
 Suppose $\Sigma$ is a connected component of a level set of $H=h$ that does not contain $(\phi_0, 0)$ with $\phi_0^2=c$. Then,  one of the following two statements holds:

(i). $\Sigma=\Gamma$ and $\Gamma$ is a simple curve that is either a closed loop, or with both sides  tending to infinity in the phase plane.  

(ii). $\Sigma=\cup_{i=1}^N\Gamma_i$, with $\Gamma_i$ being a simple curve as described in (i). If $\Gamma_i\cap \Gamma_j\neq \emptyset$ for $i\neq j$,  then any $p\in \Gamma_i\cap \Gamma_j$ is a critical point of System \eqref{eq:system2}.

If the two sides of $\Gamma_i$ tend to infinity,  then both $|v|$ and $|\phi|$ tend to infinity. Moreover,  for $\Sigma\cap P_+$  ($\Sigma\cap P_-$) containing more than one point, it can be represented by a graph of some function $v=f(\phi)$ if and only if $\Sigma$ does not intersect $\phi^2-v^2=c$.
\end{proposition}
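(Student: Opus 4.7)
The plan is to decompose $\Sigma$ into simple arcs by combining Lemma~\ref{lmm:arcs} (which describes the branches $\Gamma_+$ and $\Gamma_-$) with the symmetry $H(\phi,v) = H(\phi,-v)$, then identifying where those arcs can meet. The starting point is that the critical points of System~\eqref{eq:system2} are exactly the zeros of $\nabla H$, and since $H$ is a polynomial of degree $4$, the algebraic curve $\{H=h\}$ has only finitely many of them. Off those critical points, the implicit function theorem makes $\{H=h\}$ a smooth embedded $1$-manifold.

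First, I would remove the critical points from $\Sigma$ and apply the classification of smooth connected $1$-manifolds: every component of $\Sigma\setminus\{\nabla H=0\}$ is diffeomorphic to $\mathbb{R}$ or $S^1$. Taking closures yields a finite decomposition $\Sigma=\bigcup_{i=1}^{N}\Gamma_i$. To show that each $\Gamma_i$ is simple (a closed loop, or escaping to infinity on both sides, or joining critical points), I would trace $\Gamma_i$ through the four pieces $\Gamma_\pm\cap P_\pm$, each of which is the graph of a function of $\phi$ by Lemma~\ref{lmm:arcs}(iii). These pieces glue injectively across $v=0$ by the symmetry of $H$, and across the hyperbola $\phi^2-v^2=c$ according to Lemma~\ref{lmm:arcs}(ii); the hypothesis that $\Sigma$ avoids the points $(\phi_0,0)$ with $\phi_0^2=c$ excludes exactly the places where these two gluing mechanisms could collide and produce ambiguity. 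Distinct $\Gamma_i$ can then only meet at singular points of the level set, namely the critical points of $H$, which is case (ii); case (i) is the single-arc case $N=1$ with no critical point on $\Sigma$.

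For the claim that ``both sides tend to infinity'' forces $|\phi|,|v|\to\infty$ jointly, I would rewrite $H=h$ as
\[
k\phi^2 - g\phi - h \;=\; \tfrac{c}{2}(\phi^2-v^2) \;-\; \tfrac{1}{4}(\phi^2-v^2)^2,
\]
so that boundedness of $\phi$ along $\Gamma_i$ forces the left side to be bounded, hence $\phi^2-v^2$ (and therefore $v$) to be bounded; symmetrically, bounded $v$ forces $\phi^2-v^2$ and then $\phi$ to be bounded. Consequently, one coordinate is unbounded precisely when both are.

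Finally, for the graph characterization, if $\Sigma$ misses the hyperbola, connectedness forces $\Sigma$ into one open region $\phi^2-v^2<c$ or $\phi^2-v^2>c$, hence inside $\Gamma_+$ or $\Gamma_-$, and Lemma~\ref{lmm:arcs}(iii) exhibits $\Sigma\cap P_+$ as a single graph over $\phi$. Conversely, if $\Sigma$ meets the hyperbola at $(\phi_1,v_1)$ with $v_1>0$, then both $\Gamma_+$ and $\Gamma_-$ pass through this point by Lemma~\ref{lmm:arcs}(ii), contributing two distinct $v$-values over every $\phi$ close to (but distinct from) $\phi_1$ in $P_+$, so $\Sigma\cap P_+$ cannot be a graph. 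The main obstacle I anticipate is the bookkeeping at the gluing points, especially on the hyperbola: one must check that the endpoint of a $\Gamma_+$-arc continues as exactly one $\Gamma_-$-arc rather than branching ambiguously, and this is precisely where Lemma~\ref{lmm:arcs}(ii) together with the exclusion hypothesis does the essential work.
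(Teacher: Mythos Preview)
Your strategy of removing the critical points and invoking the classification of one-dimensional manifolds is cleaner than the paper's approach, which proceeds by a lengthy case analysis: the paper partitions the plane by the vertical lines through the critical points of $H$ on $v=0$, and in each resulting strip tracks how the arc of $\Gamma_+\cap P_+$ connects (through reflections across $v=0$ and through joints on the hyperbola $\phi^2-v^2=c$) to arcs of $\Gamma_-$, of $\Gamma_+\cap P_-$, and so on.

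However, your decomposition does not yet match the statement of the proposition. The $\Gamma_i$ in the conclusion are required to be closed loops or curves with \emph{both} ends at infinity; your $\bar C_j$ (closures of the components of $\Sigma\setminus\{\nabla H=0\}$) can terminate at critical points, and you explicitly allow ``joining critical points'' as a third option. The model case is $(k,g)=(0,0)$, $h=0$, $c\neq 0$, where $\Sigma=\{\phi^2=v^2\}$: removing the saddle at the origin leaves four open half-lines, and their closures are four closed half-lines, none of which is a loop or doubly infinite. The proposition's decomposition here is into the two full lines $v=\phi$ and $v=-\phi$, which meet at the saddle. To recover this from your construction you must argue that at each saddle the four local branches pair off canonically into two smooth transversal curves (the stable and unstable manifolds), and then glue your half-arcs accordingly. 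The paper achieves the same effect by systematically pairing each arc in $P_+$ with its reflection in $P_-$ via the symmetry $H(\phi,v)=H(\phi,-v)$; this pairing automatically produces simple curves that either close up or escape to infinity on both ends, even when they pass through a critical point on $v=0$.

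Your arguments for the final two assertions are sound. The asymptotic claim follows from your rearranged identity by comparing growth rates (degree four in $\phi^2-v^2$ on one side versus at most degree two in $\phi$ on the other), and your treatment of the graph characterization is essentially what the paper sketches in its Step~5.
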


The claims in Proposition \ref{pro:structureGam} are clear by Lemma \ref{lmm:arcs}, but the complete proof is tedious and we put it into Appendix \ref{app:structure}.
\begin{remark}
By using the Poincare-Bendixson Theorem, we can conclude that if $\Sigma$ is bounded and does not pass through a critical point, it must be a simple loop. When $k>0$, from \eqref{eq:Gammapm} we can tell that the level set of $H=h$ is bounded for any $h\in\mathbb{R}$. Hence, the connected component of level sets $H=h$ when $k>0$ is a simple loop (see Figure \ref{fig:positive}), if it contains more than one point and does not pass through a critical point.
\end{remark}

\subsection{Traveling wave solutions}\label{subsec:travelsolution}

From here on, we will use the same notation $\Gamma$ to represent a simple curve in the connected component $\Sigma$ as described in Proposition \ref{pro:structureGam} that does not pass through a point $(\phi_0, 0)$ with $\phi_0^2=c$. Due to $v=\frac{d\phi}{d\xi}$, we have
\begin{gather}
d\xi=\frac{1}{v}d\phi.
\end{gather}
To be convenient, we consider those simple curves $\Gamma$ for which we can define a period
\begin{gather}\label{eq:period}
T:=\int_{\Gamma}\frac{1}{v}d\phi,
\end{gather}
where the orientation of $\Gamma$ is chosen such that $T\in [0,\infty]$. Note that it is possible that $T=\infty$.

\begin{remark}
Requiring that $T$ is defined excludes some $U$-shaped curves, as shown in Figures \ref{fig:knegative}(b) and \ref{fig:kzero}(b).  $T$ can not be defined for such a curve with any orientation. For example, consider the $U$-shaped curve $\Gamma$ in $P_+$ from Figure \ref{fig:kzero}(b), with clockwise orientation.
$\Gamma=\gamma_1\cup\gamma_2$, $\gamma_1\subset\Gamma_+$ and $\gamma_2\subset\Gamma_-$.  It is clear that $\int_{\gamma_1}\frac{1}{v}d\phi=+\infty$ and $\int_{\gamma_2}\frac{1}{v}d\phi=-\infty$. Hence, $T=\int_{\gamma_1\cup\gamma_2}\frac{1}{v}d\phi$ is not defined on such a curve.
\end{remark}

There are some key points on $\Gamma$ that need careful investigation. For a point $(\phi, v)$, we denote
\begin{gather}
\Gamma_{\delta}(\phi,v):=\Gamma\cap B((\phi,v), \delta), ~\delta>0,
\end{gather}
with the same orientation as $\Gamma$.  The following two lemmas tell us the behaviors of the solutions given by $\Gamma$ near some key points:
\begin{lemma}\label{lmm:timenearzero}
Suppose $\Gamma$ passes through $(\phi_*, 0)$ and $(\phi_*, 0)$ is not a critical point, then $\Gamma$ is perpendicular with $v=0$ axis at $(\phi_*, 0)$ and for $\delta>0$ small enough, we have 
\begin{gather*}
\int_{\Gamma_{\delta}(\phi_*,0)}\frac{1}{v}d\phi<\infty.
\end{gather*}
\end{lemma}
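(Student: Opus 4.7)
The plan is to show that $(\phi_*,0)$ is a regular point of $H$ whose gradient is horizontal, which then lets us parametrize $\Gamma$ locally by $v$ and convert the integral into one that is obviously finite.

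First I would compute the partial derivatives of the Hamiltonian
\[
H(\phi,v) = \frac{(\phi^2-v^2)^2}{4} - \frac{c}{2}(\phi^2-v^2) + k\phi^2 - g\phi
\]
directly: $\partial_v H = v(c-\phi^2+v^2)$ and $\partial_\phi H = \phi(\phi^2-v^2-c) + 2k\phi - g$. At $(\phi_*,0)$, clearly $\partial_v H(\phi_*,0)=0$. On the other hand, the hypothesis that $(\phi_*,0)$ is not a critical point of System \eqref{eq:system2}, combined with the standing restriction that $\Gamma$ avoids $\{(\phi_0,0):\phi_0^2=c\}$ (so that Systems \eqref{eq:system} and \eqref{eq:system2} share the same critical structure at this point), forces $\partial_\phi H(\phi_*,0) = \phi_*(\phi_*^2-c)+2k\phi_*-g \neq 0$. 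Thus $\nabla H(\phi_*,0)$ is purely horizontal and nonzero, so the level curve $H=h$ is a smooth regular curve there whose tangent is vertical; this is precisely the perpendicularity claim.

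Next, by the implicit function theorem I can solve $H(\phi,v)=h$ for $\phi=\phi(v)$ as a smooth function on some interval $(-\eta,\eta)$ around $v=0$, with $\phi(0)=\phi_*$. Differentiating the identity $H(\phi(v),v)\equiv h$ gives
\[
\phi'(v) = -\frac{\partial_v H}{\partial_\phi H} = -\frac{v(c-\phi(v)^2+v^2)}{\partial_\phi H(\phi(v),v)}.
\]
The key observation is that this expression has an explicit factor of $v$ in the numerator, while the denominator stays bounded away from zero on a possibly smaller neighborhood by continuity of $\partial_\phi H$.

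Finally, choosing $\delta>0$ small enough that $\Gamma_\delta(\phi_*,0)$ lies in the domain of this parametrization, I change variables $\phi\mapsto v$ in the integral:
\[
\int_{\Gamma_\delta(\phi_*,0)} \frac{d\phi}{v}
= \int \frac{\phi'(v)}{v}\,dv
= -\int \frac{c-\phi(v)^2+v^2}{\partial_\phi H(\phi(v),v)}\,dv,
\]
and the integrand is a continuous function of $v$ on a neighborhood of $0$; hence the integral is finite. The only potentially delicate point is justifying that the piece of $\Gamma$ inside the small ball $B((\phi_*,0),\delta)$ is exactly one graph over a $v$-interval (rather than, say, two branches), but this follows from the implicit function theorem together with the fact that $\Gamma$ is a simple regular curve at $(\phi_*,0)$; shrinking $\delta$ if necessary removes any ambiguity.
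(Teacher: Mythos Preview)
Your proof is correct and follows essentially the same approach as the paper: both compute $\partial_v H(\phi_*,0)=0$, use the non-criticality hypothesis to force $\partial_\phi H(\phi_*,0)\neq 0$, and invoke the implicit function theorem to parametrize $\Gamma$ locally as $\phi=\phi(v)$. The only cosmetic difference is in the final step: the paper notes $\phi'(0)=0$ and appeals to a Taylor expansion to estimate $1/v\sim|\phi-\phi_*|^{-1/p}$ with $p\ge 2$, whereas you change variables directly and cancel the explicit factor of $v$ in $\phi'(v)$ to obtain a continuous integrand---your version is slightly more explicit but the content is the same.
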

\begin{proof}
Suppose $(\phi_*, 0)$ is such a point. Note that $\nabla H(\phi_*,0)=(\partial_\phi H,\partial_vH)\Big|_{(\phi_*,0)}\neq 0$ since there is no critical point on $\Gamma$. At $(\phi, v)=(\phi_*, 0)$, we have
$$\partial_v H(\phi_*,0)=-v(\phi^2-v^2)+cv\Big|_{(\phi_*,0)}=0.$$ 
Hence, $\partial_{\phi}H\neq 0$ and by the implicit function theorem we can solve $\phi$ in terms of $v$ near $(\phi_*, 0)$ as $\phi=\phi(v)$. Hence,  $\phi(0)=\phi_*$,          
$\phi'(0)=0$.   Taylor expansion implies  that $1/v\sim |\phi-\phi_*|^{-1/p}$ where $p\ge 2$. Hence, $\int \frac{1}{v}d\phi$ is integrable near this point. 
\end{proof}

There may be some trajectories that connect saddles. We have the following lemma.
\begin{lemma}\label{lmm:timenearpt}
Suppose $(\phi_*, v_*)\in \Gamma$ is a non-degenerate critical point of \eqref{eq:system2}.  
If $v_*\neq 0$, then the critical point is on $\phi^2-v^2=c$ and there exists $\delta>0$ such that
\begin{gather*}
\int_{\Gamma_{\delta}(\phi_*, v_*)}\frac{1}{v}d\phi<\infty.
\end{gather*}
If $v_*=0$ and $\phi_*^2\neq c$,  then for any $\delta>0$
\begin{gather*}
\left|\int_{\Gamma_{\delta}(\phi_*, v_*)\cap \{v\ge 0\}}\frac{1}{v}d\phi\right|=\left|\int_{\Gamma_{\delta}(\phi_*, v_*)\cap \{v\le 0\}}\frac{1}{v}d\phi\right|=\infty.
\end{gather*}
\end{lemma}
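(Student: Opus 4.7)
The plan is to reduce both statements to local analyses near the critical point via the Morse lemma, using the explicit form of the Hamiltonian $H$ given in \eqref{eq:Hamiltonian}.

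For the first claim, since $\partial_v H(\phi_*,v_*) = v_*(c - \phi_*^2 + v_*^2)$ must vanish and $v_* \neq 0$, we immediately get $\phi_*^2 - v_*^2 = c$, placing the critical point on the hyperbola as asserted. Because $(\phi_*,v_*) \in \Gamma$ and $\Gamma$ contains more than one point, the non-degenerate critical point cannot be a local extremum of $H$ (which would be isolated in its own level set by the Morse lemma), so by Lemma \ref{lmm:critical} it is a saddle. The Morse lemma then shows that $\Gamma \cap B((\phi_*,v_*),\delta)$ is a finite union of smooth, transversally crossing arcs of bounded total length, and in particular the $\phi$-variation along $\Gamma_\delta$ is finite. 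Shrinking $\delta$ so that $|v|\ge |v_*|/2$ on $\Gamma_\delta$, the bound
\[
\left|\int_{\Gamma_\delta(\phi_*,v_*)} \frac{1}{v}\,d\phi\right| \le \frac{2}{|v_*|}\,\mathrm{Var}_\phi(\Gamma_\delta) < \infty
\]
follows immediately.

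For the second claim, the same Morse argument forces $(\phi_*,0)$ to be a saddle of $H$. Taylor expanding $H$ around $(\phi_*,0)$, one has $\partial_{\phi v}H = -2v\phi$, which vanishes at $v = 0$, so the Hessian is diagonal with entries $a := 3\phi_*^2 + 2k - c$ and $b := c - \phi_*^2$. The hypothesis $\phi_*^2 \neq c$ gives $b \neq 0$, and the saddle condition gives $ab < 0$. Solving $a(\phi-\phi_*)^2 + b v^2 + O(\|(\phi-\phi_*,v)\|^3) = 0$ for $v$ on the branch with $v\ge 0$ yields $v \sim \sqrt{|a/b|}\,|\phi-\phi_*|$ as $\phi \to \phi_*$, so $1/v$ grows like $C/|\phi-\phi_*|$ near the critical point and $\int \frac{1}{v}\,d\phi$ diverges logarithmically. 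Since $d\phi/v = d\xi$ has a definite sign along $\Gamma$, the integrand has constant sign on $\Gamma \cap \{v\ge 0\}$, so the absolute value of the integral is genuinely $+\infty$; the $\{v\le 0\}$ side is symmetric, since then both $d\phi$ and $v$ flip sign.

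The main obstacle is justifying the leading-order asymptotic $v \sim \sqrt{|a/b|}|\phi-\phi_*|$ rigorously, ruling out higher-order cancellations along $\Gamma$. Non-degeneracy of the Hessian together with $\phi_*^2 \neq c$ (ensuring $b \neq 0$) is precisely what makes the quadratic part of the Taylor expansion control the local geometry of the level set. As a clean alternative, one may invoke the standard fact that the Hamiltonian flow \eqref{eq:system2} takes infinite $\tau$-time to approach a non-degenerate saddle and observe that the conversion factor $c - \phi^2 + v^2$ in $d\xi = (c - \phi^2 + v^2)d\tau$ is bounded away from zero near $(\phi_*,0)$, so the $\xi$-time is also infinite, which is exactly the claimed divergence.
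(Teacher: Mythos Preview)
Your argument is correct. The first part matches the paper's reasoning exactly (the paper simply says ``it is clear'' where you supply the Morse-lemma details and the bound $|v|\ge |v_*|/2$).

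For the second part you take a genuinely different route. The paper observes that when $v_*=0$ and $\phi_*^2\neq c$, the point $(\phi_*,0)$ is also a critical point of the original system \eqref{eq:system} (not just of the Hamiltonian reparametrisation \eqref{eq:system2}), and since \eqref{eq:system} is smooth there, the existence--uniqueness theorem forbids any non-stationary trajectory from reaching the stationary solution in finite $\xi$-time; this immediately gives $|\Delta\xi|=\infty$. Your primary argument instead Taylor-expands $H$, reads off the linear asymptotic $v\sim\sqrt{|a/b|}\,|\phi-\phi_*|$ from the diagonal Hessian, and obtains a logarithmic divergence explicitly. This is more work but yields quantitative information the paper's soft argument does not. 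Your closing alternative (infinite $\tau$-time at a hyperbolic saddle, then convert via the nonvanishing factor $c-\phi^2+v^2$) is essentially the paper's idea routed through \eqref{eq:system2}; the paper's version is marginally cleaner because it works in \eqref{eq:system} directly and avoids the conversion step altogether.
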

\begin{proof}
Note that $(\phi_*, v_*)$ must be a saddle point.
If $v_*\neq 0$, from \eqref{eq:system2} we can see that this critical point must be on the curve $\phi^2-v^2-c=0$. 
It is clear that $\int\frac{1}{v}d\phi$ is integrable near the critical point. 

If $v_*=0$ and $\phi_*^2-v_*^2-c\neq 0$,  then $(\phi_*, v_*)$ is also a critical point of System \eqref{eq:system}.  The existence and uniqueness theorem of System \eqref{eq:system} holds in the neighborhood of this critical point. 
As a result, $\Delta\xi=\int\frac{1}{v}d\phi$ must be infinite near this critical point because the critical point corresponds to a stationary solution of System \eqref{eq:system}.
\end{proof}

Now, let us define the multi-valued solution to the mCH equation for our convenience of discussion here.  
\begin{definition}\label{def:multiweak}
For any $a,b\in\mathbb{R}$, denote $J=(a, b)$. We say $sgn(J):=sgn(b-a)$ and $x\in J$ if $x\in (\min(a,b),\max(a,b))$.  For a function $u$ defined on $(\min(a,b),\max(a,b))$, we also say it is defined on $J$ and  define the integral on $J$ as
$$\int_J u d\xi=sgn(J)\int_{\min(a,b)}^{\max(a,b)} ud\xi.$$
\end{definition}
\begin{definition}\label{def:multi}
We say $u:\mathbb{R}\times[0, T]  \to 2^{\mathbb{R}}, (x, t)\mapsto u(x, t)$ is a multi-valued solution to $\eqref{mCH}$ subject to initial data $u(x,0)=u_0(x)$ if there exist signed intervals $J_n(t)=(a_n(t), b_n(t)),~ n\in\mathbb{Z},$ (see Figure \ref{fig:multiweak}) and a single-valued function $u_n(\cdot, t)$ defined on each $J_n(t)$ with $\gamma_n(t)=\{(x, u_n(x, t)), x\in J_n(t) \}$  such that 
\begin{itemize}
\item $a_n, b_n\in C^1[0, T]$, $b_n(t)=a_{n+1}(t)$, $\lim_{n\to-\infty}a_n(t)=-\infty$, $\lim_{n\to+\infty}b_n(t)=+\infty$
\item $\gamma_n$ are non-intersecting and $\forall t\in [0,T]$, $\cup_n \bar{\gamma}_n=\{(x, u): x\in\mathbb{R}, u\in u(x, t)\}$.
\item $\|u_n\|_{W^{1,\infty}(J_n(t))}$ is bounded as functions of $t$.  For any $ [a, b]\times [t_1, t_2]\subset \{(x, t): x\in J_n(t)\}$, it holds $u_n\in C(t_1, t_2; H^1([a, b]))$. $u_n(b_n(t),t)=u_{n+1}(a_{n+1}(t),t)$. 
\item  It holds that $\forall \varphi\in C_c^{\infty}([0,T),C_c^{\infty}(\mathbb{R}))$:
\begin{multline*}
\int_0^T\sum_n \int_{J_n} u(\varphi_t-\varphi_{xxt}) dxdt
+\int_0^T\sum_n\int_{J_n} (2ku+u^3+uu_x^2)\varphi_xdxdt\\
-\frac{1}{3}\int_0^T\sum_n\int_{J_n}u^3\varphi_{xxx}dxdt-\frac{1}{3}\int_0^T\sum_n\int_{J_n} u_x^3\varphi_{xx}dxdt\\
=-\sum_{n}\int_{J_n(0)} u_0 \varphi(x, 0)dx+\sum_n\int_{J_n(0)} u_0\varphi_{xx}(x,0)dx.
\end{multline*}
\end{itemize}
 \end{definition}

\begin{figure}[H]
\begin{center}
\includegraphics[width=0.5\textwidth]{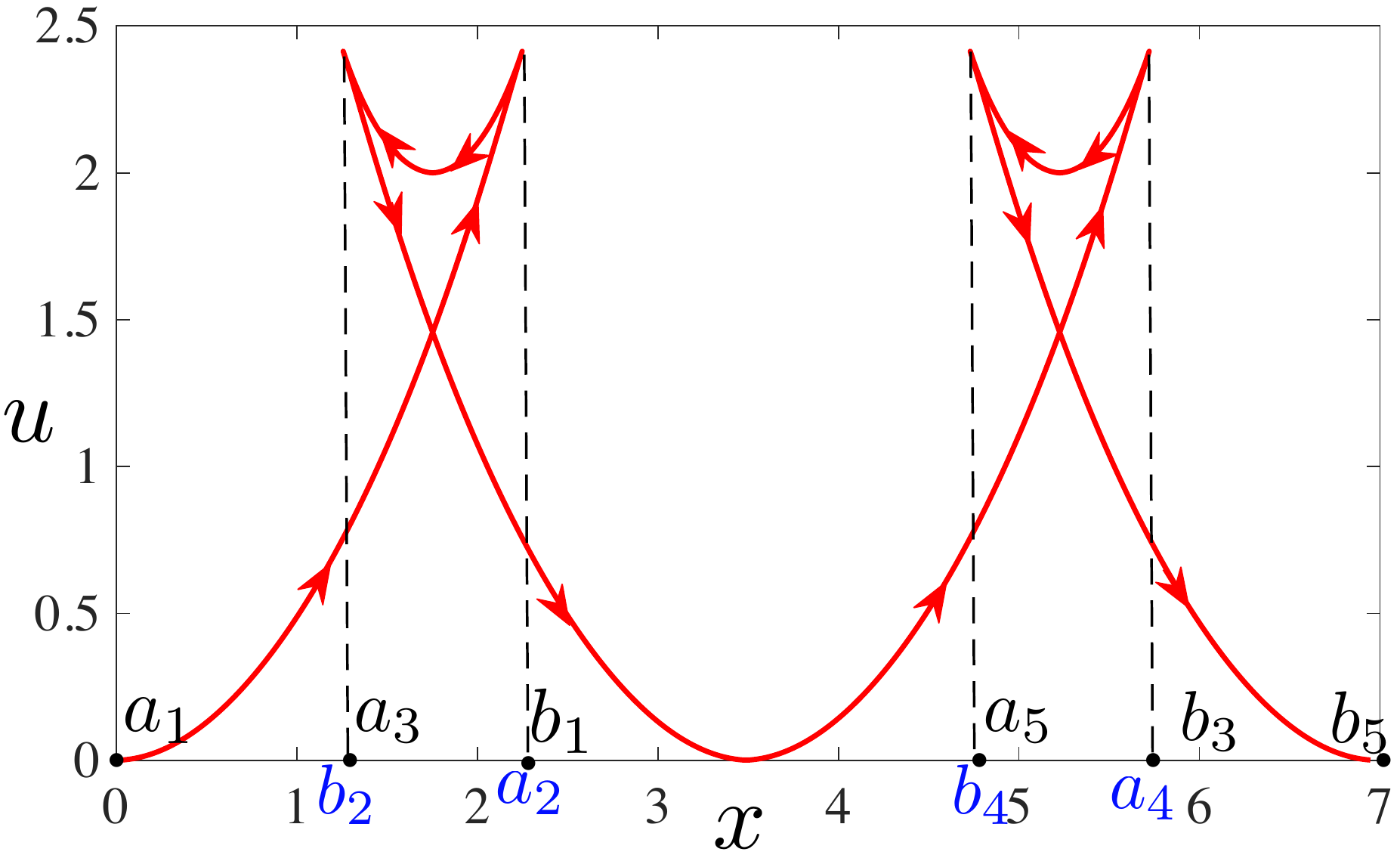}
\end{center}
\caption{A typical multi-valued solution $u(x,t)$ (Definition \ref{def:multi}) at a fixed time $t$. The intervals $(a_n, b_n) (n=1,2,\ldots,5)$ are demonstrated on the $x$-axis. In the picture, $b_1>a_1$, $b_2<a_2=b_1$ and $b_3>a_3=b_2$ etc..}
\label{fig:multiweak}
\end{figure}
 
With similar computations as in the proof of Theorem \ref{thm:jumpcondition}, we find
\begin{lemma}\label{lmm:multiJump}
Use the same notations as in Definition \ref{def:multi} and assume that $t\mapsto u_i(\cdot, t)$ solves the mCH equation on the region $\{(x, t): x\in J_i(t)=(a_i(t),b_i(t))\}$ classically for $i\in\mathbb{Z}$. Let $m_i(x,t)=u_i(x,t)-\partial_{xx}u_{i}(x,t)$ for $x\in J_i(t)$ and  $x_i(t):=a_i(t)$. If the jump conditions \eqref{eq:jumpcd1} or \eqref{eq:continu} are satisfied where 
\begin{align*}
v_r^i(t):=\lim_{\substack{x\in J_i,\\x\rightarrow a_i(t)}}\partial_xu_i(x,t),~~v_\ell^i(t):=\lim_{\substack{x\in J_{i-1},\\x\rightarrow b_{i-1}(t)}}\partial_xu_{i-1}(x,t)
\end{align*}
and
\begin{align*}
A^i_r:=\lim_{\substack{x\in J_i,\\x\rightarrow a_i(t)}}m_{i}\big[u^2_{i}-(\partial_xu_i)^2-a'_i(t)\big],~~A^i_r:=\lim_{\substack{x\in J_{i-1},\\x\rightarrow b_{i-1}(t)}}m_{i-1}\big[u^2_{i-1}-(\partial_xu_{i-1})^2-a'_i(t)\big],
\end{align*}
 then  $u$ is a multi-valued solution as in Definition \ref{def:multi}
 \end{lemma}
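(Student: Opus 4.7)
The plan is to reduce Lemma \ref{lmm:multiJump} to the already-proved Theorem \ref{thm:jumpcondition} by repeating its computation piecewise on the signed intervals $J_n(t)$, with $i\in\{1,\ldots,N\}$ replaced by $n\in\mathbb{Z}$. First I would fix $\varphi\in C_c^\infty([0,T), C_c^\infty(\mathbb{R}))$ and use the fact from Definition \ref{def:multi} that $a_n(t)\to-\infty$ and $b_n(t)\to+\infty$ as $n\to\mp\infty$ to conclude that only finitely many $J_n(t)$ meet $\mathrm{supp}_x\varphi(\cdot,t)$; combined with the compact $t$-support of $\varphi$ this turns every sum over $n$ into a genuine finite sum and eliminates any convergence issue.

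Next I would decompose the left-hand side of the weak formulation into contributions from the signed intervals and perform integration by parts on each. The key algebraic point is that for smooth $f$ and $g$,
\[\int_{J_n} f\, g_x\, dx=f(b_n)g(b_n)-f(a_n)g(a_n)-\int_{J_n} f_x\, g\, dx,\]
regardless of whether $a_n<b_n$ or $a_n>b_n$, because the overall $\mathrm{sgn}(J_n)$ from Definition \ref{def:multiweak} and the reversed orientation of the usual boundary evaluation cancel. Consequently, at a joint $b_{n-1}(t)=a_n(t)$ the paired boundary terms from adjacent intervals assemble into a jump of the form $[f_{n-1}-f_n](a_n)\,g(a_n)$, which is precisely the structure exploited in the proof of Theorem \ref{thm:jumpcondition}.

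With the integration-by-parts step in hand, the remainder is bookkeeping identical to that theorem: the bulk integrals on $J_n$ vanish because $u_n$ solves the mCH equation classically there; the boundary contributions along the vertical segments $\{x=a_n(t)\}$ in the $(x,t)$-plane produce exactly the combinations $m_n\bigl[u_n^2-(\partial_xu_n)^2-a_n'(t)\bigr]$ appearing in $A_\ell^n$ and $A_r^n$, together with the time derivative of $(v_\ell^n-v_r^n)\,\varphi$; and the initial-time boundary generates the inhomogeneous terms $-\sum_n\int_{J_n(0)} u_0\,\varphi(x,0)\,dx+\sum_n\int_{J_n(0)} u_0\,\varphi_{xx}(x,0)\,dx$ required by Definition \ref{def:multi}. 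Invoking \eqref{eq:jumpcd1} or \eqref{eq:continu} at each joint then forces the line integrals along $x=a_n(t)$ either to combine into a perfect $t$-derivative whose endpoint contribution is absorbed by the initial datum (case \eqref{eq:jumpcd1}), or to vanish outright (case \eqref{eq:continu}).

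The main obstacle I anticipate is keeping signs straight when $J_n$ is negatively oriented, in particular verifying that the jump conditions, originally stated in Theorem \ref{thm:jumpcondition} with a fixed left-to-right ordering, transfer correctly to joints where the curve folds back on itself. This is handled by the sign-cancellation observation above, together with the convention in Lemma \ref{lmm:multiJump} that $v_\ell^n$ and $v_r^n$ are defined as one-sided limits along the intervals $J_{n-1}$ and $J_n$ rather than along a fixed spatial direction; under this convention the algebra in the computation of $I_1+I_2+I_3$ from the proof of Theorem \ref{thm:jumpcondition} carries over without essential change.
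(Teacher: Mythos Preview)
Your proposal is correct and matches the paper's approach: the paper does not give a detailed proof of this lemma but simply states that it follows ``with similar computations as in the proof of Theorem~\ref{thm:jumpcondition},'' which is exactly the reduction you outline. Your additional care about the sign cancellation in the integration-by-parts formula on negatively oriented intervals $J_n$ is a sensible elaboration that the paper leaves implicit.
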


We now relate the simple curves $\Gamma$ to traveling wave solutions of the mCH equation:
\begin{proposition}\label{pro:leveltoSol}
Suppose $\Gamma$ is a simple curve described in Proposition \ref{pro:structureGam} that does not contain $(\phi_0, 0)$ with $\phi_0^2=c$. Let $T$ be defined by \eqref{eq:period} and we assume $T>0$.

(i).
If $\Gamma$ does not pass any critical point on $v=0$, then $\Gamma$ corresponds to a traveling wave solution $u(x,t)=\phi(\xi)$ to the mCH equation $\eqref{mCH}$, where $\xi=x-ct$.  And if  $\Gamma$ does not cross $\phi^2-v^2=c$, then $u$ is a single-valued solution. Moreover, $u$ is a periodic traveling wave solution if and only if $\Gamma$ is a closed loop. 

(ii).
If there are critical points of the form $(\phi_*, 0)$ dividing $\Gamma$ into several segments as $\Gamma=\cup_i\Gamma_i$ (see Figure \ref{fig:prostatement2}), then each $\Gamma_i$ corresponds to a non-periodic traveling wave solution $u(x,t)=\phi(\xi)$ for $-\infty<\xi<\infty$ where $\xi=x-ct$.
\end{proposition}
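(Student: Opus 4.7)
I will parameterize $\Gamma$ by the Hamiltonian flow \eqref{eq:system2} (with parameter $\tau$) and then pass to the traveling-wave variable via
\[
\frac{d\xi}{d\tau} = c - \phi^2 + v^2,
\]
which is equivalent to $d\xi = (1/v)\,d\phi$ on any arc where $v \neq 0$. Since $\Gamma$ contains no point $(\phi_0,0)$ with $\phi_0^2=c$, and, in part (i), no critical point of \eqref{eq:system2}, the Hamiltonian flow yields a smooth parameterization of $\Gamma$ by $\tau$; reparameterizing by $\xi$ and using System \eqref{eq:system} on arcs where $v\neq 0$ gives, after one integration, exactly equation \eqref{eq:travellingwaveODE} with the prescribed constant $g$. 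Hence $u(x,t)=\phi(x-ct)$ is a classical solution of the mCH equation on every open smooth arc.

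\textbf{Turning points, single-valuedness and periodicity.} The candidate $\phi(\xi)$ must be glued across two kinds of special points on $\Gamma$: non-critical points of the form $(\phi_*, 0)$ (turning points of $\phi$ in $\xi$) and points on the hyperbola $\phi^2-v^2=c$ with $v \neq 0$ (turning points of $\xi$ in $\tau$). At a point of the first kind, Lemma \ref{lmm:timenearzero} shows that $\Gamma$ crosses $v=0$ perpendicularly in finite $\xi$-time, so switching to $v$ as the local independent variable extends $\phi$ to a $C^2$ solution of \eqref{eq:travellingwaveODE} through the corresponding $\xi_*$. If in addition $\Gamma$ avoids the hyperbola, then $c-\phi^2+v^2$ keeps a fixed sign on $\Gamma$, $\xi(\tau)$ is strictly monotone, and $\phi(\xi)$ is single-valued; if $\Gamma$ does cross the hyperbola, $d\xi/d\tau$ vanishes and changes sign there, so $\xi$ reverses direction and the same $\xi$-value is attained at distinct points of $\Gamma$, yielding a multi-valued traveling wave whose maximal $\xi$-monotone arcs supply the branches $u_n$ on intervals $J_n(t)$. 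Periodicity of $\phi(\xi)$ is equivalent to closure of the Hamiltonian orbit, hence to $\Gamma$ being a closed loop, and the minimal period is the integral $T$ in \eqref{eq:period}.

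\textbf{Part (ii) and main obstacle.} For part (ii), any critical point $(\phi_*, 0) \in \Gamma$ satisfies $\phi_*^2 \neq c$ by hypothesis, so Lemma \ref{lmm:timenearpt} makes $\int(1/v)\,d\phi$ diverge on both sides of $(\phi_*, 0)$. Each open segment $\Gamma_i$ between consecutive such critical points is therefore traversed over the full range $\xi \in (-\infty, +\infty)$ with $\phi(\xi)$ tending to limit values at $\pm\infty$; the resulting traveling wave is defined on all of $\mathbb{R}$ and is non-periodic. I expect the smooth classical cases to be essentially direct; the main bookkeeping is in the multi-valued regime, where I must identify the intervals $J_n(t)$, check continuity across turning points (automatic, since a single curve in phase space attaches the two branches at a common $(\phi,v)$), and verify the jump/flux conditions \eqref{eq:jumpcd1} at the hyperbola crossings via Lemma \ref{lmm:multiJump}. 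At such a crossing both branches share the same $(\phi, v)$ with $\phi^2-v^2=c$, so the first identity in \eqref{eq:jumpcd1} reduces to $x_i'(t) = c$, which is just the traveling-wave relation; this is the step I would check carefully to conclude that $u$ satisfies Definition \ref{def:multi}.
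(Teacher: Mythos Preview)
Your proposal is correct and follows essentially the same route as the paper: parametrize $\Gamma$, recover $\phi(\xi)$ via $d\xi=(1/v)\,d\phi$, invoke Lemma~\ref{lmm:timenearzero} at noncritical points on $v=0$, check the jump condition at hyperbola crossings, and use Lemma~\ref{lmm:timenearpt} for part~(ii). Two small remarks: at a hyperbola crossing the paper actually verifies \eqref{eq:continu} rather than \eqref{eq:jumpcd1} (equivalent here since $v_\ell=v_r$, and the computation $m(\phi^2-v^2-c)\to g-2k\phi_A$ from System~\eqref{eq:system} is precisely the step you flag); and for the ``only if'' direction of periodicity in~(i) you should add the paper's observation that $|v|\sim|\phi|$ along $\Gamma_\pm$ as $|\phi|\to\infty$, which forces $\xi$ to range over all of $\mathbb{R}$ when $\Gamma$ is not a loop.
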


\begin{figure}[H]
\begin{center}
\includegraphics[width=0.5\textwidth]{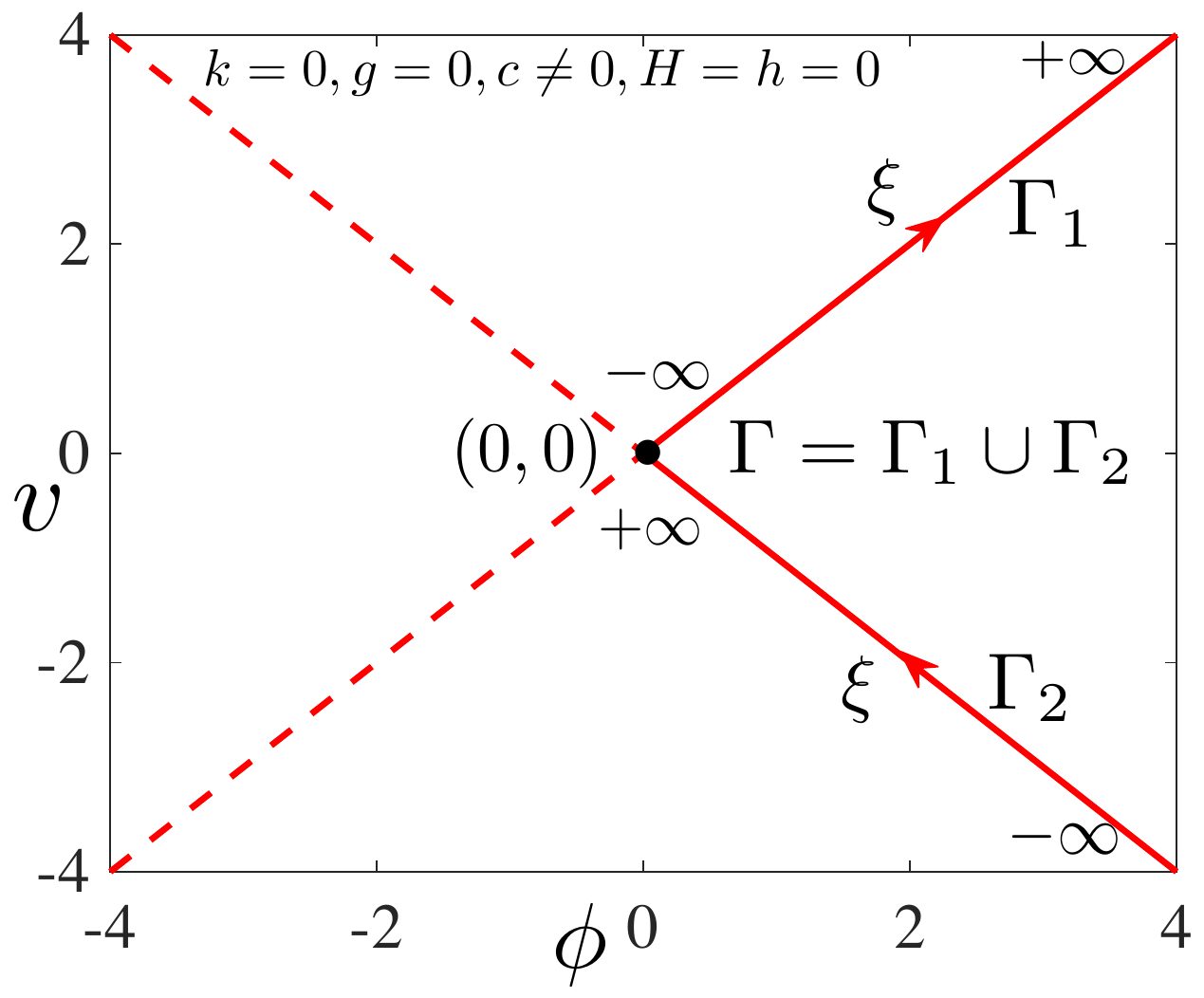}
\end{center}
\caption{The curves (both dashed lines and solid lines together) in this figure make up a connected component ($\phi^2-v^2=0$) of level set $H=0$ for $g=k=0$ and $c\neq0$. $\Gamma=\Gamma_1\cup\Gamma_2$ is a simple curve as stated in Statement (ii) of Proposition \ref{pro:leveltoSol}. $(\phi_*, 0)=(0,0)$ is a saddle point that divides $\Gamma$ into $\Gamma_1$ and $\Gamma_2$.  $\Gamma_1$ gives a solution $\phi_1(\xi)=A_1e^{\xi}$ to the mCH equation and $\Gamma_2$ gives $\phi_2(\xi)=A_2e^{-\xi}$ for some constants $A_i>0,~i=1,2$ and $\xi\in\mathbb{R}$.} 
\label{fig:prostatement2}
\end{figure}

\begin{proof}
Proof for Statement (i). 

{\bf Step 1.} Preparations.

We assume first that $\Gamma$ does not pass through any critical point on $v=0$.  Fix $A_0=(\phi_0, v_0)\in\Gamma$, and use $\Gamma(A_0, (\phi, v))$ to mean the subcurve of $\Gamma$ from $A_0$ to $(\phi, v)$ with the same orientation as when we define $T$ in Equation \eqref{eq:period}. We define
\begin{gather}\label{eq:sol}
\xi=\int_{\Gamma(A_0,(\phi, v))} \frac{1}{v}d\phi.
\end{gather}
In the case $T\in (0,\infty)$ (or $\Gamma$ is a loop), this expression is defined for all $\xi\in \mathbb{R}$ if we orbit around the loop repeatedly.
This integral then gives a set value mapping $\phi: \mathbb{R}\to 2^{\mathbb{R}}$, $\xi\mapsto \phi(\xi)$ provided $T>0$.

Assume $\phi^2-v^2=c$ divides $\Gamma$ into several arcs (may be one segment). From Statement (ii) of Lemma \ref{lmm:arcs}, we know there are at most five such arcs with four joint points. Moreover, each arc is contained in  $\Gamma_+$ or $\Gamma_-$. Hence, $d\xi$ has a definite sign in each arc. If the sign of $d\xi=\frac{1}{v}d\phi$ are different on two arcs, we get a multi-valued solution, which happens if one arc is below strictly under the other in $P_+$ or $P_-$. 

The functions $(\phi(\xi), v(\xi))$ solves System \eqref{eq:system} on each arc and therefore $u(x,t)$ solves the mCH equation in the classical sense, where
$$u(x,t)=\phi(\xi)~\textrm{ and }~x-ct=\xi.$$

{\bf Step 2.} For the case when $\Gamma\cap \{(\phi,v):\phi^2-v^2=c\}= \emptyset$, $d\xi$ has a definite sign and we have a single-valued function.

{\bf Step 3.} The case for $\Gamma\cap \{(\phi,v):\phi^2-v^2=c\}\neq \emptyset.$

In this case, we may have a multi-valued function. We only have to check the jump conditions across the hyperbola $\phi^2-v^2=c$. 
Assume $A=(\phi_A,v_A)\in\Gamma\cap \{(\phi,v):\phi^2-v^2=c\}$. The jump for $u(x,t)$ happens at
$$x-ct=\xi_A=\int_{\Gamma(A_0,A)} \frac{1}{v}d\phi.$$
Hence, the jump line in $x,t$ plane is given by $x_A(t)=ct+\xi_A.$ We have
$$v_{\ell}=v_r=v_A~\textrm{ and }~\frac{d}{dt}x_A(t)=c,$$
which agrees with the wave speed.
Notice that $m=u-u_{xx}=\phi-\phi''=\phi-\frac{dv}{d\xi}$. By System \eqref{eq:system}, we have 
$$\lim_{\xi\neq \xi_A,\xi\to \xi_A}m(u^2-\partial_x u^2-x'_A(t))=\lim_{\xi\neq \xi_A,\xi\to \xi_A}(\phi-\phi'')(\phi^2-v^2-c)=g-2k\phi_A.$$
The  jump condition \eqref{eq:continu} is verified. Hence, $u(x,t)$ is a weak solution to the mCH equation.

{\bf Step 4.} Periodicity.

If $\Gamma$ is a closed loop,  we have $0<T<\infty$. The solution is therefore periodic traveling wave solution. 

When $\Gamma$ is not a closed loop, both sides of $\Gamma$ tend to infinity. By definition of $\Gamma_{\pm}$ \eqref{eq:Gammapm}, we have $|v|\sim |\phi|$ as $|\phi|\to\infty$. This means $\xi$ extends to $-\infty$ on one side and extends to $+\infty$ on the other side by formula \eqref{eq:sol}. This yields a solution defined on $\mathbb{R}$ which is not a periodic solution.

Proof for Statement (ii). 

Now assume there are critical points of the type $(\phi_*, 0)$ on $\Gamma$. If $\Gamma$ is a loop, both ends of a segment are such critical points. If $\Gamma$ is an open curve extending to infinity, then the endpoints of each segment are either such critical points or infinity. By Lemma \ref{lmm:timenearpt}, the integrals at the two ends of such a segment are infinity. Hence, the function similarly given by \eqref{eq:sol} is defined on $\mathbb{R}$. That it is a solution can be verified similarly as the first case.
\end{proof}

Next, we give a theorem to describe the typical traveling wave solutions to the mCH equation \eqref{mCH} given by level sets of $H$. When $k>0$, the typical solutions are multi-valued periodic traveling wave solutions.  And when $k\leq0$,  the typical solutions are unbounded and single-valued, defined on $\mathbb{R}$. 
\begin{figure}[H]
\begin{center}
\includegraphics[width=0.8\textwidth]{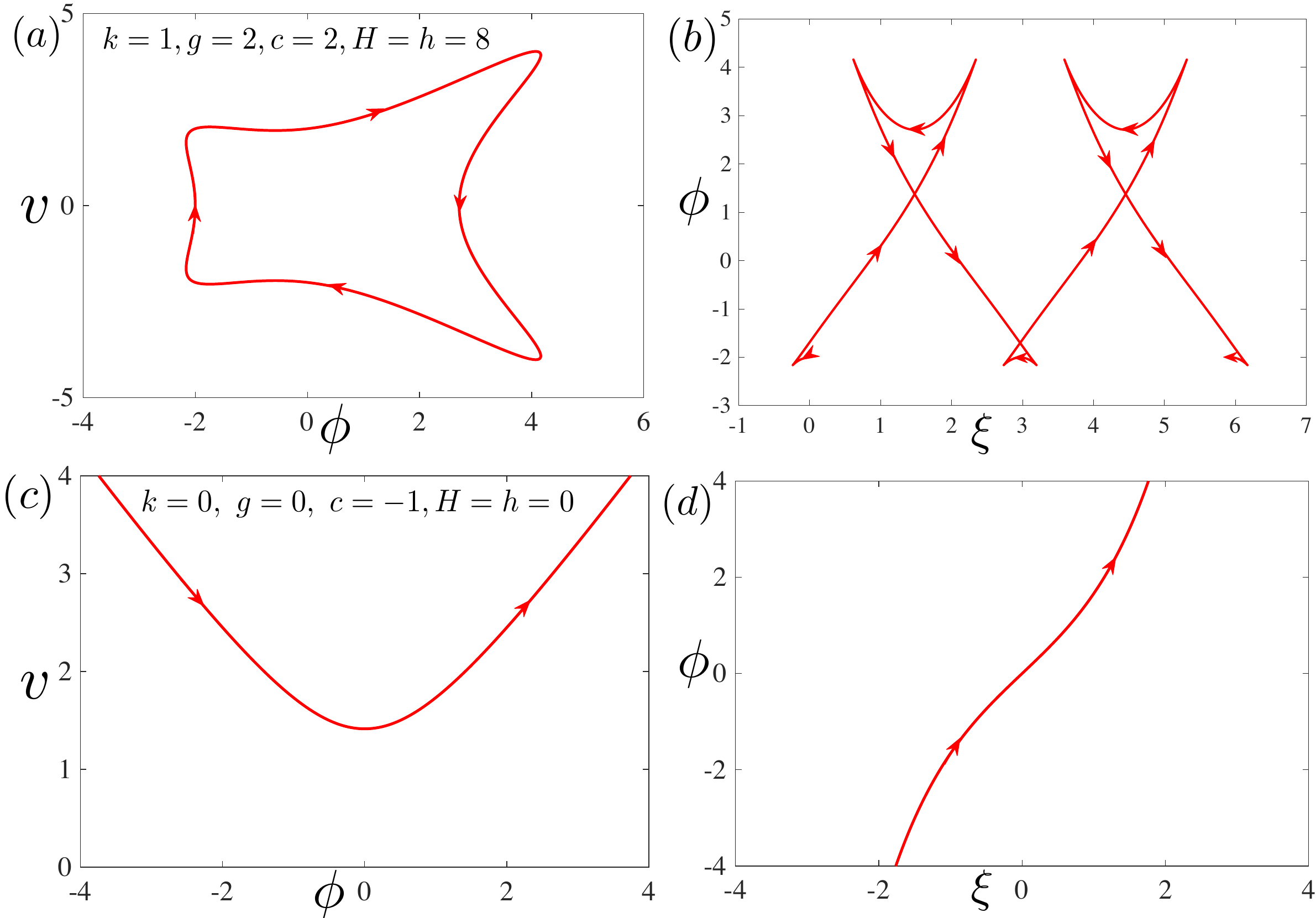}
\end{center}
\caption{Illustration of Theorem \ref{thm:levelsets}. Typical simple curves $\Gamma$ and the corresponding traveling wave solutions . A closed Loop  when $k>0$ in phase plane (a) corresponds to a multi-valued periodic traveling wave solution in (b). A simple curve when $k\leq 0$ in phase plane (c) corresponds to a traveling wave solution (d).  (a), (b) are for (i) in Theorem \ref{thm:levelsets} and (c), (d) are for (ii) in Theorem \ref{thm:levelsets}.  }
\label{fig:typicalweak}
\end{figure}
 
 We have the following theorem.
\begin{theorem}\label{thm:levelsets}
Consider System \eqref{eq:system2}. Assume $S=\{(\phi,v):H(\phi,v)=h\}$ is nonempty and does not pass through any critical point. 

(i). If $k>0$ for any traveling speed $c\in\mathbb{R}$, each simple curve in $S$ is a loop in the phase plane of System \eqref{eq:system2}.  There exists $h_0>0$ such that  for $c<0$ or  $c\ge 0$, $h>h_0$,  there is a loop $\Gamma\subset S$ that gives a multi-valued traveling wave solution of the mCH equation. If $T>0$, this solution is a periodic solution on $\mathbb{R}$. (See Figure \ref{fig:typicalweak} (a) and (b).) 

(ii). If $k\le 0$, for any traveling speed $c\in\mathbb{R}$, there exists $h_0$ such that when $h>h_0$,  there is a simple curve $\Gamma\subset S$  with both sides tending to infinity which corresponds to a single-valued traveling wave solution to the mCH equation globally on $\mathbb{R}$ and $\lim_{|\xi|\to\infty}|\phi(\xi)|=+\infty$. (See Figure \ref{fig:typicalweak} (c) and (d).) 
\end{theorem}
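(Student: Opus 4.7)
My plan is to handle the two parts separately, in each case analyzing the branches $\Gamma_\pm$ of \eqref{eq:Gammapm} together with Propositions \ref{pro:structureGam} and \ref{pro:leveltoSol}.

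For part (i), I would first show that when $k>0$ the Hamiltonian $H$ is coercive. Writing $H=\tfrac14(\phi^2-v^2)^2-\tfrac c2(\phi^2-v^2)+k\phi^2-g\phi$, if $|\phi^2-v^2|\to\infty$ the quartic dominates, and if $\phi^2-v^2$ stays bounded then $\phi^2$ and $v^2$ differ by $O(1)$ so $|\phi|\to\infty$ forces $k\phi^2\to\infty$. Hence every level set $S=\{H=h\}$ is bounded; together with the assumption that $S$ avoids critical points, Proposition \ref{pro:structureGam} (and the remark after it) implies every simple curve in $S$ is a closed loop. To extract a loop that crosses the hyperbola $\phi^2-v^2=c$ (and thus, by Proposition \ref{pro:leveltoSol}(i), yields a multi-valued traveling wave), I compute $H|_{\phi^2-v^2=c}=k\phi^2-g\phi-c^2/4$. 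For $c<0$ the hyperbola is parameterized by all $\phi\in\mathbb{R}$, so $H|_{\mathrm{hyp}}$ attains every value in $[-g^2/(4k)-c^2/4,\infty)$; for $c\geq 0$ only $|\phi|\geq\sqrt c$ is allowed, and I take $h_0$ to be the maximum of the (finitely many) critical values of $H$ together with the infimum of $H|_{\mathrm{hyp}}$ on the allowed region. For $h>h_0$ one gets a point of $S$ on the hyperbola, and local transversality (the tangent to the level set is vertical at such a point while the tangent to the hyperbola $(v,\phi)$ is non-vertical unless $v=0$, which is excluded by our choice of $h_0$) guarantees the loop carrying this point crosses genuinely into both regions $\phi^2-v^2>c$ and $\phi^2-v^2<c$. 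Periodicity when $T>0$ is then immediate from Proposition \ref{pro:leveltoSol}(i) since $\Gamma$ is a loop.

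For part (ii), I work with the branch $\Gamma_+:v^2=\phi^2-c+\sqrt{D(\phi)}$ with $D(\phi)=c^2+4h+4g\phi-4k\phi^2$. When $k\leq 0$ the leading coefficient $-4k$ is nonnegative, so choosing $h_0$ sufficiently large makes $D(\phi)>0$ for every $\phi\in\mathbb{R}$ (for $k<0$ this reduces to the discriminant inequality $g^2+k(c^2+4h)<0$; for $k=0$, $g=0$ one simply takes $h>-c^2/4$; the remaining subcase $k=0$, $g\neq 0$ is handled by taking instead the analogous unbounded component of $\Gamma_-$ on $\{|\phi|\geq\phi_2\}$ joined across $v=0$ to its mirror image in $P_-$). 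Enlarging $h_0$ further to exceed all critical values of $H$, the curve $\Gamma_+$ is a smooth simple curve with $v^2=\phi^2-c+\sqrt{D(\phi)}>0$ for every $\phi$, and since $\sqrt{D}>0$ strictly, $\Gamma_+$ does not meet the hyperbola. Its upper portion $\Gamma_+\cap P_+$ is then a graph $v=v(\phi)>0$ on $\mathbb{R}$ with $v\sim|\phi|$ as $|\phi|\to\infty$. Proposition \ref{pro:leveltoSol}(i) then yields a single-valued traveling wave solution with $\xi=\int v^{-1}\,d\phi$; since $v^{-1}\sim|\phi|^{-1}$ is non-integrable at infinity but locally integrable, $\xi$ ranges over all of $\mathbb{R}$ as $\phi$ does, and $|\phi(\xi)|\to\infty$ as $|\xi|\to\infty$.

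The main obstacle is bookkeeping. In (i), I must choose $h_0$ to strictly exceed the finite set of critical values of $H$ so that the level set is a smooth one-manifold and the connected component passing through a hyperbola crossing is an embedded loop rather than an exceptional homoclinic or figure-eight through a saddle, and I must carefully verify the transversality claim to rule out tangency to the hyperbola. In (ii) the subtlety is to confirm that $\Gamma_+\cap P_+$ is actually one of the simple curves described in Proposition \ref{pro:structureGam} with both ends at infinity, and is not closed off by joining $\Gamma_-$ on the hyperbola; the strict positivity of $D$ on $\mathbb{R}$ combined with Statement (ii) of Lemma \ref{lmm:arcs} rules out any such joining, leaving the two ends free to escape to $(\pm\infty,+\infty)$ as claimed.
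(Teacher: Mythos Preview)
Your proposal is correct and follows essentially the same route as the paper. The only notable cosmetic difference is in part (i): the paper bounds the level set directly by observing that $c^2-4(k\phi^2-g\phi-h)\geq 0$ forces $\phi$ into a compact interval when $k>0$, whereas you argue coercivity of $H$; both lead to the same conclusion, and from there the use of Proposition \ref{pro:structureGam}, the intersection with the hyperbola $\phi^2-v^2=c$, and the appeal to Proposition \ref{pro:leveltoSol} for (multi-)valuedness match the paper exactly, as does the case-by-case treatment of $\Gamma_\pm$ in part (ii). One small remark: you need not enlarge $h_0$ to exceed the critical values of $H$, since the hypothesis already excludes critical points from $S$; and in the $c<0$ subcase of (i) it is worth making explicit that nonemptiness of $S$ already forces $h\geq -g^2/(4k)-c^2/4$, so your computed range of $H|_{\mathrm{hyp}}$ indeed covers every admissible $h$.
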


\begin{proof}
$\Gamma_\pm$ are given by \eqref{eq:Gammapm}.
Set 
$$f(\phi):=c^2-4(k\phi^2-g\phi-h)=c^2-4k\Big(\phi-\frac{g}{2k}\Big)^2+\frac{g^2}{k}+4h.$$
If $S$ is nonempty, there exists $\phi\in\mathbb{R}$ such that
$$f(\phi)\geq0~\textrm{ and }~\phi^2-c+ \sqrt{f(\phi)}\geq0.$$

(i). For $k>0$,   any level set is bounded. 
Indeed,   $\phi$ must be bounded to ensure $f(\phi)\ge 0$. Hence, $(\phi,v)$ is defined in a bounded set of the phase plane.

By Proposition \ref{pro:structureGam}, for each simple curve inside a connected component of $S$ must be a loop. When $c<0$ or $h>\frac{4kc-c^2-4g\sqrt{c}}{4}~(c\ge 0)$, $\Gamma_+$ intersects with $\phi^2-v^2=c$ at a point with $v\neq 0$. Choose $\Gamma$ to be the simple curve that passes through this intersection. By Proposition \ref{pro:leveltoSol}, $\Gamma$ gives a traveling wave solution. If $T>0$, it is clearly a periodic traveling wave solution on $\mathbb{R}$. By Lemma \ref{lmm:arcs}, $\Gamma_-\cap \Gamma\neq \emptyset$. Since $\Gamma$ does not pass through any critical point, $\Gamma_-$ is below $\Gamma_+$ and the solution is multi-valued by the proof of Proposition  \ref{pro:leveltoSol}.

(ii). Consider that $k=0$. We find that 
\[
v^2=\phi^2-(c\pm \sqrt{c^2+4(g\phi+h)}).
\]

When $g=0$, if we choose $h$ such that  $c^2+4h>0$. $\Gamma_+$ and $\Gamma_-$ do not intersect. Also, there is a connected component in $\Gamma_+$ above $v=0$ which is a function graph and extends to infinity. Since $d\xi=\frac{1}{v}d\phi>0$ on this component.  $T$ is defined and $T\ge \infty$. Hence it is a solution defined on $\mathbb{R}$

When $g>0$, consider $h>\frac{(|c|-c)^2-(c^2+4g\sqrt{|c|})}{4}$. Then, 
$$v^2=\phi^2-c-\sqrt{c^2+4g\phi+4h}=0$$ 
has a root at $\phi_0$ satisfying $\phi_0^2-c>0$ and $\phi^2-c-\sqrt{c^2+4g\phi+4h}>0$ for all $\phi>\phi_0$. 
$$\Gamma:=\Big\{(\phi,v):\phi\geq\phi_0,v^2=\phi^2-c-\sqrt{c^2+4g\phi+4h}\Big\}$$
yields a single-valued solution defined on $\mathbb{R}$ and $\lim_{|\xi|\to\infty}|\phi(\xi)|=\infty$ by Proposition \ref{pro:leveltoSol}.

Consider $k<0$, it is clear that when $h>-\frac{g^2}{4k}$, the term in the square root is always positive and $\Gamma_+$,
\begin{align*}
v^2=\phi^2-c+\sqrt{c^2-4(k\phi^2-g\phi-h)},
\end{align*}
 does not cross $\phi^2-v^2=c$ and $\phi^2-c+\sqrt{c^2-4(k\phi^2-g\phi-h)}>0$ for all $\phi$. $\Gamma=\Gamma_+\cap P_+$ (or $\Gamma=\Gamma_+\cap P_-$) is a simple curve as described in Proposition \ref{pro:structureGam}, which gives a single-valued solution defined on $\mathbb{R}$ and $\lim_{\xi\to\infty}|\phi(\xi)|=\infty$ by Proposition \ref{pro:leveltoSol}.
\end{proof}

Besides the typical solutions mentioned in Theorem \ref{thm:levelsets}, there are also some other different kinds of solutions. Let us make some complementary discussions. 
\begin{enumerate}
\item For $k>0$, the typical solutions obtained in Theorem \ref{thm:levelsets} are  multi-valued periodic traveling wave solutions (see Figure \ref{fig:typicalweak} (b)). 
There are also single-valued periodic traveling wave solutions to the mCH equation.
For example, if $k=1$, $g=2$, $c=4$, the level set $H=-4.1$ does not intersect $\phi^2-v^2=c$ and it gives a classical periodic solitary wave solution to the mCH equation. 
\item  For $k\leq 0$, there may exist a simple curve $\Gamma$ that is a loop, which gives a periodic solution to the mCH equation.  See Figure \ref{fig:knegative} (a) (single-valued) and Figure \ref{fig:kzero} (d) (multi-valued). 

\item The special case $k=0$ and $g=0$ is very important for our discussion in Section \ref{sec:peakonsolutions}.
\begin{align}
H=\frac{1}{4}(\phi^2-v^2)^2-\frac{1}{2}c(\phi^2-v^2).\label{eq:k0g0}
\end{align}
The level sets are given by
\[
\phi^2-v^2=A,~~ \textrm{ for some }~~A\in\mathbb{R}.
\]
If $A<0$, the two curves of $\phi^2-v^2=A$ corresponding to two strictly monotonic solutions to the mCH equation (see Figure \ref{fig:typicalweak} (c),(d)):
$$\phi(\xi)=\pm\sqrt{-A}\sinh(\xi+d)~\textrm{ for some constant }~d\in\mathbb{R}.$$
If $A>0$, we solve that 
\begin{gather*}
\phi(\xi)=\pm\sqrt{A}\cosh(\xi+d)~\textrm{ for some constant }~d\in\mathbb{R}.
\end{gather*}
The case $A=0$ is very interesting as it passes through the critical point $(0, 0)$ which is a saddle point when the traveling speed $c\neq 0$. 
The time spent near $(\phi,v)=(0,0)$ now is infinity.  Later in Section \ref{sec:peakonsolutions}, we obtain peakon weak solutions (see Remark \ref{remark:specialcase}) by the patching technic.
\end{enumerate}

\subsection{Patched  traveling peakon weak solutions}\label{sec:patchedweaktravel}

By Theorem \ref{thm:levelsets}, the typical solutions  for $k>0$ are multi-valued periodic solutions while the typical solutions for $k\le 0$ are single-valued solutions which tend to infinity  as $|\xi|\to\infty$.  In this subsection, we introduction the patching technic to obtain patched bounded single-valued weak solutions from these two types of solutions. 

First, we present a patching criterion to construct a patched periodic weak solution by connecting smooth solutions given by  level sets of $H$.

\begin{figure}[H]
\begin{center}
\includegraphics[width=0.7\textwidth]{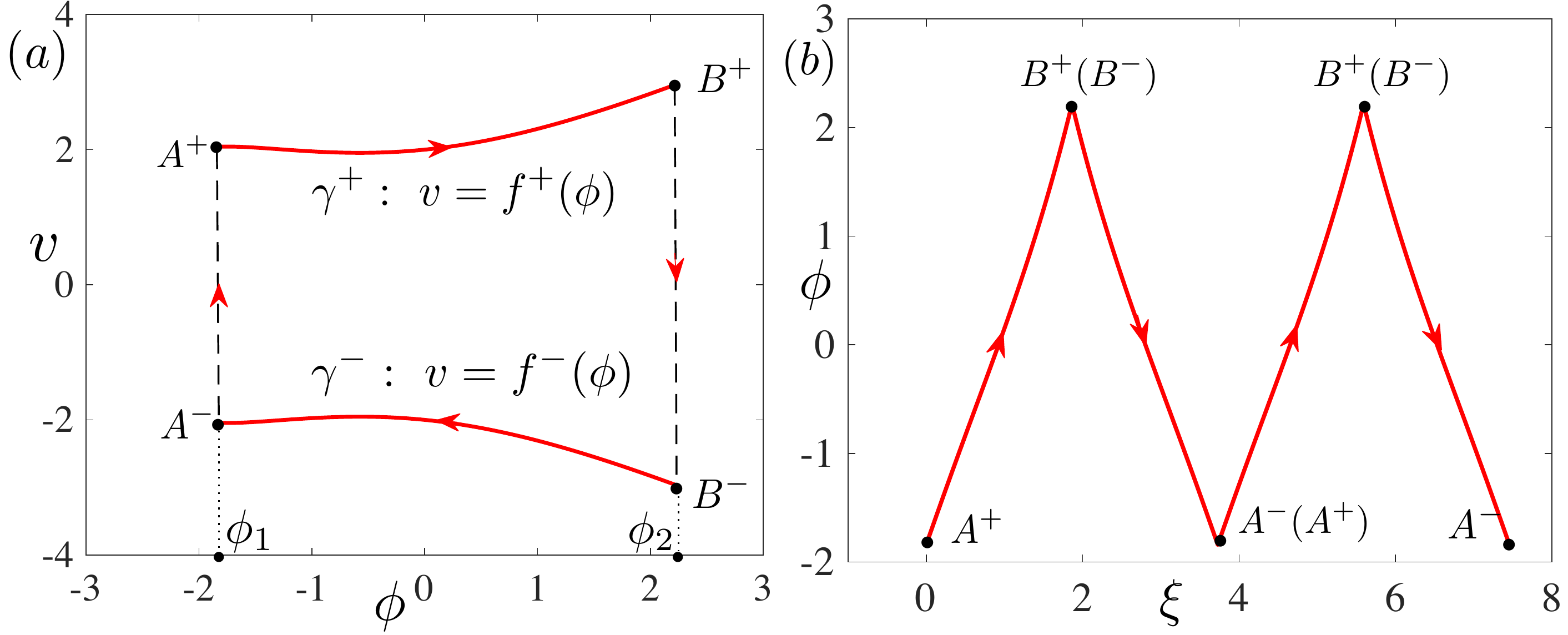}
\end{center}
\caption{A sketch to illustrate  Proposition \ref{pro:patching1}.  } 
\label{fig:proposition3.4}
\end{figure}
\begin{proposition}\label{pro:patching1}
Fix any $k\in\mathbb{R}, c\in\mathbb{R}$ and $g\ge 0$. Suppose  curves
\begin{gather*}
\gamma^+\subset \{(\phi, v): H=h,~v\geq0\},~\gamma^-\subset \{(\phi, v): H=h',~v\leq0\},\textrm{ for some }h\textrm{ and }h',
\end{gather*}
are function graphs of two continuous functions $v=f^+(\phi)$, $v=f^-(\phi)$ for $\phi\in [\phi_1, \phi_2]$ ($\phi_1<\phi_2$) (see Figure \ref{fig:proposition3.4} (a)).  Denote points $A^+: (\phi_1, v_1^+)=(\phi_1, f^+(\phi_1))$ and $B^+:(\phi_2, v_2^+)=(\phi_2, f^+(\phi_2))$, $A^-:(\phi_1, v_1^-)=(\phi_1, f^-(\phi_1))$ and $B^-:(\phi_2, v_2^-)=(\phi_2, f^-(\phi_2))$. 

If $\gamma^+$ and $\gamma^-$ do not pass through critical points and the end points $A^+,B^+,A^-$ and $B^-$ satisfy:
\begin{gather}\label{eq:conditionsforcat}
(\phi_i)^2-\frac{1}{3}\Big((v_i^+)^2+v_i^+v_i^-+(v_i^-)^2\Big)=c,~i=1,2,
\end{gather}
then $\Gamma=\gamma^+\cup B^+B^-\cup\gamma^-\cup A^-A^+$ gives a patched periodic traveling peakon weak solution with wave speed $c$ (see Figure \ref{fig:proposition3.4} (b)), where $PQ$ represents the line segment between two points $P$ and $Q$. 
\end{proposition}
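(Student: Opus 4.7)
The plan is to build a piecewise-smooth, single-valued continuous $T$-periodic function $\phi(\xi)$ from $\gamma^+$ and $\gamma^-$, set $u(x,t)=\phi(x-ct)$, and then verify the hypotheses of Theorem~\ref{thm:jumpcondition}. Geometrically, $\gamma^+$ (where $v\ge 0$) will parametrize the ``rising'' half of a period of $\phi$, and $\gamma^-$ (where $v\le 0$) the ``falling'' half; the two vertical segments $B^+B^-$ and $A^-A^+$ in $\Gamma$ encode the prescribed jumps $v_2^+\mapsto v_2^-$ and $v_1^-\mapsto v_1^+$ in $\phi'$ at the two peakons per period.

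Since $\gamma^\pm$ are graphs of the continuous $f^\pm$ on $[\phi_1,\phi_2]$ and contain no critical points, the integrals
\[
\xi^+_*:=\int_{\phi_1}^{\phi_2}\frac{d\phi}{f^+(\phi)},\qquad \xi^-_*:=\int_{\phi_2}^{\phi_1}\frac{d\phi}{f^-(\phi)}
\]
are finite and strictly positive (Lemma~\ref{lmm:timenearzero} handles any endpoints where $v=0$). Inverting $\xi\mapsto\phi$ on each half gives smooth branches whose values of $\phi$ agree at $\phi_1$ and $\phi_2$, producing a continuous $T$-periodic $\phi$ with $T=\xi^+_*+\xi^-_*$, smooth on $(0,\xi^+_*)$ and $(\xi^+_*,T)$. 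Define $u(x,t):=\phi(x-ct)$. Because $\gamma^\pm$ lie on level sets of the first integral $H$, on each smooth piece $(\phi,v)$ satisfies the Hamiltonian system~\eqref{eq:system} and hence $\phi$ solves the traveling-wave ODE~\eqref{eq:travellingwaveODE}; so $u$ is a classical solution of~\eqref{mCH} off the peakon trajectories $x_1(t)=ct$ and $x_2(t)=\xi^+_*+ct$, which satisfy $x_i'(t)=c$.

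It remains to verify the two jump conditions in~\eqref{eq:jumpcd1} at each peakon. The first reads
\[
x_i'(t)=\phi_i^2-\tfrac{1}{3}\bigl[(v^i_\ell)^2+v^i_\ell v^i_r+(v^i_r)^2\bigr],
\]
which is precisely the assumed identity~\eqref{eq:conditionsforcat} once one matches $\{v^i_\ell,v^i_r\}$ to $\{v_i^+,v_i^-\}$ (the right-hand side is symmetric in the two derivatives, so it does not matter which is ``left''). For the second, solve~\eqref{eq:travellingwaveODE} for $\phi''$ to obtain
\[
m=\phi-\phi''=\frac{2k\phi-g}{c-\phi^2+v^2},
\]
and therefore
\[
A:=m\bigl(\phi^2-v^2-c\bigr)=g-2k\phi,
\]
a function of $\phi$ alone. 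Since $\phi$ is continuous across each peakon, $A^i_\ell=g-2k\phi_i=A^i_r$; moreover $v^i_\ell-v^i_r$ is constant in $t$ (pure traveling wave), so $\tfrac{d}{dt}(v^i_\ell-v^i_r)=0=A^i_\ell-A^i_r$ and the second condition holds automatically. Theorem~\ref{thm:jumpcondition} then concludes that $u$ is a weak solution of~\eqref{mCH}.

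The slick point, and the only place where a computation is really needed, is the collapse $A=g-2k\phi$. It depends only on~\eqref{eq:travellingwaveODE} and not on which level set a branch sits on, which is what permits $\gamma^+$ and $\gamma^-$ to lie on genuinely different level sets $h\ne h'$. All other items are routine: integrability of $d\phi/v$ at the endpoints $\phi_1,\phi_2$ follows from Lemma~\ref{lmm:timenearzero} under the no-critical-point hypothesis, and continuity of $\phi$ across the peakons is built into the construction. Thus the essential content of the proposition is that~\eqref{eq:conditionsforcat} is exactly the first half of the jump condition, while the second half comes for free from the traveling-wave structure.
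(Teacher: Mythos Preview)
Your proof is correct and follows essentially the same approach as the paper: construct the periodic $\phi$ by integrating $d\phi/v$ along $\gamma^+$ and then $\gamma^-$, observe that each smooth piece solves the mCH equation classically, and verify the jump conditions of Theorem~\ref{thm:jumpcondition} at the two peakon trajectories $x_i(t)=c t+\mathrm{const}$. You are in fact more explicit than the paper on one point: where the paper simply writes ``one can easily verify that condition in~\eqref{eq:jumpcd1} is satisfied,'' you actually carry out the computation $m(\phi^2-v^2-c)=g-2k\phi$ from~\eqref{eq:travellingwaveODE}, showing that $A^i_\ell=A^i_r$ depends only on the common value $\phi_i$ and hence the second jump condition holds automatically.
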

\begin{proof}

Let $\Gamma$ be oriented clockwise. We use $\Gamma(A^+, (\phi, v))$ to mean the part of $\Gamma$ from $A^+$ to $(\phi, v)$ with the same orientation as $\Gamma$. The relation
\begin{gather}
\xi=\int_{\Gamma(A^+,(\phi, v))} \frac{1}{v}d\phi
\end{gather}
then gives a mapping $\xi\mapsto \phi(\xi)$. First note that $\int_{A^-A^+}\frac{1}{v}d\phi=0=\int_{B^+B^-}\frac{1}{v}d\phi$. On $\gamma^+$,  since $1/v>0$ and $d\phi>0$, we have $d\xi>0$. Similarly, $d\xi>0$ on $\gamma^-$.  Hence, $\phi(\xi)$ is a single-valued mapping.
Since the arcs do not pass trough critical points, $T$ defined by \eqref{eq:period} satisfies
\[
T=\int_{\Gamma}\frac{1}{v}d\phi\in (0,\infty).
\]
Hence, $\phi(\xi)$ is a periodic function. Denote $T_1=\int_{\gamma^+}\frac{1}{v}d\phi$ and consider 
\begin{align*}
u(x, t)=
\begin{cases}
u_1(x-ct)=\phi(x-ct),~~x-ct=\xi\in(0,T_1);\\
u_2(x-ct)=\phi(x-ct),~~x-ct=\xi\in(T_1,T),
\end{cases}
\end{align*}
and $m_i=u_i-\partial_{xx}u_i$ for $i=1,2.$
 The functions $u_1$ and $u_2$  solve the mCH equation in the classical sense respectively for $\xi\in (0,T_1)$ and $\xi\in (T_1,T)$. We only have to verify that the jump conditions are satisfied at the joints $A^+A^-$ and $B^-B^+$, corresponding to $x-ct=\xi=0$ and $x-ct=\xi=T_1$, or $(x_1(t),t), ~(x_2(t),t)$ where
\[
x_1(t)=ct,~~x_2(t)=T_1+ct.
\]
From the conditions \eqref{eq:conditionsforcat}, one can easily verify that condition in \eqref{eq:jumpcd1} is satisfied.
Hence, the patched function is a periodic  traveling peakon weak solution.
\end{proof}

One possible way to construct the arcs $\gamma^+$ and $\gamma^-$ in Proposition \ref{pro:patching1} is to cut curves from the same level set $H=h$. Below we show that the hyperbola $\phi^2-\frac{1}{3}v^2=c$ can be used to cut these two arcs from the level set $H=h$ so that the condition for endpoints \eqref{eq:conditionsforcat} is satisfied automatically. 
\begin{corollary}\label{cor:patching2}
Fix $k\in\mathbb{R}$ and $c\in\mathbb{R}$. Suppose $\gamma\subset \{(\phi, v): H=h\}\cap P_+$ is the graph of a function $v=f(\phi)$ that does not pass through any critical point. Assume the endpoints of $\gamma$, denoted as $A^+$ and $B^+$, are either on the hyperbola $\phi^2-\frac{1}{3}v^2=c$ or on the $v=0$ axis. Let $\tilde{\gamma}$ be the reflection of $\gamma$ about $v=0$ with corresponding endpoints $A^-, B^-$. Then the loop 
$\Gamma=\gamma\cup B^+B^-\cup\tilde{\gamma}\cup A^-A^+ $ gives a patched periodic traveling peakon weak solution of the mCH equation with wave speed $c$.
\end{corollary}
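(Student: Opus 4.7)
The plan is to reduce to Proposition \ref{pro:patching1} by setting $\gamma^+ = \gamma$ and $\gamma^- = \tilde{\gamma}$, with $h' = h$. Because the Hamiltonian in \eqref{eq:Hamiltonian} depends on $v$ only through $v^2$, it is invariant under $(\phi,v) \mapsto (\phi,-v)$; this immediately gives $\tilde{\gamma} \subset \{H = h\}$, represented as the graph $v = -f(\phi)$ on $[\phi_1,\phi_2]$, so the structural assumptions of Proposition \ref{pro:patching1} hold.

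The decisive computation is at the endpoints. The reflection forces $v_i^- = -v_i^+$ at each $\phi_i$, whence
\begin{align*}
(v_i^+)^2 + v_i^+ v_i^- + (v_i^-)^2 = (v_i^+)^2 - (v_i^+)^2 + (v_i^+)^2 = (v_i^+)^2,
\end{align*}
and the patching condition \eqref{eq:conditionsforcat} collapses to $\phi_i^2 - \tfrac{1}{3}(v_i^+)^2 = c$, which is exactly the equation of the hyperbola $\phi^2 - \tfrac{1}{3}v^2 = c$. This is satisfied by hypothesis at endpoints lying on that hyperbola, and as a degenerate subcase also when $v_i^+ = 0$ with $\phi_i^2 = c$.

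The remaining subcase is an endpoint with $v_i^+ = 0$ and $\phi_i^2 \neq c$. At such a point $(\phi_i, 0)$ is not a critical point of System \eqref{eq:system2}, so by Lemma \ref{lmm:timenearzero} the curve $\gamma$ meets the $v=0$ axis perpendicularly with finite trip-time integral $\int \frac{1}{v}\,d\phi$; combined with the reflection symmetry of $H$ and uniqueness for the phase-plane ODE \eqref{eq:system} away from critical points, this shows that $\gamma$ and $\tilde{\gamma}$ glue into a smooth arc in the phase plane there. Consequently the segment $A^+A^-$ (resp.\ $B^+B^-$) degenerates to a single point, the corresponding $\phi(\xi)$ is classical across the associated value of $\xi$, and condition \eqref{eq:continu} of Theorem \ref{thm:jumpcondition} holds vacuously at that endpoint. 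I expect the main subtlety to be the mixed configuration in which one endpoint is a true peakon joint on the hyperbola and the other a smooth $v = 0$ crossing; this is handled uniformly by applying the two verifications endpoint by endpoint, after which Proposition \ref{pro:patching1} delivers the patched periodic traveling peakon weak solution with wave speed $c$.
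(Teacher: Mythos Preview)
Your proposal is correct and follows essentially the same approach as the paper: exploit the $v\mapsto -v$ symmetry of $H$ so that $v_i^-=-v_i^+$ collapses \eqref{eq:conditionsforcat} to the hyperbola equation, and treat a $v=0$ endpoint separately as a smooth crossing where $A^+=A^-$ and no jump occurs. The only minor wording issue is your closing sentence, where you say Proposition~\ref{pro:patching1} ``delivers'' the result in the mixed case; strictly speaking that proposition requires \eqref{eq:conditionsforcat} at \emph{both} endpoints, so in the mixed configuration you are re-running its proof with one joint removed rather than invoking it verbatim---but your earlier discussion makes clear you understand this, and the paper handles it the same way (``the jump only happens at $B^+B^-$ and the jump conditions can be verified similarly as in Proposition~\ref{pro:patching1}'').
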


\begin{proof}
Assume $A^+=(\phi_1,v_1^+)$ and $B^+=(\phi_2,v_2^+)$. Due to the symmetric of $\gamma$ and $\tilde{\gamma}$, we have $A^-=(\phi_1,-v_1^+)$ and $B^-=(\phi_2,-v_2^+)$.
If $A^+,B^+$ are on the hyperbola $\phi^2-\frac{1}{3}v^2=c$, we know $A^-$ and $B^-$ are also on the hyperbola.  At this time, \eqref{eq:conditionsforcat} is satisfied and the result follows.

Consider that some endpoint, for example $A^+$, is on $v=0$. Then, $A^+=A^-$. Hence,  $\gamma\cup\tilde{\gamma}$ is actually a connected arc of the level set,  which forms a strong solution. The jump only happens at $B^+B^-$ and the jump conditions can be verified similarly as in Proposition \ref{pro:patching1}.
\end{proof}

Now,   for any traveling speed $c\in\mathbb{R}$ and any $k\in\mathbb{R}$, we prove the existence of the patched periodic traveling peakon weak solutions corresponding to Corollary \ref{cor:patching2}. And then give a figure to illustrate it (See Figure  \ref{fig:solutions}).
\begin{theorem}\label{thm:travelingweak}
 For any fixed traveling speed $c\in\mathbb{R}$ and any fixed $k\in\mathbb{R}$, there is a $h_0$ (depending on $k,c$ and $g$) such that for any $h> h_0$, there is an arc in the level set $H=h$  satisfying the conditions in Corollary \ref{cor:patching2}. Consequently, there are patched single-valued periodic traveling peakon weak solutions with speed $c$.
\end{theorem}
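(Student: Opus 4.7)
The plan is to exhibit, for each $h$ above a suitable threshold $h_0$, an arc on the outer branch $\Gamma_+$ in $P_+$ whose two endpoints lie on the patching hyperbola $\phi^2-\tfrac{1}{3}v^2=c$, so that Corollary \ref{cor:patching2} applies. Using \eqref{eq:Gammapm}, this outer branch in the upper half plane is the graph of the continuous function
\[
F(\phi)=\sqrt{\phi^2-c+\sqrt{f(\phi)}},\qquad f(\phi):=c^2-4(k\phi^2-g\phi-h),
\]
defined on the set where $f\ge 0$ and $\phi^2-c+\sqrt{f}\ge 0$.

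First choose $h_0$ strictly larger than the (finite) supremum of $H$ over the finitely many critical points of System \eqref{eq:system2}; then no level set $\{H=h\}$ with $h>h_0$ touches a critical point, and any simple arc inside it automatically satisfies the no-critical-point hypothesis in Corollary \ref{cor:patching2}. Enlarging $h_0$ further, one checks directly from the formula above that for all $h>h_0$, the domain of $F$ contains $0$ in its interior, and $F(0)=\sqrt{-c+\sqrt{c^2+4h}}>0$. Thus $\Gamma_+\cap P_+$ always contains a high point over $\phi=0$.

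The heart of the argument is to intersect this graph with $\phi^2-\tfrac{1}{3}v^2=c$. Substituting $v=F(\phi)$ reduces this to finding zeros of
\[
G(\phi):=2(\phi^2-c)-\sqrt{f(\phi)}.
\]
At $\phi=0$, $G(0)=-2c-\sqrt{c^2+4h}<0$ for $h$ large. At any finite boundary point $\phi^\pm$ of the domain of $F$ (which occur when $k>0$, or when $k=0$ and $g>0$), one has $f(\phi^\pm)=0$, hence $G(\phi^\pm)=2((\phi^\pm)^2-c)$, and since $|\phi^\pm|\to\infty$ as $h\to\infty$, this is strictly positive for $h>h_0$. When the domain of $F$ is unbounded (the cases $k<0$ or $k=g=0$), $\sqrt{f(\phi)}$ grows at most linearly in $|\phi|$ while $2\phi^2$ dominates, so again $G(\phi)\to+\infty$ at the ends of the domain. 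The intermediate value theorem then yields roots $\phi_1<0<\phi_2$; select the pair closest to $0$. At these roots $\sqrt{f(\phi_i)}=2(\phi_i^2-c)\ge 0$ forces $\phi_i^2\ge c$, and $F(\phi_i)^2=3(\phi_i^2-c)>0$, so the endpoints lie on the hyperbola $\phi^2-v^2/3=c$ strictly above $v=0$. The arc $\gamma=\{(\phi,F(\phi)):\phi_1\le\phi\le\phi_2\}$ therefore satisfies every hypothesis of Corollary \ref{cor:patching2}, which produces the required patched single-valued periodic traveling peakon weak solution of speed $c$.

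The main obstacle is that the shape of the domain of $F$ changes with the sign of $k$ (compact interval for $k>0$, half-line or all of $\mathbb{R}$ otherwise), so one must check the boundary asymptotics of $G$ in several regimes; the payoff is that in every case $G\to+\infty$ at the boundary while $G(0)\to-\infty$ as $h\to\infty$, giving a uniform IVT argument.
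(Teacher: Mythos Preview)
Your proposal is correct and follows essentially the same route as the paper's proof: both reduce the question to finding two sign changes of $G(\phi)=2(\phi^2-c)-\sqrt{f(\phi)}$ on the domain of the upper $\Gamma_+$ graph, via the same case split on the sign of $k$ (and of $g$ when $k=0$) to show $G(0)<0$ while $G$ is positive at the endpoints of the domain. Your extra step of pushing $h_0$ above the critical values of $H$ to secure the no-critical-point hypothesis of Corollary~\ref{cor:patching2} is a welcome clarification that the paper leaves implicit.
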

\begin{proof} 

We just consider the following function graph $v=f_1(\phi)$ in  $\Gamma_+$
\begin{gather*}
v=\sqrt{\phi^2-c+\sqrt{c^2-4(k\phi^2-g\phi-h)}}.
\end{gather*}
Consider \[
D=\{\phi\in\mathbb{R}: c^2-4(k\phi^2-g\phi-h)\ge 0\}.
\] 
When $k=0$, $D$ is an interval of $\mathbb{R}$. When $k<0$ and $h>\frac{-g^2-kc^2}{4k}$, we have $D=\mathbb{R}$. When $k>0$ and $h>\frac{-g^2-kc^2}{4k}$, $D$ is an interval of $\mathbb{R}$. Next, we assume $h$ big enough such that $D$ is an interval  of $\mathbb{R}$.

The domain of $v=f_1$ is given by \[
D_{f_1}=\{\phi\in D: \phi^2-c+\sqrt{c^2-4(k\phi^2-g\phi-h)}\ge 0 \}.
\]
For each $\tilde{\phi}$ in $\{\phi: \phi^2\le c\}$, we can find a $h_1(\tilde{\phi})$ such that 
$$\tilde{\phi}^2-c+\sqrt{c^2-4(k\tilde{\phi}^2-g\tilde{\phi}-h_1(\tilde{\phi}))}\ge 0.$$
Since $\{\phi: \phi^2\le c\}$ is a compact set, there exits $h_1>0$ such that for all $h>h_1$, $D_{f_1}=D$ is an interval  of $\mathbb{R}$.  

We now show that there exists $h_0\ge h_1$ so that whenever $h>h_0$, $v=f_1(\phi)$ intersects with $\phi^2-\frac{1}{3}v^2=c$ at two points $\phi_1, \phi_2$ and $[\phi_1, \phi_2]\subset D$.

Consider the following function
\[
f(\phi)=3(\phi^2-c)-(\phi^2-c+\sqrt{c^2-4(k\phi^2-g\phi-h)}).
\]

We claim that there is $h_2$ such that $\forall h\ge h_2$, $\exists M_1>0$, $[-M_1, M_1]\subset D$, $f(\pm M_1)>0$.
In the case $k<0$, when $h_0$ is large enough, $D=\mathbb{R}$, and the claim is clear since $f(\phi)$ grows like $2\phi^2$. In the case $k=0$, if $g=0$, $D=\mathbb{R}$. If $g>0$, $\phi\ge \phi_c=-\frac{c^2+4h}{4g}$. It is clear that $f(\phi_c)$ grows like $2h^2$. Hence, when $h$ is large enough, picking $M_1=|\phi_c|$ suffices. If $k>0$, the critical values are $\phi_c^{\pm}=(g\pm\sqrt{g^2+4h+c^2})/2k$. As $h$ increases, $D=[\phi_c^-, \phi_c^+]$ increases and $f(\phi_c^{\pm})$ increases to infinity. The claim is also true if we choose $M_1=\min(|\phi_c^-|, |\phi_c^+|)$.

Now, we compute
\[
f(0)=-2c-\sqrt{c^2+4h}<0.
\]
When $c>0$, we have $f(0)<0$. When $c<0$, $h> \frac{3c^2}{4}$ implies $f(0)<0$.  

Choose
\[
h_0=\max\Big\{h_1, h_2, \frac{3c^2}{4}\Big\}.
\]
Whenever $h>h_0$, $v=f_1(\phi)$ intersects with $\phi^2-\frac{1}{3}v^2=c$ at two points $\phi_1, \phi_2$ and $[\phi_1, \phi_2]\subset D$.

The proof then is complete.
\end{proof}

\begin{remark}
In the proof, we only constructed the cases when the two endpoints are on the hyperbola $\phi^2-\frac{1}{3}v^2=c$.  Actually, the other situation that one endpoint is on $v=0$ in Corollary \ref{cor:patching2} can also happen. See Figure \ref{fig:solutions} (a), (b)  for an example.  And by the patching technic, we can also obtain peakon weak solutions that are not periodic (see Remark \ref{rmk:solitonwave} and Section \ref{sec:peakonsolutions}).

\end{remark}

In Figure \ref{fig:solutions},  we plot two multi-valued traveling wave solutions and the corresponding patched periodic traveling peakon weak solutions for the cases  $k=1$, $c=2$ and $g=2$.

\begin{figure}[H]
\begin{center}
\includegraphics[width=0.8\textwidth]{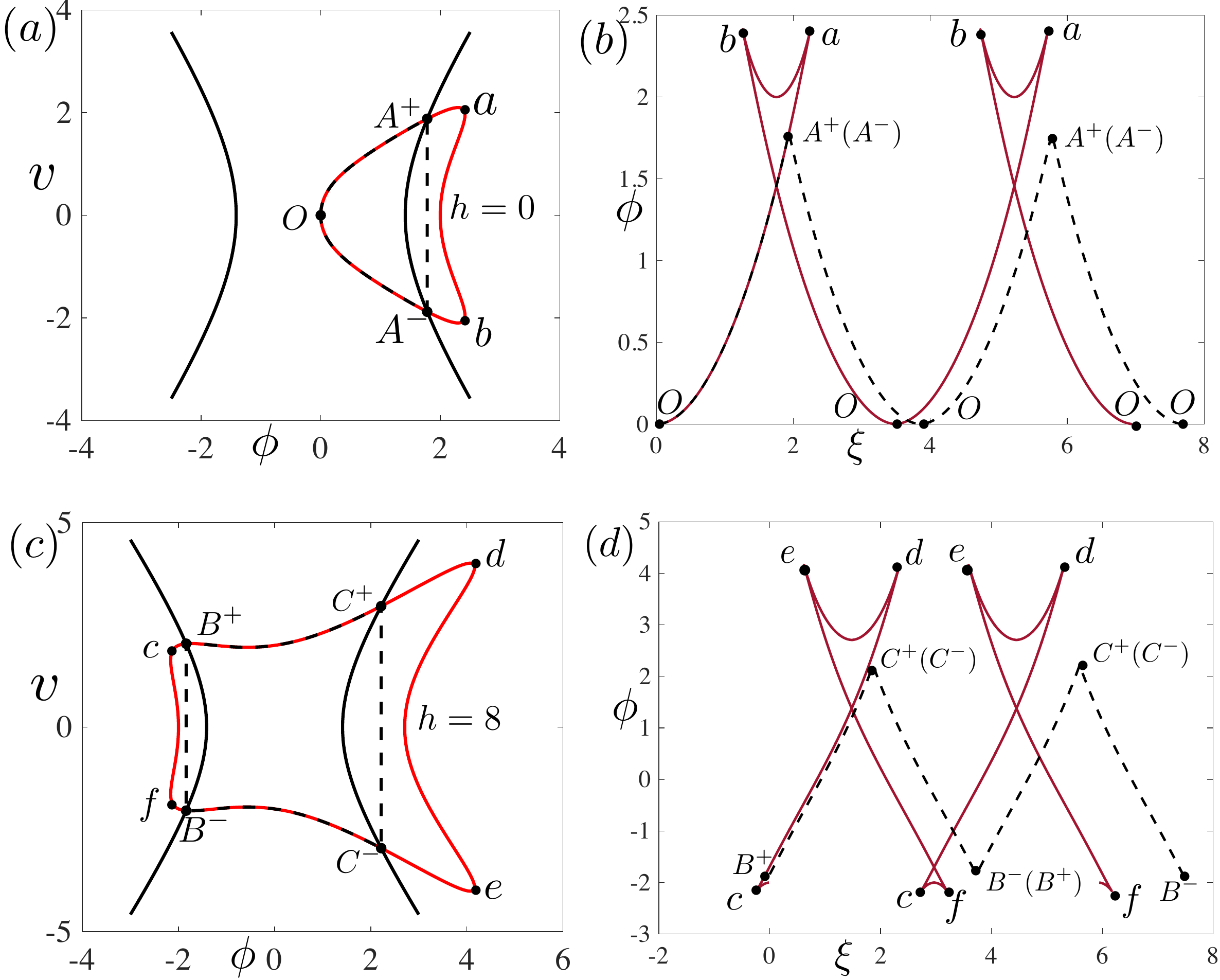}
\end{center}
\caption{In phase plane (a), the closed loop $\wideparen{OabO}$ (solid line) corresponds to multi-valued periodic  traveling wave solution $\wideparen{OabO}$ (solid line) in (b). This multi-valued solution was also given in \cite[Section 3]{Matsuno}, where it was called symmetric singular soliton. By the patching technic, we truncate the closed loop in the phase plane (a) at $A^+$ and $A^-$ and obtain a new closed loop  $\wideparen{OA^+A^-O}$ (dashed line) corresponding to a patched single-valued periodic traveling wave weak solution $\wideparen{OA^+A^-O}$ with a peakon $A^+(A^-)$ (dashed line) in (b). The points $A^+$ and $A^-$ are obtained by the interaction between the level set $H=0$ and the hyperbola $\phi^2-\frac{1}{3}v^2=c=2$ in phase plane. In (b), they glue together and form a peakon $A^+(A^-)$ where the jump condition \eqref{eq:jumpcd1} is satisfied (see Corollary \ref{cor:patching2} for more details).  The case for (c) and (d) is similar, where $B^+$, $B^-$ glue together and form a peakon at bottom while $C^+$, $C^-$ glue together and form a peakon at the top.} 
\label{fig:solutions}
\end{figure}

\begin{remark}\label{rmk:solitonwave}
Note that we can also patch solutions corresponding to different speeds. For example, consider the level set $H=0$ with the parameter $g=0$. Then, $(0,0)\in \{(\phi,v):H=0\}$. As in Figure \ref{fig:solutions} (b) (with $g=0$), assume $(\phi(\xi_0),v(\xi_0))=(0,0)$. Consider one period in $(\xi_0,\xi_0+T)$. Construct  $u$  as
\begin{align}\label{eq:patchdifspeed}
u(x,t)=\begin{cases}
0 ~\textrm{ for }~x-ct=\xi\notin [\xi_0,\xi_0+T],\\
\phi(\xi) ~\textrm{ for }~x-ct=\xi\in [\xi_0,\xi_0+T].
\end{cases}
\end{align}
Then, the jump condition \eqref{eq:continu} can be verified easily at the patching areas: $\xi_0$ and  $\xi_0+T$. Hence, $u$ is a weak solution to \eqref{mCH}. Based on this kind of peakon solutions,  we can construct multi-peakon solutions with different speeds.  For example, assume $u_1$ and $u_2$ are constructed as \eqref{eq:patchdifspeed} with different traveling speeds $c_1$, $c_2$ and periods $T_1$, $T_2$:
\begin{align*}
u_1(x,t)=\begin{cases}
0,~x-c_1t=\xi\notin [\xi_1,\xi_1+T_1],\\
\phi_1(\xi),~x-c_1t=\xi\in [\xi_1,\xi_1+T_1],
\end{cases}
u_2(x,t)=\begin{cases}
0,~x-c_2t=\xi\notin [\xi_2,\xi_2+T_2],\\
\phi_2(\xi),~x-c_2t=\xi\in [\xi_2,\xi_2+T_2].
\end{cases}
\end{align*}
For  $\xi_1+T_1< \xi_2$, we can construct $2$-peakon weak solution as
\begin{align*}
u(x,t)=\begin{cases}
u_1(x,t),~x-c_1t\in [\xi_1,\xi_1+T_1],\\
u_2(x,t),~x-c_2t\in [\xi_2,\xi_2+T_2],\\
0,~\textrm{otherwise}.
\end{cases}
\end{align*}
If $c_1<c_2$, the left peakon stays far behind the right one. If $c_1>c_2$, the left peakon catches up the right one in finite time. Whether the left peakon can go beyond the right one without changing shape or not, which is the properties of solitons, is left for future study.
\end{remark}

\section{A class of peakon weak solutions for $k\le0$}\label{sec:peakonsolutions}

In this section, we use the fundamental solution $G(x)$ (defined by \eqref{eq:fundamentalsolutionG1} and \eqref{eq:fundamental2}) of Helmholtz operator $1-\partial_{xx}$ to give some  peakon weak solutions to the mCH equation \eqref{mCH} (with $k\leq0$) in $\mathbb{T}_\ell$. And we show that some of these peakon weak solutions also can be obtained by the patching technic.

\subsection{Peakon weak solutions for $k=0$}
In this subsection, we consider the dispersionless mCH equation (Equation \eqref{mCH} with $k=0$).
When considering the whole space $\mathbb{R}$ (or $\ell=+\infty$), the dispersionless mCH equation has  peakon weak solutions of the form \eqref{solitarywave}. Here, we present a more general result about this kind of peakon weak solutions for all $0<\ell\leq +\infty$.
\begin{proposition}\label{pro:periodicpeakon}
Assume $0<\ell\leq+\infty$ and $p\neq 0$. Let
\begin{gather*}
c=\left\{
\begin{split}
&\frac{1}{4}p^2\Big[\coth^2(\frac{\ell}{2})-\frac{1}{3}\Big]~\textrm{ if }~\ell<+\infty;\\
&\frac{1}{6}p^2~\textrm{ if }~\ell=+\infty.
\end{split}
\right.
\end{gather*}
Then, $u(x, t)=pG(x-ct)$  is a one peakon weak solution to the dispersionless mCH equation \eqref{mCH} ($k=0$).
\end{proposition}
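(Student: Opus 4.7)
The plan is to view $u(x,t) = pG(x-ct)$ as a piecewise smooth function with a single jump in $u_x$ along the curve $x_1(t) = ct$ and to verify the jump conditions of Theorem \ref{thm:jumpcondition} with $N = 1$. Away from this curve, $u$ is smooth and $m = u - u_{xx} = p(G - G_{xx}) = 0$, so the classical mCH equation $m_t + [(u^2 - u_x^2)m]_x = 0$ (with $k = 0$) holds trivially on each side. Continuity of $u$ across $x = ct$ is immediate from continuity of $G$. Thus only the two relations in \eqref{eq:jumpcd1} remain to be checked; the value of $c$ will be forced by the first of them.

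First I would compute the one-sided data of $G$ at the origin. For $\ell < +\infty$, differentiating \eqref{eq:fundamentalsolutionG1} directly gives $G(0) = \frac{1}{2}\coth(\ell/2)$, $G'(0^-) = \frac{1}{2}$, and $G'(0^+) = -\frac{1}{2}$; for $\ell = +\infty$ the corresponding values are $G(0) = 1/2$ and $G'(0^\pm) = \mp 1/2$. Writing $\xi = x - ct$, this yields $u(ct,t) = \frac{p}{2}\coth(\ell/2)$, $v_\ell = p/2$, and $v_r = -p/2$ (with $\coth(\ell/2)$ replaced by $1$ when $\ell = +\infty$).

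Since $x_1'(t) = c$, the first equation of \eqref{eq:jumpcd1} becomes
\[
c = u(ct,t)^2 - \frac{1}{3}\bigl(v_\ell^2 + v_\ell v_r + v_r^2\bigr) = \frac{p^2}{4}\coth^2(\ell/2) - \frac{1}{3}\cdot\frac{p^2}{4},
\]
using the algebraic identity $v_\ell^2 + v_\ell v_r + v_r^2 = p^2/4$. This is exactly the claimed speed, with the limiting case $c = p^2/6$ when $\ell = +\infty$. The second equation of \eqref{eq:jumpcd1} is automatic: $v_\ell - v_r = p$ is independent of $t$, while $A_\ell = A_r = 0$ because $m \equiv 0$ throughout each smooth region. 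I do not anticipate any serious obstacle; the argument is essentially bookkeeping of the one-sided limits of $G$ at the peakon location, combined with the jump framework already developed in Theorem \ref{thm:jumpcondition}.
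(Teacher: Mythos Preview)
Your proposal is correct and follows essentially the same approach as the paper: both verify the jump conditions of Theorem~\ref{thm:jumpcondition} along the single curve $x_1(t)=ct$, using that $m\equiv 0$ on each smooth piece so that the classical equation holds trivially and $A_\ell=A_r=0$. Your write-up is in fact slightly more explicit than the paper's, since you spell out the one-sided values of $G$ and $G'$ at the origin and check the second jump condition directly, whereas the paper simply asserts that it ``can be easily checked.''
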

\begin{proof}
By the definition $G(x)$ \eqref{eq:fundamentalsolutionG1} and \eqref{eq:fundamental2},  we know
$$m(x,t)=u(x,t)-\partial_{xx}u(x,t)=0~\textrm{ for }~x\neq ct.$$
Hence, $u(x,t)$ is a classical solution to the dispersionless mCH equation in  $\mathbb{R}\times[0,+\infty)\setminus\{(x(t),t):x(t)=ct,~t\geq0\}$. By Theorem \ref{thm:jumpcondition}, we only have to prove $u(x,t)$ satisfies \eqref{eq:jumpcd1} or \eqref{eq:continu} along the line $\{(x(t),t):x(t)=ct,~t\geq0\}$. 
Here, we prove $u$ satisfies jump condition \eqref{eq:jumpcd1}. Notice that $u_x(x(t)-,t)=-u_x(x(t)+,t)=\frac{p}{2}$. We have
\begin{gather*}
u^2(x(t),t)-\frac{1}{3}u^2_x(x(t)-,t)-\frac{1}{3}u^2_x(x(t)+,t)-\frac{1}{3}u_x(x(t)-,t)u_x(x(t)+,t)\\
=\left\{\begin{split}
&\frac{1}{4}p^2\Big[\coth^2(\frac{\ell}{2})-\frac{1}{3}\Big]~\textrm{ if }~\ell<+\infty;\\
&\frac{1}{6}p^2~\textrm{ if }~\ell=+\infty.
\end{split}
\right.
\end{gather*}
This implies the first condition in \eqref{eq:jumpcd1}
\[
\frac{d}{dt}x(t)=c=u^2(x(t),t)-\frac{1}{3}u^2_x(x(t)-,t)-\frac{1}{3}u^2_x(x(t)+,t)-\frac{1}{3}u_x(x(t)-,t)u_x(x(t)+,t).
\]
The second condition in \eqref{eq:jumpcd1} can be easily checked. Therefore, $u(x,t)=pG(x-ct)$ is a peakon weak solution to the dispersionless mCH equation.

\end{proof}

\begin{remark}
The traveling speed is determined by the amplitude $p$ and the period $\ell$. When the period $\ell$ tends to infinity, the traveling speed $c$ tends to $\frac{1}{6}p^2$: 
$$\frac{1}{6}p^2=\lim_{\ell\to +\infty}\frac{1}{4}p^2\Big[\coth^2(\frac{\ell}{2})-\frac{1}{3}\Big].$$
\end{remark}

\begin{figure}[H]
\begin{center}
\includegraphics[width=0.8\textwidth]{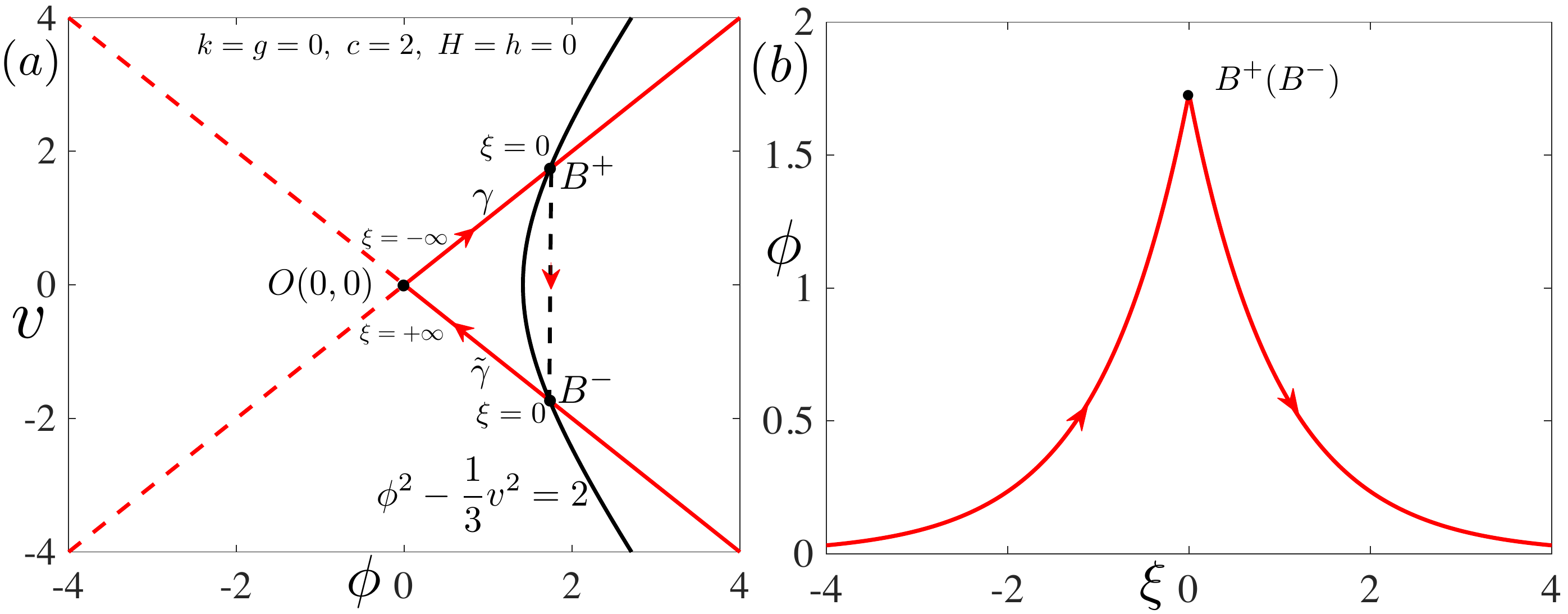}
\end{center}
\caption{Illustration of Remark \ref{remark:specialcase}: the patching technic for peakon weak solutions of the form \eqref{solitarywave}. When $g=k=0$ and traveling speed $c=2$, the hyperbola $\phi^2-\frac{1}{3}v^2=c$ ($\phi>0$) intersects the level set $H=0$ at $B^+$ and $B^-$ in phase plane (a). $(0,0)$ is a saddle point and $\gamma\cup B^+B^-\cup\tilde{\gamma}$ corresponds to the patched peakon weak solution $\phi(\xi)=u(x,t)=2\sqrt{3}G(x-2t)=\sqrt{3}e^{-|x-2t|}=\sqrt{3}e^{-|\xi|}$ in (b).} 
\label{fig:remark}
\end{figure}

\begin{remark}\label{remark:specialcase}
All the peakon weak solutions given by Proposition \ref{pro:periodicpeakon} can be obtained by the patching technic introduced in Subsection \ref{sec:patchedweaktravel}. Here, we only show the case for $\ell=+\infty.$
In this case, the solution is reduced to 
\[
u=pG(x-\frac{1}{6}p^2t)=\frac{1}{2}p e^{-|x-\frac{1}{6}p^2 t|}.
\] 

As shown in Figure \ref{fig:remark}, consider the level set of \eqref{eq:Hamiltonian} for $k=0,~g=0$  and $c>0$. At this time $(0,0)$ is a saddle point of \eqref{eq:system2}. The two lines $\phi=\pm v$ belong to the level set of $H=h=0$ and the time spent near $(\phi,v)=(0,0)$ is infinity. Consider the  two branches of the level set $H=0$ in the half plane $\phi>0$. These two branches correspond to two traveling wave solutions to the mCH equation:
$$\phi_1(\xi)=A_1e^{\xi}~\textrm{ and }~\phi_2(\xi)=A_2e^{-\xi}~\textrm{ for some constants }~A_i>0,~i=1,2.$$
(When the origin point for $\xi=0$ is set, we can determine $A_i$ for $i=1,2$.)
Now, assume the traveling speed $c=\frac{1}{6}p^2$ for $p\neq 0$. The two branches intersect $\phi=\sqrt{\frac{v^2}{3}+c}$ (the right curve of $\phi^2-\frac{v^2}{3}=c$) at $B^+=\Big(\sqrt{\frac{3}{2}c},\sqrt{\frac{3}{2}c}\Big)$ and $B^-=\Big(\sqrt{\frac{3}{2}c},-\sqrt{\frac{3}{2}c}\Big)$.
Assume $\gamma:=\Big\{(\phi,v):0<\phi<\sqrt{\frac{3}{2}c},~v=\phi\Big\}$ and $\tilde{\gamma}:=\Big\{(\phi,v):0<\phi<\sqrt{\frac{3}{2}c}, ~v=-\phi\Big\}$.
Consider the patched weak solution given by curve (clockwise)
$$\gamma\cup B^+B^-\cup\tilde{\gamma}.$$
Assume $\xi=0$ on the segment $B^+B^-$. Hence, we have
$$\phi_1(0)=\phi_2(0)=\sqrt{\frac{3}{2}c}=A_1=A_2.$$
This yields $A_i=\frac{|p|}{2}$ and the corresponding patched weak solution is 
$$u(x,t)=\phi(x-ct)=\frac{|p|}{2}e^{-|x-ct|}=|p|G(x-ct).$$
This is the one peakon weak solution given by Proposition \ref{pro:periodicpeakon}. This weak solution is not of finite period since it passes through the critical points $(0, 0)$. The time spent near $(0,0)$ is infinity. Similarly, the intersection between $\phi=\pm v$ ($\phi<0$) and the left curve of $\phi^2-\frac{v^2}{3}=c$ gives the peakon weak solution $u=-|p|G(x-ct).$

\end{remark}

Notice that a solution given by Proposition \ref{pro:periodicpeakon} satisfies $m(x,t)=0$ in the smooth region of $u$ and the superposition of such solutions $u$ gives a new solution in the smooth region. Hence, the superposition of $N$ such solutions with different traveling speeds  forms an $N$-peakon weak solutions only if one of the jump conditions (\eqref{eq:jumpcd1} or \eqref{eq:continu}) is satisfied along the trajectory of each peakon. 
Indeed, we have the following proposition.

\begin{proposition}\label{pro:NpeakonmCH}
Let  $p_1,\ldots,p_N\in\mathbb{R}$ be $N$ constants and $p_i\neq 0$, $i=1,\ldots,N$. Assume $x_{10}< x_{20}< \cdots< x_{N0}$. Consider $N$ trajectories $\{x_i(t)\}_{i=1}^N$ defined by
\begin{gather}\label{eq:NpeakonODE}
\frac{d}{dt}x_i(t)=u^2(x_i,t)-\frac{1}{3}\Big(u_x^2(x_i+,t)+u_x(x_i+,t)u_x(x_i-,t)+u^2_x(x_i-,t)\Big),
\end{gather}
subject to $x_i(0)=x_{i0}$, where
\begin{gather*}
u(x, t)=\sum_{i=1}^N p_iG(x-x_i(t))
\end{gather*}
Then, $u$ is an $N$-peakon weak solution of the mCH equation \eqref{mCH} before the first collision time $T$:
\begin{align}\label{eq:collisiontime}
T:=\sup\big\{T_0:x_1(t)<x_2(t)<\cdots<x_N(t)~\textrm{ for }~t\in[0,T_0)\big\}.
\end{align}

\end{proposition}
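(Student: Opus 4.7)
The strategy is to verify the hypotheses of Theorem \ref{thm:jumpcondition} by taking the $N$ trajectories $\{(x_i(t),t)\}_{i=1}^N$ as the dividing curves. The key algebraic fact is $(1-\partial_{xx})G(\cdot-y)=\delta(\cdot-y)$, which gives
\[
m(x,t)=(1-\partial_{xx})u=\sum_{i=1}^N p_i\,\delta(x-x_i(t)),
\]
and hence $m\equiv 0$ in the interior of each strip $U_i=\{x_{i-1}(t)<x<x_i(t)\}$. Since the mCH equation with $k=0$ reads $m_t+[(u^2-u_x^2)m]_x=0$, it is trivially satisfied in the classical sense on every $U_i$, so the regularity hypothesis of Theorem \ref{thm:jumpcondition} is automatic.

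First I would establish local existence of the trajectories. The right-hand side of \eqref{eq:NpeakonODE} depends smoothly on $(x_1,\ldots,x_N)$ on the open chamber $\{x_1<\cdots<x_N\}$, since $u(x_i,t)$ and $u_x(x_i\pm,t)$ are explicit smooth functions of the positions through $G$. Standard ODE theory then yields a unique $C^1$ solution on $[0,T)$ with $T$ given by \eqref{eq:collisiontime}, and $u=\sum_{i}p_iG(\cdot-x_i(t))$ is continuous on $\mathbb{T}_\ell\times[0,T)$ as a finite sum of continuous kernels. This gives Condition~1 of Theorem \ref{thm:jumpcondition}.

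For Condition~2, I would read off from the derivative jump $G'(0+)-G'(0-)=-1$ (and its periodic analogue when $\ell<\infty$) that only the term $p_iG(\cdot-x_i(t))$ contributes to the jump at $x_i(t)$, giving
\[
v_\ell^i(t)-v_r^i(t)=p_i,
\]
a constant in $t$. The second equation of \eqref{eq:jumpcd1} then reduces to $A_\ell^i-A_r^i=0$, and both one-sided limits vanish because $m\equiv 0$ on either side of $x_i(t)$. The first equation of \eqref{eq:jumpcd1} is literally the defining ODE \eqref{eq:NpeakonODE}, so it holds by construction. Theorem \ref{thm:jumpcondition} then concludes that $u$ is a weak solution on $[0,T)$.

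The work, rather than any genuine obstacle, lies in the bookkeeping of one-sided derivatives at each $x_i$: one must verify that only the $i$-th kernel contributes to the jump while all other $N-1$ kernels are smooth across $x_i(t)$ and must be carried through in the computation of $u(x_i,t)$ and $u_x(x_i\pm,t)$. The genuinely subtle questions---what happens at a collision time and whether peakons pass through each other---are explicitly excluded by the hypothesis $t<T$, so no additional uniform estimates are needed for this statement.
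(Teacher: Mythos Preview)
Your proposal is correct and follows essentially the same route as the paper: apply Theorem~\ref{thm:jumpcondition} with the peakon trajectories as the dividing curves, use $m\equiv 0$ away from the trajectories to get classical solutions on each strip and to force $A_\ell^i=A_r^i=0$, observe that $v_\ell^i-v_r^i=p_i$ is constant so the second jump condition holds, and note that the first jump condition is precisely the defining ODE~\eqref{eq:NpeakonODE}. Your write-up is in fact more complete than the paper's, which skips the ODE well-posedness and continuity checks and only spells out the verification of the second line of~\eqref{eq:jumpcd1}.
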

\begin{proof}
We only have to check the second condition in \eqref{eq:jumpcd1} for $t\in[0,T)$, where $T$ is defined by \eqref{eq:collisiontime}. By the assumption that $x_i(t)\neq x_j(t)$ for $i\neq j$ and $t<T$, we have
$$v_\ell^i(t)-v_r^i(t)=u_x(x_i(t)-,t)-u_x(x_i(t)+,t)=p_i.$$
Because $m=0$ in the smooth region of $u$, we have
$$A_\ell^i(t)=\lim_{x\rightarrow x_i(t)-}m(x,t)\Big(u^2(x,t)-u_x^2(x,t)-x'_i(t)\Big)=0$$
and similarly $A_r^i(t)=0.$
Hence, we have
$$\frac{d}{dt}(v_\ell^i(t)-v_r^i(t))=A_\ell^i(t)-A_r^i(t)=0,~~i=1,\ldots,N,$$
which proves the second condition of \eqref{eq:jumpcd1}.

\end{proof}

When $\ell<+\infty$ and $x_1(t)<x_2(t)<\cdots<x_N(t)$, equation\eqref{eq:NpeakonODE} can be simplified as 
\begin{multline}
\frac{d}{dt}x_i=\frac{1}{4}p_i^2\Big(\coth^2(\ell/2)-\frac{1}{3}\Big)+\coth(\ell/2)p_i\sum_{j\neq i}p_j G(x_i-x_j)\\
+\Big(\sum_{j\neq i}p_jG(x_i-x_j)\Big)^2-\Big(\sum_{j\neq i}p_j G_x(x_i-x_j)\Big)^2.
\end{multline}
In the case $\ell=\infty$ and $x_1(t)<x_2(t)<\cdots<x_N(t)$, Equation \eqref{eq:NpeakonODE} is simplified as 
\begin{equation}\label{eq:NpeakonmCH}
\frac{d}{dt}x_i=\frac{1}{6}p_i^2+\frac{1}{2}\sum_{j\neq i}p_ip_je^{-|x_i-x_j|}+\sum_{1\leq m<i<n\leq N}p_mp_ne^{-|x_m-x_n|}.
\end{equation} 
This ODE system was also obtained in \cite{GuiLiuOlverQu}.

\begin{remark}
Similar peakon weak soltuions can also be obtained for the CH equation \eqref{eq:generalCH} (with $k=0$) in $\mathbb{T}_\ell$. When $\ell=+\infty$, $u(x,t)=pG(x-ct)$ is a weak solution to the CH equation if the traveling speed is  $c=\frac{p}{2}$ \cite{Camassa}. When consider the periodic domain $\ell<+\infty$, one can prove that the traveling speed for such a peakon weak solution is given by $\frac{p}{2}\coth(\frac{\ell}{2})$ (see Appendix \ref{app:CHpeakon}). Letting $\ell\rightarrow +\infty$ in $\frac{p}{2}\coth(\frac{\ell}{2})$, the speed converges to $\frac{p}{2}$.

The periodic peakon weak solutions to the CH equation was also discussed in \cite{Constantin2}. But the traveling speed is given by $\frac{p}{2}$ instead of $\frac{p}{2}\coth(\frac{\ell}{2})$. In \cite{Lenells}, Lenells analyzed the trajectory stability of this periodic peakon weak soutions with the expression given by \cite{Constantin2}, without worrying about the speed of the traveling peakon. The reason that Lenells can sucessfully obtain the stability result is that he  used the method in \cite{stability}  which only considers the shape stability of peakon weak solutions and notice that different traveling speeds in $pG(x-ct)$ give a same function shape.
 \end{remark}

\subsection{Peakon weak solutions for $k< 0$.}\label{sec:knegative}
If the dispersive term $2ku_x$ is present for $k\in \mathbb{R}$, the CH equation \eqref{eq:generalCH} has peakon weak solutions of the form \cite{lwj04,Qian}:
\begin{align}\label{eq:peakonkleq}
u(x,t)=pe^{-|x-(p-k)t|}-k, ~~x\in\mathbb{R},~t\geq0.
\end{align}
Clearly, these peakon solutions do not vanish at infinity when $k\neq 0$($u \rightarrow-k$ as $|x|\to\infty$). 

For the mCH equation with $k<0$, we have the following similar result.
\begin{proposition}\label{pro:peakonknegative}
Assume $p\neq0$ and $k<0$. Let
\begin{align}\label{eq:peakonknegative}
u(x,t)=pG(x-ct)-\sqrt{-k}, ~m(x,t)=p\delta(x-ct)-\sqrt{-k},
\end{align}
where the traveling speed $c$ is given by
\begin{gather}\label{eq:speedforknegative}
c=\left\{
\begin{split}
&\Big[\frac{p}{2}\coth(\frac{\ell}{2})-\sqrt{-k}\Big]^2-\frac{p^2}{12}~\textrm{ if }~\ell<+\infty;\\
&\Big(\frac{p}{2}-\sqrt{-k}\Big)^2-\frac{p^2}{12}~\textrm{ if }~\ell=+\infty.
\end{split}
\right.
\end{gather}
Then, $u(x,t)$ is a weak solution to the mCH equation \eqref{mCH} in $\mathbb{T}_\ell.$
\end{proposition}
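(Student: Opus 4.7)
The strategy is to recognize that $u(x,t)=pG(x-ct)-\sqrt{-k}$ is smooth away from the single curve $x_1(t)=ct$, so Theorem \ref{thm:jumpcondition} applies with $N=1$ (viewing the curve cyclically on $\mathbb{T}_\ell$ for $\ell<\infty$, or as a single interface splitting $\mathbb R$ into two regions with the same analytic expression for $u$ on each side when $\ell=\infty$). I thus have to do two things: (i) check that $u$ solves \eqref{mCH} classically in the smooth region, and (ii) verify the jump conditions \eqref{eq:jumpcd1} at $x=ct$. Because $(1-\partial_{xx})G(x-ct)=0$ off the peakon and $\sqrt{-k}$ is constant, we get $m=u-u_{xx}=-\sqrt{-k}$ in the smooth region (with an extra $p\delta$ concentrated on the peakon curve, exactly matching the stated $m$).

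For step (i), since $m\equiv -\sqrt{-k}$ is constant off the peakon, $m_t=0$ and
\[
\bigl[(u^2-u_x^2)m\bigr]_x = 2m\,u_x(u-u_{xx}) = 2m^2 u_x = -2k\,u_x,
\]
which cancels exactly with $2ku_x$ in \eqref{mCH}. So $u$ is a classical solution off the peakon. For step (ii), I compute the one-sided traces from the explicit formulas \eqref{eq:fundamentalsolutionG1}, \eqref{eq:fundamental2}: for the periodic case $G(0)=\tfrac12\coth(\ell/2)$ and the smooth branches of $G_x$ give $G_x(0^-)=\tfrac12$, $G_x(0^+)=-\tfrac12$; for $\ell=\infty$, $G(0)=\tfrac12$ with the same one-sided slopes. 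Thus
\[
u(ct,t)=pG(0)-\sqrt{-k},\qquad v_\ell^1=\tfrac{p}{2},\qquad v_r^1=-\tfrac{p}{2}.
\]

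Now verify the two equations in \eqref{eq:jumpcd1}. A direct computation gives
\[
\tfrac{1}{3}\bigl((v_\ell^1)^2+v_\ell^1 v_r^1+(v_r^1)^2\bigr)=\tfrac{p^2}{12},
\]
so the first equation $x_1'(t)=c=u(ct,t)^2-p^2/12$ reduces to precisely the stated formula \eqref{eq:speedforknegative} after substituting $u(ct,t)$. For the second equation, $v_\ell^1-v_r^1=p$ is time-independent, so the left-hand side is zero; and since $m$ equals the same constant $-\sqrt{-k}$ on both sides of the peakon while $u$ and $u_x^2$ also match at the trace (the sign flip in $u_x$ disappears under squaring), we get
\[
A_\ell^1 = -\sqrt{-k}\Bigl[u(ct,t)^2-\tfrac{p^2}{4}-c\Bigr] = A_r^1,
\]
so the right-hand side is also zero. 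Both conditions of \eqref{eq:jumpcd1} hold, and Theorem \ref{thm:jumpcondition} concludes that $u$ is a weak solution.

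The whole argument is straightforward once the smooth-region verification is in place; the mildly delicate point is only ensuring the arithmetic identity $3(pG(0)-\sqrt{-k})^2 - (p^2/4) = 3c + p^2/4$ actually matches the traveling speed stipulated in \eqref{eq:speedforknegative}, which it does verbatim in both the periodic and whole-space cases. No integrability or boundary conditions at $|x|\to\infty$ are needed because Definition \ref{weaksolution} on $\mathbb R$ only requires local regularity, and $u$ lies in $L^\infty(0,T;W^{1,\infty}_{\mathrm{loc}}(\mathbb R))\cap C([0,T);H^1_{\mathrm{loc}}(\mathbb R))$.
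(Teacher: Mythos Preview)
Your proof is correct and follows exactly the approach the paper indicates (the paper simply says the proof is ``a straightforward verification of the jump conditions proposed in Theorem~\ref{thm:jumpcondition}'' and omits the details). One harmless slip in your closing parenthetical: the identity should read $3\bigl(pG(0)-\sqrt{-k}\bigr)^2 - p^2/4 = 3c$, not $3c + p^2/4$; your earlier line $c = u(ct,t)^2 - p^2/12$ is the correct statement and is what actually matches \eqref{eq:speedforknegative}.
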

The proof is just a straightforward verification of the jump conditions proposed in Theorem \ref{thm:jumpcondition} and we omit it.
\begin{remark}\label{rmk:specialknegative}
We remark that peakon weak solutions \eqref{eq:peakonknegative} can also be obtained by the patching technic described in Subsection \ref{sec:patchedweaktravel}. Consider the case $\ell=+\infty$. Fix $k<0, ~k\neq-1$ and $p>0$. Let $c$ be defined by \eqref{eq:speedforknegative}. Set the integral constant $g=(c-k)\sqrt{-k}$ in \eqref{eq:Hamiltonian}. There are two branches of level set  $H=-k\Big(\frac{c}{2}-\frac{k}{4}\Big)$ intersects at  $(-\sqrt{-k},0)$ which is a saddle point of System \eqref{eq:system2}. With similar arguments as in Remark \ref{remark:specialcase}, one can show that these peakon weak solutions can be obtained by the patching technic. When $k=-1$, the critical point $(-1,0)$ is degenerate, but a patched peakon weak solution of form \eqref{eq:peakonknegative} can also be obtained.
\end{remark}

As shown by Proposition \ref{pro:periodicpeakon} and \ref{pro:peakonknegative}, the mCH equation \eqref{mCH} with $k\leq0$ has special peakon weak solutions constructed by the fundamental solution to the Helmholtz operator $1-\partial_{xx}$. And all these solutions can be constructed by the patching technic. Naturally, one may wonder whether it has such peakon weak solutions when $k>0$. Next, we give a proposition to show that there are no such peakon weak solutions to \eqref{mCH} when $k>0$.
\begin{proposition}\label{pro:nopeakonkpositive}
If $k>0$, there is no weak solution to \eqref{mCH} of the form $u(x,t)=AG(x-ct)-B$ for any constants $c$, $A\neq0$ and $B$.
\end{proposition}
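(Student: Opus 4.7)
My plan is to reduce the problem to a classical computation in the smooth region of $u$, exploiting the defining property of the Helmholtz Green's function, and then read off an algebraic obstruction that cannot be satisfied when $k>0$.

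The first step is to observe that if $u(x,t)=AG(x-ct)-B$ is a weak solution, then on the complement of the traveling singularity $\{x=ct\}$ the function $u$ is smooth, so by choosing test functions in Definition \ref{weaksolution} supported away from $\{x=ct\}$ and integrating by parts one recovers the mCH equation classically on that smooth region. Since $(1-\partial_{xx})G=0$ away from the origin, in the smooth region one has $u_{xx}=AG_{xx}(x-ct)=AG(x-ct)=u+B$, and therefore
\begin{equation*}
m = u-u_{xx} = -B,
\end{equation*}
a constant. In particular $m_t=0$.

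Next, I would substitute $m\equiv -B$ into the classical equation $m_t+2ku_x+\bigl[(u^2-u_x^2)m\bigr]_x=0$. Using $(u^2-u_x^2)_x=2u_x(u-u_{xx})=2u_x m = -2Bu_x$, the equation collapses to
\begin{equation*}
2k\,u_x - B(u^2-u_x^2)_x = 2k\,u_x + 2B^2 u_x = 2(k+B^2)\,u_x = 0
\end{equation*}
throughout the smooth region. Since $A\neq 0$, the derivative $u_x=AG'(x-ct)$ does not vanish identically on any open subset of $\{x\neq ct\}$, so this forces the algebraic identity $k+B^2=0$, i.e.\ $B^2=-k$. When $k>0$ this has no real solution $B$, yielding the desired contradiction.

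The argument is essentially a one-line verification once one recognizes that the Helmholtz kernel makes $m$ a pure constant off the peakon; no part of it looks like a serious obstacle. The only point that requires a (brief) justification is the reduction from the weak formulation to the classical equation on the smooth region, which is standard distribution theory for test functions supported away from $\{x=ct\}$. Note that the argument also explains the appearance of the constant $\sqrt{-k}$ in Proposition \ref{pro:peakonknegative}: the relation $B^2=-k$ is precisely what selects the admissible background level in the $k<0$ case.
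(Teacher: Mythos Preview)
Your argument is correct and arrives at the same algebraic obstruction $B^2=-k$ as the paper, but by a more direct route. The paper works through the traveling-wave formalism developed in Section~\ref{sec:travelwave}: it rewrites $m=\phi-\phi''$ via System~\eqref{eq:system} (which carries the integration constant $g$), solves for $\phi^2-v^2-c$ in terms of $\phi$, substitutes into the Hamiltonian $H$ of~\eqref{eq:Hamiltonian}, and then forces the $\phi$- and $\phi^2$-coefficients to vanish, yielding $k+M^2=0$. You instead plug $m\equiv -B$ straight into the PDE $m_t+2ku_x+[(u^2-u_x^2)m]_x=0$ and use the identity $(u^2-u_x^2)_x=2u_x m$ to collapse everything to $2(k+B^2)u_x=0$. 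Your approach is self-contained and avoids the level-set machinery entirely; the paper's approach, while longer, has the advantage of being embedded in the phase-plane framework already built, and in passing also shows (via the same computation) that the constant $g$ is determined once $k$ and $c$ are fixed. Both arguments make transparent why $\sqrt{-k}$ is the forced background in Proposition~\ref{pro:peakonknegative}.
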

\begin{proof}
Suppose there is a weak solution to \eqref{mCH} of the form $u(x,t)=AG(x-ct)-B$. Then, $m=u-u_{xx}=-B=:M$ is a constant when $x\neq ct$. Consider the smooth region of $u$ where it is a traveling wave solution.  Let $\phi(\xi):=u(x,t)$ and $\xi=x-ct$.  By System \eqref{eq:system},  we have
\[
m=\phi-\phi''=\frac{2k\phi-g}{c-\phi^2+v^2}=M.
\]
Since $\phi$ is not a constant, we find 
\[
\phi^2-v^2-c=\frac{1}{M}(g-2k\phi).
\]
The solution must correspond to a level set of $H$:
\begin{align}\label{eq:mconstant}
H&=\frac{1}{4}(\phi^2-v^2)^2+\frac{1}{2}c(v^2-\phi^2)+k\phi^2-g\phi \nonumber\\
&=\frac{1}{4}\Big(c+\frac{1}{M}(g-2k\phi)\Big)^2-\frac{1}{2}c\Big(c+\frac{1}{M}(g-2k\phi)\Big)+k\phi^2-g\phi\nonumber\\
&=\frac{k(k+M^2)}{M^2}\phi^2-\frac{g(k+M^2)}{M^2}\phi+\frac{1}{4}\Big(c+\frac{g}{M}\Big)^2-\frac{1}{2}c\Big(c+\frac{g}{M}\Big)=h.
\end{align}
Since $\phi$ is not a constant, the coefficients of $\phi$ must be zero 
on the right hand side but this is impossible if $k>0$.
\end{proof}
\begin{remark}
When $k<0$, \eqref{eq:mconstant} also implies that there exists traveling wave solutions  that makes $m$ is a constant in smooth regions only if $M=\sqrt{-k}$.
\end{remark}

\section{Planar curve flows and the mCH equation}\label{sec:curve}
In this section, we discuss the planar curve flow $\mathscr{C}(t)$ and see how a class of curve flows are related to the mCH equation. Suppose the velocity for a point $p(t;\xi)$ on curve $\mathscr{C}(t)$ is given by $v(t;\xi)$, and thus
\[
\frac{dp(t;\xi)}{dt}=v(t;\xi),
\]
where $\xi\in [0, \ell]$ is a label for the point, called Lagrangian coordinate for the curve. The Lagrangian coordinate allows us to parametrize $\mathscr{C}(t)$ as $r(\xi, t):=p(t;\xi)$. Since the velocity $v(t;\xi)$ can be decomposed along the tangential and normal directions, we then can write the equation for the planar curve flow as
\begin{gather}\label{eq:curveflow}
r_t=a\tau+b n, 
\end{gather}
where 
 \[
\tau=\tau(\xi,t)=(\tau_1,\tau_2):=r_{\xi}/|r_{\xi}|\] 
is the unit tangent vector and \[
n=n(\xi,t):=\tau^\perp(\xi,t)=(-\tau_2, \tau_1)
\] is the unit normal vector.

Before we discuss the our planar curve flows, let us make the following remarks: (i). If $a$ and $b$ are Euclidean differential invariants (i.e. they only depend on curvature and its derivatives about the arc-length), the curve flows are intrinsically the same (i.e. if $r$ is the flow for $r_0$, then $Pr+b$ is the flow for initial curve $Pr_0+b$ where $P$ is a rotation and $b$ is a translation). (ii). Geometric properties like total arclength, curvature and the derivatives of curvature only depend on the normal component $b$. This means if one cares the geometric curve flow only, one can set $a$ to be anything (see \cite{olver08}). (iii). $a$ affects the stretching of the curve and elastic properties.

 We will study curve motions that preserve arc-length and total signed area in detail. For relative discussions, see \cite{olver08}.

\subsection{Arc-length preserving motions}
The following results give criteria for preserving arc-length and total signed area of closed curve flows given by \eqref{eq:curveflow}.
\begin{lemma}\label{lmm:arcareapre}
Suppose a smooth curve flow $r(\xi,t)$ is a solution to \eqref{eq:curveflow} with Euclidean differential invariants $a$ and $b$.  $\xi\in[0,\ell]$ is the Lagrangian coordinate and $s=s(\xi,t)$ is the arc-length parameter at time $t$.  Let $\kappa(s,t)$ be the curvature  of the curve $r(\xi,t)$. 
Then, the following statements hold.

(i). The curve is non-stretching (i.e. $\partial_t|r_\xi(\xi,t)|\equiv0$,  the length between any two points is unchanged or arc-length is preserved) if and only if 
\begin{gather}\label{eq:arclengthpre}
a_s=b\kappa
\end{gather}
holds for any $t>0.$ 

(ii). Suppose the curve is a closed curve, or in other words, $r(\xi, t)$ is a periodic function with period $\ell$ and $\xi\in [0, \ell]$.  Then, we have
\begin{gather}\label{eq:areapre}
\frac{d}{dt}A(t)=-\oint_{\mathscr{C}(t)} b ds,
\end{gather}
where 
\begin{align}\label{eq:totalsignedarea}
A(t)=\frac{1}{2}\oint_{\mathscr{C}(t)} r\cdot(-n ds)
\end{align}
 is the total signed area enclosed by the curve $r(\xi,t)$.
In particular, if $\oint_{\mathscr{C}(t)} b ds=0$, the curve flow preserves the area enclosed.
\end{lemma}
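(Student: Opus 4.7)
The plan is to reduce everything to the Frenet-Serret formulas for plane curves, $\tau_s = \kappa n$ and $n_s = -\kappa\tau$, combined with the chain rule $\partial_s = |r_\xi|^{-1}\partial_\xi$. Before splitting into cases, I would establish the preparatory identity
\begin{equation*}
\partial_\xi(a\tau + bn) = (a_\xi - b|r_\xi|\kappa)\tau + (a|r_\xi|\kappa + b_\xi)n,
\end{equation*}
which powers both parts of the lemma.

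For (i), I would differentiate $|r_\xi|^2 = r_\xi\cdot r_\xi$ in $t$, interchange $\partial_t$ with $\partial_\xi$, and dot against $r_\xi = |r_\xi|\tau$. The normal component drops, leaving $\partial_t|r_\xi| = |r_\xi|(a_s - b\kappa)$. Since $|r_\xi|>0$, arc-length preservation is equivalent to $a_s = b\kappa$, which is exactly \eqref{eq:arclengthpre}.

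For (ii), I would rewrite the signed area in Lagrangian form as $A(t) = -\tfrac{1}{2}\int_0^\ell (r\cdot n)|r_\xi|\,d\xi$ and differentiate under the integral. The contribution from $r_t$ is immediate, since $r_t\cdot n = b$, and produces $-\tfrac{1}{2}\oint b\,ds$. For the remaining two terms I need $n_t$ and $|r_\xi|_t$: the latter is already in hand from (i), and the former is obtained by differentiating $\tau = r_\xi/|r_\xi|$ in $t$ and using the preparatory identity above, which yields $\tau_t = (a\kappa + b_s)n$ and hence $n_t = -(a\kappa+b_s)\tau$. Substituting, the remaining contribution is
\begin{equation*}
\tfrac{1}{2}\oint(a\kappa + b_s)(r\cdot\tau)\,ds \;-\; \tfrac{1}{2}\oint(a_s - b\kappa)(r\cdot n)\,ds.
\end{equation*}
I would then integrate by parts on the $b_s$ and $a_s$ occurrences, using closedness of $\mathscr{C}(t)$ to discard boundary terms, together with the identities $(r\cdot\tau)_s = 1 + \kappa\,r\cdot n$ and $(r\cdot n)_s = -\kappa\,r\cdot\tau$ that follow from Frenet-Serret. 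The mixed terms $\oint a\kappa(r\cdot\tau)\,ds$ and $\oint b\kappa(r\cdot n)\,ds$ cancel cleanly, leaving another $-\tfrac{1}{2}\oint b\,ds$. Summing the two halves gives $\frac{d}{dt}A(t) = -\oint b\,ds$, as claimed.

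The main obstacle will live entirely in the Part (ii) calculation: not in any single step, but in the bookkeeping of signs across the three derivative terms and the subsequent integrations by parts, where several expressions of the form $\oint \alpha\kappa(r\cdot\tau)\,ds$ and $\oint \alpha\kappa(r\cdot n)\,ds$ must line up and cancel. Writing the explicit formulas for $\tau_t$, $n_t$, and $|r_\xi|_t$ down before starting the manipulation should make the cancellation transparent, so the risk is clerical rather than structural.
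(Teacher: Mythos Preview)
Your proposal is correct and follows essentially the same route as the paper's proof: both compute $\partial_t|r_\xi|$ from $r_\xi\cdot r_{\xi t}$ for part (i), and for part (ii) both derive $\tau_t=(a\kappa+b_s)n$, $n_t=-(a\kappa+b_s)\tau$, differentiate $A(t)=-\tfrac12\int_0^\ell (r\cdot n)|r_\xi|\,d\xi$ via the product rule, and integrate the $a_\xi$ and $b_\xi$ (equivalently $a_s$, $b_s$) terms by parts to obtain the cancellation. The only difference is cosmetic---you phrase the integrals in the arc-length variable $s$ while the paper keeps the Lagrangian variable $\xi$ with explicit $|r_\xi|$ factors---and the cancellations you anticipate are exactly the ones the paper uses.
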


\begin{proof}
(i).
Denote
\begin{gather*}
\chi(\xi,t)=|r_{\xi}(\xi,t)|.
\end{gather*} 
Because 
\[
\frac{1}{2}\partial_t\chi^2=r_{\xi}\cdot r_{\xi t}=a_{\xi}\chi-\chi^2 b\kappa,
\]
we see that $\chi$ is a constant if and only if $a_{\xi}=b\kappa \chi$ or equivalently \eqref{eq:arclengthpre} holds since $ds=\chi d\xi$.  

(ii). By direct computation, we find 
\begin{gather}\label{eq:taut}
\tau_t=\partial_t\left(\frac{r_{\xi}}{|r_{\xi}|}\right)=\frac{1}{\chi}(a\tau_{\xi}+b_{\xi}n)=\left(a\kappa+\frac{b_{\xi}}{\chi}\right)n.
\end{gather}

Further, since $n_t\perp n$, we can write $n_t=h\tau$ for some constant $h$. By the fact that $\tau\cdot n=0$, we have $h=-n\cdot\tau_t$ and \[
n_t=-\left(a\kappa+\frac{b_{\xi}}{\chi}\right)\tau.
\] 
We now rewrite the formula for the total signed area as
\[
A(t)=\frac{1}{2}\oint_{\mathscr{C}(t)} r\cdot(-n ds)=-\frac{1}{2}\int_0^\ell r\cdot n \chi d\xi.
\]
Hence, 
\[
\frac{d}{dt}A=-\frac{1}{2}\int_0^\ell b|r_{\xi}|+r\cdot\tau(-(a\kappa|r_{\xi}|+b_{\xi}))+r\cdot n(a_{\xi}-|r_{\xi}|b\kappa)d\xi
\]
Integrating by parts for the terms about $a_{\xi}$ and $b_{\xi}$, we have $A'=-\int_0^\ell b|r_{\xi}|d\xi=-\oint_{\mathscr{C}(t)} b ds$.

\end{proof}
Under the non-stretching assumption \eqref{eq:arclengthpre} for curve flow $r(\xi,t)$, we derive an evolution equation for curvature $\kappa$. When $s(\xi,0)=\xi$, by Lemma \ref{lmm:arcareapre} we have $\chi(\xi,t)\equiv1$ and $s(\xi,t)\equiv\xi$ for any $t\geq0$. Consequently, \eqref{eq:taut} can be rewritten as
\[
\tau_t=(b_s+\kappa a){n}.
\]
Using the above equation, we can obtain
\begin{align*}
(\kappa n)_t=\tau_{st}=\tau_{ts}&=(b_s+\kappa a)_s{n}+(b_s+\kappa a){n_s}\\
&=(b_{ss}+(\kappa a)_s)n-\kappa(b_s+\kappa a)\tau.
\end{align*}
Hence, it follows that
\begin{gather}\label{eq:curvature}
\kappa_t=n\cdot (\kappa n)_t=b_{ss}+(\kappa a)_s,
\end{gather}
where $n_t\cdot n=0$ was used.

We give two examples of closed curve flows connecting with two integrable systems which preserve arc-length and total signed area. These two examples can be found in \cite{GuiLiuOlverQu}.

\begin{enumerate}
  \item Take $a=\frac{1}{2}\kappa^2$, $b=\frac{1}{\kappa}a_s=\kappa_s$ in \eqref{eq:curveflow}. The arc-length parameter is then preserved for any $t$.  In the case that the curve is closed, $\oint_{\mathscr{C}(t)} b ds=0$ and the total signed area enclosed by the curve is preserved. 
By \eqref{eq:curvature},  the curvature satisfies 
\begin{gather}\label{eq:mKdV}
\kappa_t=\kappa_{sss}+\frac{3}{2}\kappa^2\kappa_s.
\end{gather}
  This equation is the modified Korteweg-de Vries(mKdV) equation.
\item In this example, $a$ and $b$  in \eqref{eq:curveflow} are determined by $\kappa$ globally by first solving $u-u_{ss}=\kappa$ and then taking $a=-(u^2-u_s^2)-2$, $b=a_s/\kappa=-2u_s$. In the case that the curve is closed, we have
$\oint_{\mathscr{C}(t)} b ds=0$
which means the total signed area is also preserved. By \eqref{eq:curvature}, $u$ satisfies the following mCH equation with a dispersion term $2u_s$ 
\begin{gather}\label{eq:mCHcurve}
\kappa_t+[(u^2-u_s^2)\kappa]_s+2u_s=0, \quad \kappa=(1-\partial_{ss})u.
\end{gather}
Hence, the curvature is the function $m$  introduced in \eqref{mCH}. This example connects a particular curve flows with the mCH equation we are studying in this paper.
\end{enumerate}

\begin{remark}
It is well-known that a planar curve is determined uniquely by its curvature function up to a rigid body motion. In the recent work \cite{Mumford}, Bruveris, Michor and Mumford showed that the space of closed plane curves equipped with some Sobolev-type metric is geodesically complete.
\end{remark}

\subsection{Hamiltonian structure for the arc-length preserving motions}

In this section, we consider closed curve flows given by a Hamiltonian system
\begin{gather}\label{eq:variationHamil}
r_t=P\frac{\delta E}{\delta r}
\end{gather}
where $E$ is some functional of the curve $r$ and $P$ is an anti-Hermitian operator so that $E$ is conserved under this motion. One example is 
\begin{align}\label{eq:J}
P=\lambda(\xi)J, ~~J=\left(
\begin{array}{cc}
0 &1\\
-1 &0\\
\end{array}
\right)
\end{align}
and $\lambda$ is a scalar. We aim to find Hamiltonian systems of the form \eqref{eq:variationHamil} for the arc-length preserving flows \eqref{eq:curveflow}.

Motivated by the elastic energies in material science, we consider energy functionals depending both on geometric quantities (curvature and its derivatives) and stretching. To define the stretching, we assume the curve has a relaxed configuration so that it contains no elastic energy under this configuration (one may consider a planar elastic spring for the motivation). Choosing $\xi$ as the arc-length parameter for this relaxed configuration, then the elastic stretching is defined as 
\begin{gather}\label{eq:stretching}
\chi(\xi,t)=|r_{\xi}(\xi,t)|.
\end{gather}

We have the following proposition.
\begin{proposition}\label{pro:variation1}
Suppose a smooth curve flow $r(\xi,t)$ is a solution to \eqref{eq:curveflow} with Euclidean differential invariants $a$ and $b$.  $\xi\in[0,\ell]$ is the Lagrangian coordinate chosen as arc-length parameter for the relaxed configuration and $s=s(\xi,t)$ is the arc-length parameter at time $t$.  Let $\kappa(s,t)$ be the curvature  of the curve $r(\xi,t)$.  Denote $\chi(\xi, t)$ the elastic stretching of $r(\xi,t)$ at time $t$ defined in \eqref{eq:stretching}.

 If $s(\xi,0)=\xi$ and $a_s=b\kappa$, then the dynamics \eqref{eq:curveflow} for this curve flow can be rewritten in the following Hamiltonian structure
\begin{gather}\label{eq:dynamic}
r_t=-\frac{a^2}{\kappa}J\frac{\delta E}{\delta r},
\end{gather}
where $J$ and $E$ are given by
\begin{align}\label{eq:Energyfunctional}
J=\left(
\begin{array}{cc}
0 &1\\
-1 &0\\
\end{array}
\right),~~E:=\int_0^{\ell} \frac{1}{a}(\chi-1)d\xi.
\end{align}
Consequently, the curve flows given by \eqref{eq:dynamic} move in the manifold $\{r: E=0\}$.

In particular, when $a=\frac{1}{2}\kappa^2$ in \eqref{eq:Energyfunctional}, \eqref{eq:dynamic} gives the curve flows described by the mKdV equation \eqref{eq:mKdV}.
When $a=-(u^2-u_s^2)-2$ in \eqref{eq:Energyfunctional} where $u$ satisfies $u-u_{ss}=\kappa$, \eqref{eq:dynamic} gives the curve flows described by the mCH equation \eqref{eq:mCHcurve}.

\end{proposition}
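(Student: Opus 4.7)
The plan is to derive \eqref{eq:dynamic} by direct computation of the right-hand side and match it with $a\tau + bn$; the arc-length preservation hypothesis $a_s = b\kappa$ will enter exactly in the last step. The essential preparation is to observe that the flow is automatically confined to the level set $\{E = 0\}$. Indeed, by Lemma \ref{lmm:arcareapre}(i) the hypothesis $a_s = b\kappa$ forces $\partial_t \chi \equiv 0$; combined with the initial condition $s(\xi,0) = \xi$ (equivalently $\chi(\xi,0) \equiv 1$), this gives $\chi(\xi,t) \equiv 1$ for all $t \ge 0$. In particular $s(\xi,t) = \xi$ and $\partial_s = \partial_\xi$ along the flow, and $(\chi - 1)/a \equiv 0$, so $E \equiv 0$. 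This is the fact that makes the functional derivative tractable despite $a$ depending on $r$ through $\kappa$, since the variation of $1/a$ is multiplied by $\chi - 1$ and drops out on the flow.

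Concretely, I would split
\begin{align*}
\delta E = \int_0^\ell (\chi - 1)\, \delta\!\left(\frac{1}{a}\right) d\xi + \int_0^\ell \frac{1}{a}\, \delta \chi \, d\xi,
\end{align*}
discard the first term on the flow, and use $\delta\chi = \tau \cdot \delta r_\xi$ together with integration by parts (periodic boundary conditions in $\xi$ for closed curves) to obtain
\begin{align*}
\left.\frac{\delta E}{\delta r}\right|_{\chi = 1} = -\partial_\xi\!\left(\frac{\tau}{a}\right) = -\partial_s\!\left(\frac{\tau}{a}\right) = -\frac{\kappa}{a}\, n + \frac{a_s}{a^2}\, \tau,
\end{align*}
using $\partial_s \tau = \kappa n$. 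Applying $-\frac{a^2}{\kappa}J$ and invoking the elementary identities $J\tau = -n$ and $Jn = \tau$ gives
\begin{align*}
-\frac{a^2}{\kappa}\, J \frac{\delta E}{\delta r} = a\, \tau + \frac{a_s}{\kappa}\, n,
\end{align*}
which equals $a\tau + bn = r_t$ precisely when $a_s = b\kappa$. This proves \eqref{eq:dynamic}.

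For the two specializations, the only thing to check is the identity $a_s = b\kappa$. For the mKdV case $a = \kappa^2/2$, $b = \kappa_s$ one has $a_s = \kappa \kappa_s = b\kappa$; for the mCH case with $\kappa = u - u_{ss}$, $a = -(u^2 - u_s^2) - 2$, $b = -2u_s$, differentiating gives $a_s = -2u_s(u - u_{ss}) = -2u_s\, \kappa = b\kappa$. The Hamiltonian reformulation \eqref{eq:dynamic} therefore applies, and the induced curvature evolution \eqref{eq:curvature} reduces to \eqref{eq:mKdV} and \eqref{eq:mCHcurve} respectively. The main obstacle I see is conceptual rather than computational: the functional derivative of $E$ looks daunting because $a = a(\kappa, \kappa_s, \ldots)$ depends on the curve in a highly nonlocal way, but the observation that trajectories lie in $\{E = 0\}$ is precisely what kills the variation of $1/a$ and reduces the problem to the elementary identity above.
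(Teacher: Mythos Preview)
Your proof is correct and follows essentially the same approach as the paper: both use the non-stretching condition to force $\chi\equiv 1$, split $\delta E$ so that the $\delta(1/a)$ term vanishes on the flow, integrate the $\delta\chi$ term by parts to obtain $\frac{\delta E}{\delta r}=\frac{a_s}{a^2}\tau-\frac{\kappa}{a}n$, and then apply $-\frac{a^2}{\kappa}J$. Your write-up is in fact more explicit than the paper's in verifying the $J\tau=-n$, $Jn=\tau$ identities and in checking the specializations to mKdV and mCH, which the paper simply omits.
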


\begin{proof}

By Lemma \ref{lmm:arcareapre}, if $a_s=b\kappa$, then the curve is non-stretching and the arc-length is preserved. Hence, when $s(\xi,0)=\xi$, we have $s(\xi,t)\equiv\xi$ for any time $t\geq0$. 

Taking variation of the curve subject to $\delta r(0)=\delta r(\ell)$ gives 
\begin{gather}\label{eq:variationE}
\delta E=-\int_0^{\ell} \frac{\delta a}{a^2}(\chi-1)d\xi
-\int_0^\ell\frac{\partial}{\partial\xi}\left(\frac{1}{a}\frac{r_{\xi}}{|r_{\xi}|}\right)
\cdot \delta r d\xi.
\end{gather}
Along the arc-length preserving curve flow $r(\xi,t)$, we have $\chi\equiv1$ and the first term in  \eqref{eq:variationE} is zero. Hence, we have
\begin{gather}
\frac{\delta E}{\delta r}=-\frac{\partial}{\partial s}\left(\frac{1}{a}\tau\right)=\frac{a_s}{a^2}\tau-\frac{\kappa}{a}n.
\end{gather}
 As a consequence, we have
\begin{gather*}
r_t=a\tau+bn=a\tau+\frac{a_s}{\kappa}n=-\frac{a^2}{\kappa}J\frac{\delta E}{\delta r}.
\end{gather*}

The rest statements can be obtained easily and we omit the proofs.
\end{proof}

\subsection{Loops with cusps}

We now study the closed curve flows corresponding to the patched traveling peakon weak solutions to the mCH equation \eqref{eq:mCHcurve}, which are constructed in Theorem \ref{thm:travelingweak}. Recall that such a patched peakon weak solution $u(s,t)=\phi(\xi)$ ($\xi=s-ct$) is piecewise smooth with jumps in first order derivative. The curvature $\kappa=u-u_{ss}=\phi-\phi''$ ($\phi''$ is understood in distribution sense) of the corresponding curve contains Dirac delta mass. If the curve is closed, then we have a loop with cusps due to the Dirac delta mass in curvature. 

The curvature $\kappa(s)$ of a closed curve must also be a periodic function and the period $T$ is a divisor of  the curve  length $\ell$. Moreover, it must satisfy the Gauss-Bonnet formula:
\begin{gather}\label{eq:GBformula}
\int_0^\ell\kappa(s)ds=2n\pi, ~ n\in\mathbb{Z}.
\end{gather}
For a simple  closed loop, we have $n=\pm 1$. If we allow the curve to turn over, then $n$ can be other integers. 

However, not every periodic solution $\kappa$ of Equation \eqref{eq:mCHcurve} corresponds to a closed curve flow. Next, we study what kind of $\kappa$ gives a closed curve. For convenience, we now introduce the angle function $\theta(s, t)$ which is the integral of $\kappa(s, t)$ about $s$.  Note that $\kappa$ is the distributional derivative of a piecewise smooth function $f$. We omit the $t$ dependence for the convenience of discussion. The angle function is then defined as
\begin{gather}\label{eq:int}
\theta(s):=\int_{(0, s]}\kappa(z) d z=f(s+)-f(0+).
\end{gather}
We now present a criterion on $\theta$ that guarantees that the curve is closed.  With a modification of results in \cite{agm08}, we have
\begin{lemma}\label{lmm:closedcurve}
Suppose $\theta(s)$ is a piecewise smooth function that is cadlag (right-continuous with left limits) and there exists $T>0$ such that $\theta(s+T)=\theta(s)+\theta(T)$, $\forall s\ge 0$.

If $\theta(T)\in 2\pi(\mathbb{Q}\setminus\mathbb{Z})$ so that 
\[
\theta(T)=\frac{2m\pi}{n}, m,n\in\mathbb{Z}, n>0, ~gcd(m,n)=1,
\] 
then \[
\gamma(s)=\int_0^{s}\exp(i\theta(\tau))d\tau
\]
represents a closed curve in the plane with $s$ to be the arc-length parameter. The total length of the curve is $nT$.
If $\theta(T)\in 2\pi\mathbb{Z}$, there are examples that the curves are not closed.
\end{lemma}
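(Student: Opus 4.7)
The plan is to reduce closedness of $\gamma$ to the vanishing of a geometric sum of roots of unity, which is the exact place the hypothesis $\gcd(m,n)=1$ (together with $n\ge 2$, forced by $\theta(T)\notin 2\pi\mathbb{Z}$) enters. First I would observe that the quasi-periodicity $\theta(s+T)=\theta(s)+\theta(T)$ implies
\[
e^{i\theta(s+kT)}=e^{ik\theta(T)}\,e^{i\theta(s)}\qquad\text{for every integer }k\ge 0,
\]
so the integrand of $\gamma'(s)=e^{i\theta(s)}$ is genuinely $T$-quasi-periodic with multiplier $e^{i\theta(T)}=e^{2\pi i m/n}$. Since $|\gamma'(s)|=1$ wherever $\theta$ is continuous (which is off a discrete set, by piecewise smoothness), $s$ is automatically an arc-length parameter, so the arc-length statement is essentially free once closedness is established.

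Next I would chop the interval $[0,nT]$ into the $n$ intervals $[kT,(k+1)T]$ and substitute $\tau=\sigma+kT$ in each piece. This gives
\[
\gamma(nT)=\sum_{k=0}^{n-1}\int_{kT}^{(k+1)T}e^{i\theta(\tau)}\,d\tau=\Bigl(\sum_{k=0}^{n-1}e^{2\pi i km/n}\Bigr)\int_0^T e^{i\theta(\sigma)}\,d\sigma.
\]
Because $\gcd(m,n)=1$, the map $k\mapsto km\pmod n$ is a bijection of $\{0,1,\dots,n-1\}$, so the bracketed sum is the sum of all $n$-th roots of unity. Since $\theta(T)\notin 2\pi\mathbb{Z}$ forces $n\ge 2$, this sum vanishes, giving $\gamma(nT)=0=\gamma(0)$. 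A parallel computation, using $\theta(s+nT)=\theta(s)+n\theta(T)=\theta(s)+2m\pi$, yields $\gamma(s+nT)=\gamma(nT)+\gamma(s)=\gamma(s)$, so $\gamma$ is in fact $nT$-periodic; the total length of one period is $\int_0^{nT}|\gamma'|\,ds=nT$.

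Finally, for the case $\theta(T)\in 2\pi\mathbb{Z}$ one only needs a single counterexample. Taking $\theta\equiv 0$ (trivially quasi-periodic with $\theta(T)=0$) gives $\gamma(s)=s$, which is a straight ray and certainly not closed. A slightly less trivial example with nonzero $\theta(T)=2\pi$ is $\theta(s)=2\pi s/T$ on $[0,T]$ extended by the quasi-periodicity rule: the same roots-of-unity computation now gives a nonzero constant multiplier, so the curve drifts by a fixed vector over each period and never closes.

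The whole argument is essentially mechanical; the only place that requires care is the bookkeeping in the substitution and the verification that $\gcd(m,n)=1$ together with $n\ge 2$ is the precise condition that makes the geometric sum vanish. There is no real analytic obstacle, since the cadlag piecewise smoothness of $\theta$ makes $e^{i\theta(\tau)}$ a bounded, piecewise continuous integrand to which Lebesgue integration and substitution apply without fuss.
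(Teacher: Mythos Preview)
Your closedness computation---chopping $[0,nT]$ into $n$ blocks and reducing to a vanishing geometric sum of roots of unity---matches the paper's proof exactly. But there is a gap on the claim that ``the total length of the curve is $nT$.'' You have only shown that $nT$ is \emph{a} period of $\gamma$, not the minimal one: conceivably the curve closes up after some proper divisor of $nT$ and is then traced several times over $[0,nT]$. The paper supplies an extra step you omit: if the curve already closed after $n_1T$ with $n_1\mid n$, then the unit tangent $\gamma'=e^{i\theta}$ must also return, so $n_1\theta(T)\in 2\pi\mathbb{Z}$, i.e.\ $n\mid n_1m$; coprimality of $m$ and $n$ then forces $n\mid n_1$, hence $n_1=n$. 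That is the place where $\gcd(m,n)=1$ is genuinely used---the vanishing of $\sum_{k=0}^{n-1}e^{2\pi ikm/n}$ requires only $n\nmid m$, which already follows from $\theta(T)\notin 2\pi\mathbb{Z}$.

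Your second counterexample is also wrong. For $\theta(s)=2\pi s/T$ one has $\int_0^T e^{i\theta(\sigma)}\,d\sigma=0$, so the per-period drift vanishes and $\gamma$ is a circle, which is closed. When $\theta(T)\in 2\pi\mathbb{Z}$ the drift over each $T$-block equals $\int_0^T e^{i\theta}\,d\sigma$, and the curve fails to close only when this integral is nonzero; your first example $\theta\equiv 0$ (drift $=T\neq 0$) already suffices, and the paper simply cites the literature for such examples.
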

\begin{proof}
We show that \[
\gamma(nT+s)-\gamma(s)=0, \forall s.
\]
This is true because
\begin{multline*}
\int_{s}^{s+nT}\exp(i\theta(\tau))d\tau
=\sum_{j=0}^{n-1}\int_{s+jT}^{s+(j+1)T}\exp(i\theta(\tau))d\tau
=\sum_{j=0}^{n-1}\int_s^{s+T}\exp(i\theta(\tau+jT))d\tau \\
=\sum_{j=0}^{n-1}\exp(ij\theta(T))\int_s^{s+T}\exp(i\theta(\tau))d\tau
=\int_s^{s+T}\exp(i\theta(\tau))d\tau \frac{1-\exp(im2\pi)}{1-\exp(im2\pi/n)}=0.
\end{multline*}
Then, the total length of $\gamma$ must be $n_1T$ so that $n_1|n$ for some integer $n_1$. 

By the definition of $\gamma(s)$, $\theta$ is the incline angle of the tangent vector and hence $n_1\theta(T)$ is a multiple of $2\pi$ since the curve is closed, or $n|mn_1$. Since $gcd(n, m)=1$, we must have $n_1=n$ and  the total length of $\gamma$ is $nT$.

In the case $\theta(T)\in2\pi \mathbb{Z}$, one can refer to \cite{agm08} for examples where the curves are not closed.
\end{proof}

Now, we show that some patched periodic traveling peakon weak solutions of \eqref{eq:mCHcurve} may correspond to loops with cusps. Fix $k=1$,  $c>0$ and $ g>0$. By Theorem \ref{thm:travelingweak}, there exists $h_0=h_0(c,g)>0$  so that for all $h>h_0$, 
\begin{gather}\label{eq:gammah}
\Gamma_h:=\Big\{(\phi, v):v^2=\phi^2-c+\sqrt{c^2-4(\phi^2-g\phi-h)},~\phi_1\leq \phi\leq \phi_2,~v>0\Big\},
\end{gather}
gives a patched periodic traveling peakon solution $u_h(s,t)=\phi_h(\xi)$ ($\xi=s-ct$ and $\xi\in [0, T_h]$, $T_h$ is a periodic), where $\phi_1, \phi_2$ are determined by
\begin{align}\label{eq:intersection}
\begin{cases}
v^2=\phi^2-c+\sqrt{c^2-4(\phi^2-g\phi-h)},~ v>0,\\
\phi^2-\frac{1}{3}v^2=c.
\end{cases}
\end{align}

\begin{proposition}\label{pro:looptheta}
Assume $k=1$ and fix $c>0$ and $ g>0$. Let $u_h(s,t)=\phi_h(\xi): \xi=s-ct\in [0, T_h]$ be the patched periodic weak solution corresponding to $\Gamma_h$ defined by \eqref{eq:gammah} for $h\in(h_0, +\infty)$, where $T_h$ is a period of the patched periodic peakon solution and $h_0$ is the constant in Theorem \ref{thm:travelingweak}. Let $(\phi_1, v_1^+)$ and $(\phi_2, v_2^+)$ be the two endpoints of $\Gamma_h$ with $\phi_1<\phi_2$. The following statements hold:

(i). Let $\theta_h$ be the angle function corresponding to $\kappa_h(\xi):=\phi_h(\xi)-\phi_h(\xi)''$. Then, 
\begin{gather}\label{eq:thetah}
\theta_h(T_h)=\displaystyle 2\left(\int_{\Gamma_h}\frac{2\phi-g}{c-\phi^2+v^2}
\frac{1}{v}d\phi+v_2^+-v_1^+\right)=
\int_0^{T_h} \phi_h(\xi)d\xi=\displaystyle 2\int_{\Gamma_h} \frac{\phi }{v}d\phi.
\end{gather}

(ii).
$\theta_h(T_h)$ is continuous in $h$. If $\theta_h(T_h)$ is not a constant for $h\in (h_0, \infty)$, then there exist infinitely many solutions $\phi_h$ such that 
\[
\theta_h(T_h)\in 2\pi(\mathbb{Q}\setminus\mathbb{Z}).
\]
Letting $\theta_h(T_h)=2\pi\frac{n}{N}$ with $n, N\in\mathbb{Z}$, $N>0, gcd(n,N)=1$, then the solution gives an arc-length preserving loop with cusps. The total length is $L_h=NT_h$ and the curvature $\kappa_h$ satisfies 
\[
\int_{(0,L_h]}\kappa_h(\xi) d \xi=2n\pi.
\]

\end{proposition}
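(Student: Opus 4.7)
The plan for part (i) is to compute $\theta_h(T_h) = \int_{(0,T_h]} \kappa_h(\xi)\,d\xi$ in three equivalent ways by carefully tracking the distributional structure of $\kappa_h = \phi_h - \phi_h''$. First I would describe the patched loop as traversed (say, clockwise): along $\Gamma_h$ from $A^+ = (\phi_1, v_1^+)$ to $B^+ = (\phi_2, v_2^+)$ with $v > 0$; then a jump in $\phi_h'$ at $\xi_B$ of size $[\phi_h']_B = -2v_2^+$; then along the reflected arc $\tilde\Gamma_h$ from $B^-$ back to $A^- = (\phi_1, -v_1^+)$ with $v < 0$; and finally a jump at $\xi_A$ of size $[\phi_h']_A = 2v_1^+$. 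Hence $T_h = 2\int_{\Gamma_h} d\phi/v$, and the only singular contributions to $\phi_h''$ in the distributional sense are two Dirac masses at the peakons.

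Next, on the smooth pieces, System \eqref{eq:system} with $k=1$ gives $\phi_h'' = \phi_h + (g-2\phi_h)/(c-\phi_h^2+v^2)$, so the regular part of $\kappa_h$ is $(2\phi_h - g)/(c-\phi_h^2+v^2)$. Splitting the distributional integral accordingly, I would write
\[
\int_0^{T_h}\kappa_h\,d\xi = \int_0^{T_h}\frac{2\phi_h-g}{c-\phi_h^2+v^2}\,d\xi - \bigl([\phi_h']_A + [\phi_h']_B\bigr),
\]
and then change variables via $d\xi = d\phi/v$ on each of $\Gamma_h$ and $\tilde\Gamma_h$; the two contributions coincide because both $v$ and $d\phi$ reverse sign on the lower arc, yielding the first expression in \eqref{eq:thetah}. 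For the second equality, continuity of $\phi_h$ together with periodicity forces the distributional integral $\int_0^{T_h}\phi_h''\,d\xi$ to vanish, so $\int_0^{T_h}\kappa_h\,d\xi = \int_0^{T_h}\phi_h\,d\xi$; applying the same change of variables on the two arcs then produces $2\int_{\Gamma_h}(\phi/v)\,d\phi$.

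For part (ii), the endpoints $\phi_1(h), \phi_2(h)$ defined by \eqref{eq:intersection} depend continuously on $h$, and at these endpoints $v^2 = 3(\phi^2-c)$ stays bounded away from zero in a neighborhood, so the integrand $\phi/v$ is integrable uniformly in $h$. Thus the expression $2\int_{\Gamma_h}(\phi/v)\,d\phi$ gives continuity of $\theta_h(T_h)$ in $h$. If $\theta_h(T_h)$ is not constant, by the intermediate value theorem it sweeps out a nondegenerate interval of $\mathbb{R}$; since $2\pi(\mathbb{Q}\setminus\mathbb{Z})$ is dense in $\mathbb{R}$, that interval contains infinitely many values of the form $2\pi n/N$ with $\gcd(n,N)=1$. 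For any such $h$, periodicity of $\kappa_h$ implies $\theta_h(s+T_h) - \theta_h(s) = \theta_h(T_h)$ for all $s$, so Lemma \ref{lmm:closedcurve} applied with $T = T_h$ produces a closed arc-length parametrized curve of total length $NT_h$, and $\int_{(0,NT_h]}\kappa_h\,d\xi = N\theta_h(T_h) = 2\pi n$.

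The main obstacle will be the bookkeeping in part (i): getting the jump signs correct and making sure the line-integral contributions on the two arcs $\Gamma_h$ and $\tilde\Gamma_h$ match up so that the three stated expressions actually agree. Once that accounting is settled, part (ii) reduces to a continuity-plus-density argument together with a direct invocation of Lemma \ref{lmm:closedcurve}.
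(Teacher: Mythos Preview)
Your proposal is correct and follows essentially the same approach as the paper: both use periodicity of $\phi_h'$ to conclude $\int_{(0,T_h]}\phi_h''\,d\xi=0$ for the second expression in \eqref{eq:thetah}, both use the smooth ODE \eqref{eq:system} to identify the regular part of $\kappa_h$ and then account for the Dirac contributions at the two peakons for the first expression, and both invoke density plus Lemma~\ref{lmm:closedcurve} for part (ii). Your treatment of the jump bookkeeping and the continuity argument is in fact somewhat more explicit than the paper's, but the underlying ideas are identical.
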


\begin{proof}
Since $\phi_h''$ is the distributional derivative of $\phi_h'$ which is periodic and piecewise continuous,  by Equation \eqref{eq:int}, we find that 
\[
\int_{(0, T_h]}\phi_h'' d\xi=0.
\]
Due to $d\xi=\frac{1}{v}d\phi$, we have
\[
\theta_h(T_h)=\int_0^{T_h}\kappa_h(\xi) d\xi=\int_0^{T_h} \phi_h d\xi=2\int_{\Gamma_h}\frac{\phi}{v}d\phi.
\]
By this expression, the continuity of $\theta_h(T_h)$ on $h$ is clear.

For the other expression, we just consider 
\begin{gather}
\int_{(0,T_h]}\kappa d\xi=2\bigg(\int_{\Gamma_h}\kappa \frac{1}{v}d\phi+\int_{\Gamma_h}\phi'' d\xi\bigg)
=2\int_{\Gamma_h}\kappa \frac{1}{v}d\phi+2(v_2^+-v_1^+).
\end{gather}
On $\Gamma_h$, by System \eqref{eq:system} we have
\[
\phi_h''=\phi_h+\frac{g-2\phi_h}{c-\phi_h^2+(\phi_h')^2} \Rightarrow \kappa_h(\xi)=\frac{2\phi-g}{c-\phi^2+v^2}\Big|_{(\phi, v)=(\phi_h(\xi), \phi_h'(\xi))}.
\]

Since rational numbers are dense in any interval of $\mathbb{R}$, as long as $\theta_h(T_h)$ is not constant in $h$, there are infinitely many $h$ such that $\theta_h(T_h)\in 2\pi(\mathbb{Q}\setminus\mathbb{Z})$. According to Lemma \ref{lmm:closedcurve}, the claim follows.
\end{proof}

Next, we give an example to illustrate Proposition \ref{pro:looptheta}. 

Fix $g=2$ and $c=2$. 
Consider the branch $\Gamma_+$ (defined by \eqref{eq:Gammapm}) of $H=h$ in the phase plane that $v>0$.
Let $f_{0h}(\phi):=4-4(\phi^2-2\phi-h)$ and $f_{1h}(\phi):=\phi^2-2+\sqrt{f_{0h}(\phi)}.$
Assume 
$$D_{0h}:=\{\phi\in\mathbb{R}: f_{0h}(\phi)\ge 0\}~\textrm{ and }~D_{1h}:=\{\phi\in\mathbb{R}: f_{1h}(\phi)\ge 0\}.$$
Choose $h_0=1+2\sqrt{2}$ and  for any $h>h_0$ we have
$$f_{0h}(\phi)\geq f_{0h}(-\sqrt{2})\geq 0~\textrm{ for any }~\phi\in[-\sqrt{2},\sqrt{2}].$$
Moreover, we have
$$f_{1h}(\phi)\geq0~\textrm{ for }~\phi\in[-\sqrt{2},\sqrt{2}], ~\textrm{ and }~[-\sqrt{2},\sqrt{2}]\subset D_{1h}.$$
Consider the intersections between $\Gamma_+$ ($v>0$) and $\phi^2-\frac{1}{3}v^2=2$. For $h\in(1+\sqrt{2},+\infty)$, there are two points $\phi_{1h},\phi_{2h}\in\mathbb{R}$ satisfy
\[
2(\phi^2-2)=\sqrt{4-4(\phi^2-2\phi-h)}.
\]
Hence, from \eqref{eq:thetah} we have
\[
\theta_h(T_h)=2\int_{\phi_{1h}}^{\phi_{2h}}\frac{\phi}{\sqrt{f_{1h}(\phi)}}d\phi=2\int_{\phi_{1h}}^{\phi_{2h}}\frac{\phi}{\sqrt{\phi^2-2+\sqrt{4-4(\phi^2-2\phi-h)}}}d\phi.
\]
Therefore, $\phi_h$ in Proposition \ref{pro:looptheta} is defined on $(1+2\sqrt{2},\infty)$.
\begin{figure}[H]
\begin{center}
\includegraphics[width=0.9\textwidth]{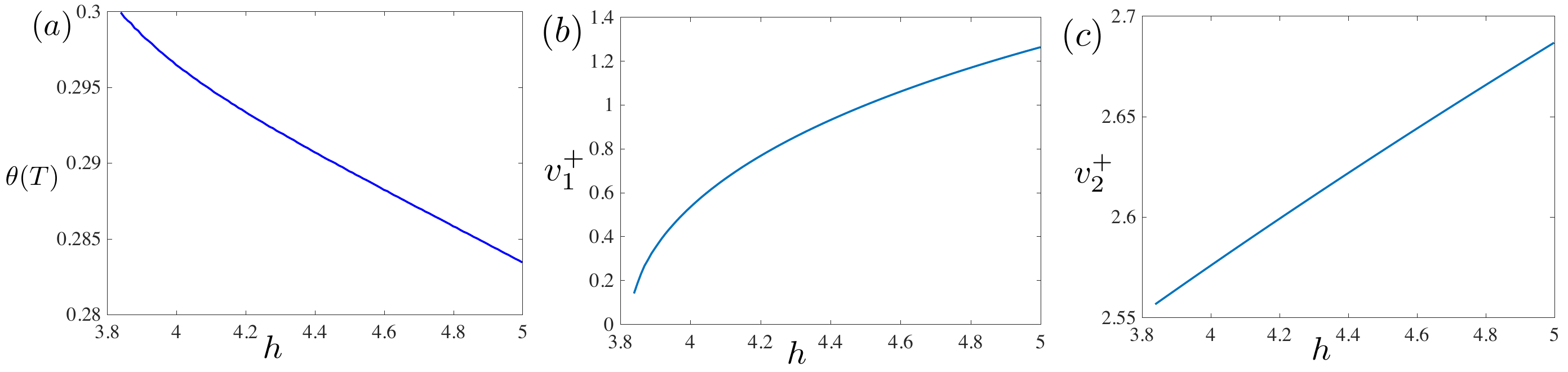}
\end{center}
\caption{(a): $\theta_h(T_h)$ versus $h$. Figure (b) and (c) shows the jump of $\theta_h(T_h)$ at the end of $\Gamma_h$.  }
\label{fig:thetaT}
\end{figure}

 The figure $\theta_h(T_h)$ versus $h$ is shown in Figure \ref{fig:thetaT}. Clearly, $\theta_h(T_h)$ is not a constant and therefore there are $h$  such that $\theta_h(T_h)\in 2\pi(\mathbb{Q}\setminus\mathbb{Z})$. Such a solution then gives a closed loop by Lemma \ref{lmm:closedcurve}.  
 
 However, it is not easy to tell which $h$ satisfies $\theta_h(T_h)\in 2\pi(\mathbb{Q}\setminus\mathbb{Z})$. To illustrate how the cusps happen, let us consider a function $\phi$, which is not necessarily a solution to the mCH equation: 
\[
\phi(\xi)=
\begin{cases}
\phi_0(\xi), & \xi\in[0,1],\\
\phi_0(2-\xi),& \xi\in (1, 2],
\end{cases}
\]
where $\phi_0(\xi)=\frac{\pi}{3}+\xi^2+\xi-\frac{5}{6}~\textrm{ for }~ \xi\in[0,1]$.
Then,  $\kappa=\phi-\phi''=\frac{\pi}{3}+\xi+\xi^2-\frac{17}{6}, \xi \in (0, 1)$. As a result, the angle function $\theta(s)$ is given by
\begin{gather}\label{eq:example}
\theta(s)=
\begin{cases}
\displaystyle{\frac{\pi}{3}s+\frac{1}{2}s^2+\frac{1}{3}s^3-\frac{17}{6}s,~ s\in [0,1),}\\
\displaystyle{\frac{\pi}{3}s-\frac{1}{3}(2-s)^3-\frac{1}{2}(2-s)^2-\frac{17s}{6}+\frac{23}{3},~ s\in [1,2).}
\end{cases}
\end{gather}
Then, $\theta(2)=\theta(2-)-2=\frac{2\pi}{3}$.

In Figure \ref{fig:curve}, we plot the $\theta(s)$ function and the corresponding curve. The curve is computed by evaluating the integral $\gamma(s)=\int_0^s\exp(i\theta(\tau))d\tau$ directly. 

\begin{figure}[H]
\begin{center}
\includegraphics[width=0.8\textwidth]{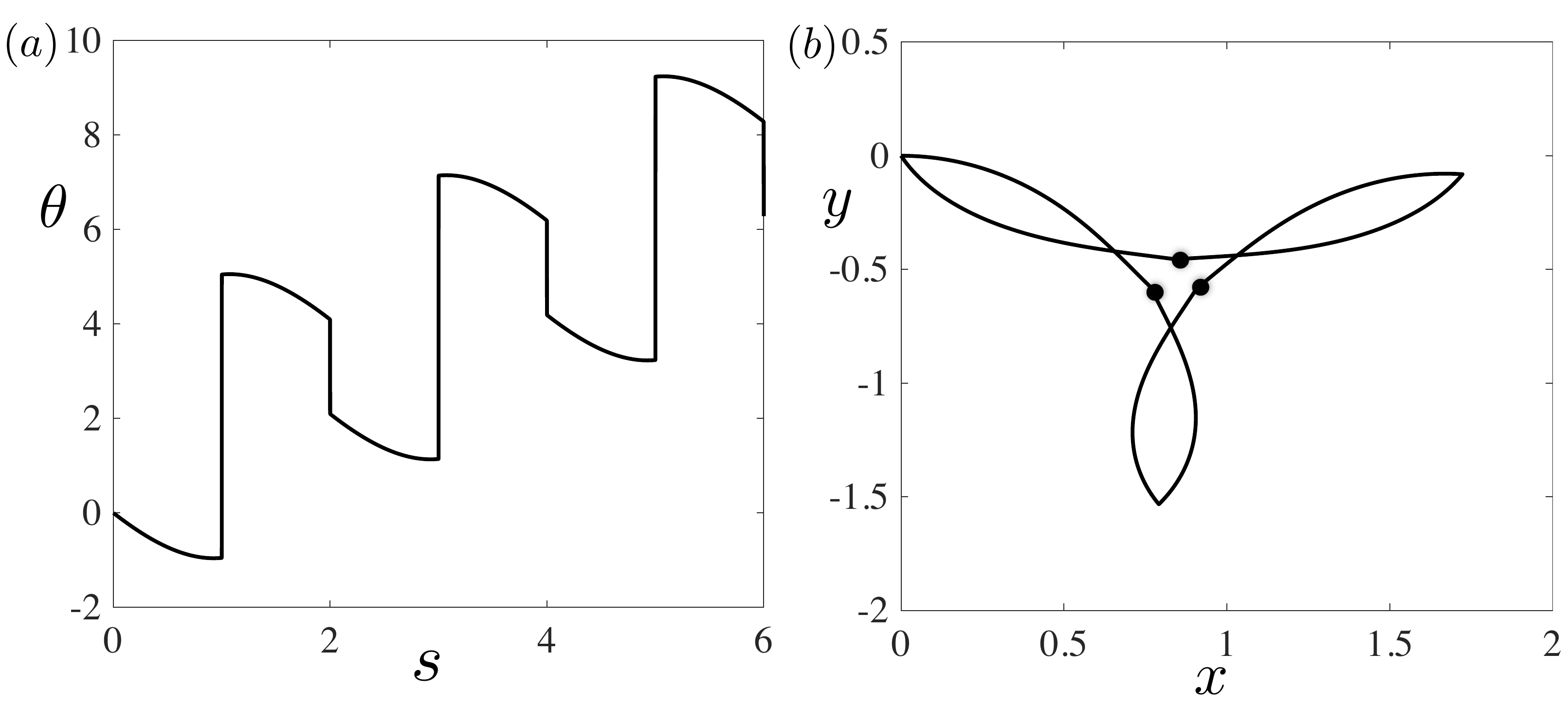}
\end{center}
\caption{Illustration of loops with cusps with Equation \eqref{eq:example}. Figure (a) shows the graphs of $\theta(s)$ function, and Figure (b) shows the corresponding planar curves. The dots show the singular points at $s=1,3,5$, while the other cusps correspond to $s=0,2,4$. }
\label{fig:curve}
\end{figure}

\appendix

\section{Proofs about critical points}\label{app:critical}

We give a brief proof of Lemma \ref{lmm:critical}

\begin{proof}
Consider the linearized System of \eqref{eq:system2}  at the critical point $(\phi_*,v_*)$  where $\frac{\partial H}{\partial v}(\phi_*,v_*)=\frac{\partial H}{\partial \phi}(\phi_*,v_*)=0$,
\begin{align*}
\begin{cases}
\displaystyle{\frac{d\phi}{d\tau}=\frac{\partial^2 H}{\partial v\partial\phi}(\phi-\phi_*)+\frac{\partial^2 H}{\partial v^2}(v-v_*),}\\
\displaystyle{\frac{dv}{d\tau}=-\frac{\partial^2 H}{\partial v\partial\phi}(v-v_*)-\frac{\partial^2 H}{\partial \phi^2}(\phi-\phi_*)}.
\end{cases}
\end{align*}
The eigenvalues of the linearized system satisfy
$$\lambda^2-\Big(\frac{\partial^2 H}{\partial v\partial\phi}\Big)^2+\frac{\partial^2 H}{\partial\phi^2}\frac{\partial^2 H}{\partial v^2}=0.$$
When 
$$\Big(\frac{\partial^2 H}{\partial v\partial\phi}\Big)^2-\frac{\partial^2 H}{\partial\phi^2}\frac{\partial^2 H}{\partial v^2}<0,$$
 $(\phi_*,v_*)$ is a center point and $H(\phi_*,v_*)$ is a local extremum of $H$.  When 
$$\Big(\frac{\partial^2 H}{\partial v\partial\phi}\Big)^2-\frac{\partial^2 H}{\partial\phi^2}\frac{\partial^2 H}{\partial v^2}>0,$$
$(\phi_*,v_*)$ is a saddle point of System \eqref{eq:system2} and it is a saddle point of $H$.
\end{proof}

We now give a brief proof for Proposition \ref{pro:criticalonhyperbola}

\begin{proof}
Assume that $(\phi_*,v_*)$ is a critical point which satisfies $\phi_*^2-v_*^2=c$. From the second equation in System \eqref{eq:system2}, we obtain  $-2k\phi_*+g=0$. Hence,
$$\phi_*=\frac{g}{2k}~\textrm{ and }~v_*^2=\phi_*^2-c=\frac{g^2-4k^2c}{4k^2} ~\textrm{ when }~g^2-4k^2c>0.$$
Hence,
$$H(\phi_*,v_*)=\frac{c^2}{4}-\frac{c^2}{2}+\frac{g^2}{4k}-\frac{g^2}{2k}=-\frac{c^2}{4}-\frac{g^2}{4k}.$$
$$\frac{\partial^2 H}{\partial v\partial\phi}(\phi_*,v_*)=\frac{g}{k}\sqrt{\frac{g^2-4k^2c}{4k^2}},~~\frac{\partial^2 H}{\partial\phi^2}(\phi_*,v_*)=\frac{g^2+4k^3}{2k^2},~~\frac{\partial^2 H}{\partial v^2}(\phi_*,v_*)=\frac{g^2-4k^2c}{2k^2}.$$
Hence
$$\Big(\frac{\partial^2 H}{\partial v\partial\phi}\Big)^2-\frac{\partial^2 H}{\partial\phi^2}\frac{\partial^2 H}{\partial v^2}=\frac{g^2(g^2-4k^2c)}{4k^4}-\frac{(g^2+4k^3)(g^2-4k^2c)}{4k^4}=-\frac{g^2-4k^2c}{k}$$
Hence, when $k>0$, we have
$$\Big(\frac{\partial^2 H}{\partial v\partial\phi}\Big)^2-\frac{\partial^2 H}{\partial\phi^2}\frac{\partial^2 H}{\partial v^2}<0.$$
By Lemma \ref{lmm:critical}, $(\phi_*,v_*)$ is a center point. When $k<0$, we have
$$\Big(\frac{\partial^2 H}{\partial v\partial\phi}\Big)^2-\frac{\partial^2 H}{\partial\phi^2}\frac{\partial^2 H}{\partial v^2}>0.$$ Lemma \ref{lmm:critical} tells us that $(\phi_*,v_*)$ is a saddle point. 
\end{proof}

\section{Proof of Proposition \ref{pro:structureGam}} \label{app:structure}

\begin{proof}
If $(k, g)=(0, 0)$, all the level sets are of the form $\phi^2-v^2=c$ and the claim is clearly true. Now, we assume $(k, g)\neq (0, 0)$. We use five steps to prove this proposition. In Step 1 and Step 2, we give some useful results. With these arguments, we prove proposition by Step 3, Step 4 and Step 5. In this proof, we always use $\tilde{\alpha}$ to represent the reflection of a curve $\alpha$ about $v=0$.

{\bf Step 1.} 

 $\Gamma_+\cap \Sigma$  ($\Gamma_-\cap\Sigma$)  if nonempty must contain an arc with positive length.

Due to Lemma \ref{lmm:arcs},  $\Gamma_+\cap P_\pm$ are graphs of continuous functions. Because $\Sigma$ is a connected component, when  $\Gamma_+\cap \Sigma\cap P_\pm$  is not empty, it must contain an arc with positive length or it is a single point.

Assume $\Gamma_+\cap \Sigma\cap P_+=\{(\phi_0,v_0)\}$ is a single point. Because $\Sigma$ is a nontrivial connected set, we can see that $\Gamma_+=\{(\phi_0,v_0)\}$.  
This implies that the square root in the definition of $\Gamma_+$ is defined only at $\phi_0$ and at this time we have $\Gamma_-=\Gamma_+$, in which case $\Sigma$ is trivial contradicting with the assumption.  Hence, 
$\Gamma_+\cap \Sigma$ ($\Gamma_-\cap\Sigma$)  if nonempty must contain an arc with positive length. 

{\bf Step 2.}

If there is a point $A=(\phi_*,v_*)\in \Sigma$ such that there are more than three arcs emerging from $A$. The Implicit theorem tells us that $A$ has to be a critical point of $H$ (or System \eqref{eq:system2}).  

If the critical point $A$ is on $\phi^2-v^2=c$,  we know $v_*\neq 0$. Otherwise, we have $\phi_*^2=c$, which is contradict with our assumption.  This point has to be the critical point of the form $\phi_*=\frac{g}{2k}$ for \eqref{eq:system2}. By Proposition \ref{pro:criticalonhyperbola}, this critical point is a center point when $k>0$ and hence we must have $k<0$ (For $k=0$ and $g\neq 0$, this is not a critical point).
In this case, we have 
\[
\sqrt{c^2-4(k\phi_*^2-g\phi_*-h)}=\sqrt{c^2-4k\Big(\phi_*-\frac{g}{2k}\Big)^2+\frac{g^2}{k}+4h}=0.
\] 
 $\Gamma_{\pm}$ are defined both on the right and left of this critical point. As a result,  on either side, there are two curves emerging from $A$. 
 
If $A=(\phi_*, 0)$, since $\phi_*^2\neq c$, $\Gamma_{+}$ and $\Gamma_-$ do not meet here. There are and only are two curves emerging from each side of $A$.

{\bf Step 3.}

If $\Gamma_+\cap \Sigma=\emptyset$ or $\Gamma_-\cap\Sigma=\emptyset$, by Lemma \ref{lmm:arcs}, $\Sigma$ must not intersect $\phi^2-v^2=c$ at $v\neq 0$. By the assumption there is no point $(\phi_0, 0)$ with $\phi_0^2=c$ on it, $\Sigma$ does not intersect the hyperbola on $v=0$ as well.  Moreover, $\Sigma\subset\Gamma_+$ or $\Sigma\subset\Gamma_-.$
As a result, by Lemma \ref{lmm:arcs}, $\Sigma\cap P_+$ if nonempty is the graph of a function $v=f(\phi)$.

If there are  no points on $\Sigma$ as described in Step 2, from the third statement of  Lemma \ref{lmm:arcs} we have three cases for $f(\phi)$: case 1. two endpoints of $f$ are at $v=0$; case 2. one endpoint of $f$ is on $v=0$ the other end tends to infinity; case 3. both ends of $f$ extend to infinity.  Assume $\Sigma_+=\Sigma\cap P_+$, which is given by the graph of function $f$. For case 1 and case 2, by our assumption about $\Sigma$, we know $\Sigma_+\cup\tilde{\Sigma}_+=\Sigma$. At this time $\Sigma$ is a loop for case 1 while both ends tend to infinity for case 2. For case 3, we have $\Sigma=\Sigma_+$.

If there are  points on $\Sigma$ as described in Step 2,  they must be on $v=0$. Otherwise they are on $\phi^2-v^2=0$ and at this time $\Gamma_+\cap\Gamma_-\cap \Sigma\neq\emptyset.$  We can use the lines $\phi=\phi_A$ ($A$ is a critical point in $\Sigma$) to divide the whole plane into finite regions $R_i$.  Consider the arc $\Sigma_i$ of the graph of function $f$ in $R_i$.  The analysis is the same as the last paragraph.

{\bf Step 4}  

Now, we assume $\Gamma_+\cap \Sigma\neq \emptyset$ and $\Gamma_-\cap\Sigma\neq \emptyset$. By Step 1, both parts contain at least an arc with positive length.

Without loss of generality, we can assume  $\Gamma_+\cap \Sigma\cap P_+ \neq \emptyset$. By Lemma \ref{lmm:arcs}, this is a function graph $v=f_1(\phi)$ with the largest domain $D$.  We claim first that $D$ is an interval. Suppose otherwise, then there are two disjoint intervals $I_1$ and $I_2$ such that $v=f_1(\phi)$ has no graphs between them. Let $I_1$ be on the left of $I_2$. Consider the right endpoint of $I_1$, $\phi^*$. If $\Gamma_+$ is defined for $\phi\in (\phi^*, \phi^*+\delta)$, then the graph in $P_+$ must be continuously extended from $f_1(\phi)$ by the expression of $\Gamma_+$ in \eqref{eq:Gammapm}. Hence, $\Gamma_+$ is not defined on $\phi\in (\phi^*, \phi^*+\delta)$ (for $\delta>0$) and consequently,  $\Gamma_-$ is not defined either (the domain of $\Gamma_-$ is contained in the domain of $\Gamma_+$). This means the graphs of $\Gamma_+$ over $I_1$ and $I_2$ cannot be in the same connected component $\Sigma$. 

To be simple, let $\gamma$ be the function graph of $f_1$.  If there are any points $A=(\phi_A,v_A)$ as described in Step 2 on $\gamma$, we  use the lines $\phi=\phi_A$ to divide the whole planes into finite regions $R_i$ (of course, if there are no such points, $R_1$ is the whole plane). Consider the arc $\gamma_i=\gamma\cap R_i$. Assume its end points are $B=(\phi_B,v_B)$ and $C=(\phi_C,v_C)$.   $\gamma_i$ must have one end that ends on $v=0$ or $\phi^2-v^2=c$ with $v\neq 0$ (if both ends tend to infinity, we then have the case in Step 3).  We use the following four cases to study the structure of $\gamma_i.$
\begin{enumerate}
\item  The first case is that one end, $B$, of $\gamma_i$ tends to infinity.
\begin{itemize} 
\item If the other end $C$ satisfies $v_C=0$, then  $C$ is not on $\phi^2-v^2=c$. This means $\Gamma_+$ and $\Gamma_-$ does not meat there. By the symmetry of $\Sigma$, we know $\Sigma\cap R_i=\gamma_i\cup \tilde{\gamma}_i$. In this case, $\Sigma \cap R_i$ has both ends tend to infinity.

\item If the other end $C$ falls onto $\phi^2-v^2=c$, then we must have $v\neq 0$. Then, $\Gamma_+$ and $\Gamma_-$ meet here and $\Gamma_-$ has an arc $\alpha_i$ merging from $C$ which is under $\gamma_i$ in $R_i\cap P_+$. 

If the other end of $\alpha_i$ ends on a point $C'$ with $v_{C'}=0$, then $\gamma_i\cup\alpha_i\cup\tilde{\gamma}_i\cup\tilde{\alpha}_i$ is a simple curve with both ends tending to infinity. 
Further more, if $C'$ happens to be a critical point in Step 2, there are some arc of $\Gamma_-$ beyond this critical points and the discussion for these arcs will be similar as  Step 3. 

If the other end of $\alpha_i$ tends to infinity without intersecting $v=0$, then $\gamma_i\cup\alpha_i=\Sigma\cap R_i\cap P_+$ is a simple curve with both ends tending to infinity. $\Sigma\cap R_i\cap P_-$ if nonempty (this happens only if $C'$ is the critical point on the hyperbola) will be the reflection of this simple curve, which is again a simple curve.
\end{itemize}

\item The second possibility is that both ends are on $v=0$. In this case, by the discussion in Step 2,  $\Gamma_-\cap \Sigma\cap R_i=\emptyset$. As a result, $\Sigma\cap R_i=\gamma_i\cup\tilde{\gamma}_i$ is a loop.

\item Another possibility is one (say $B$) on $v=0$ and one (say $C$) on $\phi^2-v^2=c$ with $v\neq 0$. Then, there is an arc $\alpha_i$ in $\Gamma_-\cap \Sigma\cap R_i$ merging from $C$ in $R_i$ and under $\gamma_i$. The other end of $\alpha_i$ must terminates on a point $C'$ with $v_{C'}=0$. As a result, $\Sigma\cap R_i$ is a loop.  Similarly, if $C'$ happens to be a critical point in Step 2, there are some arc of $\Gamma_-$ beyond this critical points and the discussion for these arcs will be similar as that is Step 3. 

\item Lastly, both $B$ and $C$ are on $\phi^2-v^2=c$ with $v\neq 0$. We call the left endpoint $B$. In this case, $\Gamma_-\cap \Sigma\cap R_i$ exits both on the right of $B$ and on the left of $C$, called $\alpha_i^-$ and $\alpha_i^+$.
\begin{itemize}
\item If $\alpha_i^-=\alpha_i^+$, we call this $\alpha_i$. Then, $\gamma_i\cup \alpha_i=\Sigma\cap R_i\cap P_+$ is a simple curve, which is  a loop. If $\Sigma\cap R_i\cap P_-$ is nonempty (this happens only if $\alpha_i$ touches $v=0$ at a critical point), then, it is the reflection of the loop in $P_+$.

\item If $\alpha_i^-\neq \alpha_i^+$, then, they must both touch $v=0$. As a result, $\gamma_i\cup\alpha_i^-\cup\alpha_i^+=\beta$ and $\beta\cup\tilde{\beta}$ is a simple curve and forms a loop. Further, if these points on $v=0$ happen to be critical points in Step 2, there are some arc of $\Gamma_-$ beyond this critical points and the discussion for these arcs will be similar as that is Step 3. 
\end{itemize}

\end{enumerate}

This then finishes the discussion of the structures of $\Sigma$.

{\bf Step 5}

The last part of the proposition is obvious by Lemma \ref{lmm:arcs} and the expressions of $\Gamma_{\pm}$.
\end{proof}

\section{Peakon weak solutions for the CH equation \eqref{eq:generalCH}}\label{app:CHpeakon}
The definition of weak solution to the CH equation \eqref{eq:generalCH} is given similarly as the mCH equation. $u\in C([0,T);H^1(\mathbb{T}_\ell))$ is a weak solution to \eqref{eq:generalCH} subject to $u(x,0)=u_0(x)$ if the following holds for any $\varphi(x,t)\in C^\infty_c([0,T)\times\mathbb{T}_\ell)$
\begin{multline}
\int_0^T\int_{\mathbb{T}_\ell} u\phi_t dxdt-\int_0^T\int_{\mathbb{T}_\ell} u\phi_{xxt}dxdt+\frac{3}{2}\int_0^T\int_{\mathbb{T}_\ell} u^2\phi_xdxdt+\frac{1}{2}\int_0^T\int_{\mathbb{T}_\ell} u_x^2\phi_xdxdt\\
-\int_0^T\int_{\mathbb{T}_\ell}\frac{1}{2}u^2\phi_{xxx}dxdt=-\int_{\mathbb{T}_\ell} u \phi(x, 0)dx+\int_{\mathbb{T}_\ell} u_0\phi_{xx}(x,0)dx.
\end{multline}
As shown in \cite{Camassa}, the $N$-peakon solutions are of the form
\[
u(x,t)=\sum_{i=1}^Np_i(t)G(x-x_i(t)).
\]
Using the weak formulation, we find that the jump conditions are given by
$x_i(t)$ and $p_i(t)$ satisfy the following ODE system
\begin{gather}
\left\{
\begin{split}
&\frac{dx_i(t)}{dt}=u(x_i(t), t),\\
&\frac{dp_i(t)}{dt}=-p_i(t)\frac{1}{2}[u_x(x_i(t)+)+u_x(x_i(t)-)]. \label{eq:chspeed}
\end{split}
\right.
\end{gather}
Here $p_i(t)$ depends on time and this is different with the $N$-peakon weak solutions to the mCH equation given in Proposition \ref{pro:NpeakonmCH}. 
Notice that for any $x\in\mathbb{R}$, we have
$$-\frac{1}{2}[G_x(x+)+G_x(x-)]=\mathrm{sgn}(x)G(x).$$ 
Hence, \eqref{eq:chspeed} can be changed into
\begin{gather}
\left\{
\begin{split}
&\frac{dx_i(t)}{dt}=\sum_{j=1}^Np_j(t)G(x_i(t)-x_j(t)),\\
&\frac{dp_i(t)}{dt}=\sum_{j=1}^Np_i(t)p_j(t)\mathrm{sgn}(x_i(t)-x_j(t))G(x_i(t)-x_j(t)). \label{eq:chspeed2}
\end{split}
\right.
\end{gather}
This is a Hamiltonian system and the Hamiltonian function is given by
$$\mathcal{H}_0(t)=\frac{1}{2}\sum_{i,j=1}^Np_i(t)p_j(t)G(x_i(t)-x_j(t)).$$
When $\ell=+\infty$, $p_i(t)$ and $x_i(t)$ satisfy the following Hamiltonian system of ODEs:
\begin{align}\label{CH Npeakon}
\begin{cases}
\displaystyle{\frac{d}{dt}x_i(t)=\frac{1}{2}\sum_{j=1}^Np_j(t)e^{-|x_i(t)-x_j(t)|},~~i=1,\ldots,N,}\\
\displaystyle{\frac{d}{dt}p_i(t)=\frac{1}{2}\sum_{j=1}^Np_i(t)p_j(t)\mathrm{sgn}\big(x_i(t)-x_j(t)\big)e^{-|x_i(t)-x_j(t)|},~~i=1,\ldots,N.}
\end{cases}
\end{align}
One can refer to \cite{CamassaLee,holden2006,AlinaLiu} for a rigorous analysis of the Hamiltonian system \eqref{CH Npeakon}.

 If $N=1$, from \eqref{eq:chspeed2} we have
\begin{gather}
p_1(t)\equiv p~\textrm{ and }~\frac{d}{dt}x_1(t)=\frac{p}{2}\coth(\frac{\ell}{2})=c.
\end{gather}
The one peakon weak solution is given by
\begin{align}\label{periodic solitary}
u(x,t)=pG(x-ct).
\end{align}

\bibliographystyle{plain}
\bibliography{bibofperiodicmCH}

\end{document}